\definecolor{black}{cmyk}{1,1,1,1}
\definecolor{blue}{cmyk}{1,1,0,0}
\theoremstyle{break}
\newtheorem{theorem}{Theorem}
\theoremstyle{plain}
\theoremstyle{break}
\newtheorem{lemma}{Lemma}
\newtheorem{proposition}{Proposition}
\newtheorem{corollary}{Corollary}
\theoremstyle{plain}
\newtheorem{remark}{Remark}
\theoremstyle{break}
\theoremstyle{break}
\newtheorem{example}{Example}
\theoremstyle{plain}
\newtheorem{definition}{Definition}
\theoremstyle{break}
\newtheorem{notation}{Notation}
\theoremstyle{nonumberplain}
\newtheorem{proof}{Proof}
\def\1{\mathfrak 1}
\def\a{\mathfrak A}
\def\A{\mathcal U}
\def\cA{\mathcal A}
\def\B{\mathrm B}
\def\BL{\mathcal B}
\def\C{C^{*}}
\def\cal{\mathcal{C}}
\def\CAR{\mathrm{CAR}}
\def\CCR{\mathrm{CCR}}
\def\CP{\mathbb{C}}
\def\d{\mathrm d}
\def\e{\mathrm e}
\def\eqt#1{\texorpdfstring{#1}{}}
\def\E{\mathcal E}
\def\fp{\mathfrak{p}(\H,\a)}
\def\G{\mathcal G}
\def\H{\mathcal H}
\def\h{\mathfrak h}
\def\ii{\mathrm i}
\def\I{\mathcal I}
\def\inner#1{\left< #1 \right>}
\def\j{\mathfrak j}
\def\J{\mathfrak J}
\def\mJ{\mathcal J}
\def\L{\mathcal L}
\def\M{\mathcal M}
\def\mS{\mathcal S}
\def\N{\mathbb N}
\def\Nc{\mathcal N}
\def\p{\partial}
\def\P{\mathcal{P}}
\def\Q{\mathcal Q}
\def\R{\mathbb R}
\def\sCAR{\mathrm{sCAR}(\H,\a)}
\def\spec{\mathrm{spec}}
\def\states{\mathfrak E}
\def\tr{\mathrm{tr}}
\def\Tr{\mathrm{Tr}}
\def\V{\mathcal V}
\def\W{\mathcal W}
\def\X{\mathcal X}
\chardef\@x10\chardef\@xv60
\def\tcitime{
\def\@time{%
  \@minute\time\@hour\@minute\divide\@hour\@xv
  \ifnum\@hour<\@x 0\fi\the\@hour:%
  \multiply\@hour\@xv\advance\@minute-\@hour
  \ifnum\@minute<\@x 0\fi\the\@minute
  }}%
\def\QCTOpt[#1]#2{%
  \def\QCTOptB{#1}
  \def\QCTOptA{#2}
}
\def\QCTNOpt#1{%
  \def\QCTOptA{#1}
  \let\QCTOptB\empty
}
\def\Qct{%
  \@ifnextchar[{%
    \QCTOpt}{\QCTNOpt}
}
\def\QCBOpt[#1]#2{%
  \def\QCBOptB{#1}
  \def\QCBOptA{#2}
}
\def\QCBNOpt#1{%
  \def\QCBOptA{#1}
  \let\QCBOptB\empty
}
\def\Qcb{%
  \@ifnextchar[{%
    \QCBOpt}{\QCBNOpt}
}
\def\PrepCapArgs{%
  \ifx\QCBOptA\empty
    \ifx\QCTOptA\empty
      {}%
    \else
      \ifx\QCTOptB\empty
        {\QCTOptA}%
      \else
        [\QCTOptB]{\QCTOptA}%
      \fi
    \fi
  \else
    \ifx\QCBOptA\empty
      {}%
    \else
      \ifx\QCBOptB\empty
        {\QCBOptA}%
      \else
        [\QCBOptB]{\QCBOptA}%
      \fi
    \fi
  \fi
}
\def\GRAPHICSPS#1{%
 \ifcase\GRAPHICSTYPE
   \special{ps: #1}%
 \or
   \special{language "PS", include "#1"}%
 \fi
}%
\def\graffile#1#2#3#4{%
    \leavevmode
    \raise -#4 \BOXTHEFRAME{%
        \hbox to #2{\raise #3\hbox to #2{\null #1\hfil}}}%
}%
\def\draftbox#1#2#3#4{%
 \leavevmode\raise -#4 \hbox{%
  \frame{\rlap{\protect\tiny #1}\hbox to #2%
   {\vrule height#3 width\z@ depth\z@\hfil}%
  }%
 }%
}%
\newif\ifwasdraft
\def\GRAPHIC#1#2#3#4#5{%
 \ifnum\draft=\@ne\draftbox{#2}{#3}{#4}{#5}%
  \else\graffile{#1}{#3}{#4}{#5}%
  \fi
 }%
\def\addtoLaTeXparams#1{%
    \edef\LaTeXparams{\LaTeXparams #1}}%
\newif\ifBoxFrame \BoxFramefalse
\newif\ifOverFrame \OverFramefalse
\newif\ifUnderFrame \UnderFramefalse
\def\BOXTHEFRAME#1{%
   \hbox{%
      \ifBoxFrame
         \frame{#1}%
      \else
         {#1}%
      \fi
   }%
}
\def\doFRAMEparams#1{\BoxFramefalse\OverFramefalse\UnderFramefalse\readFRAMEparams#1\end}%
\def\readFRAMEparams#1{%
 \ifx#1\end%
  \let\next=\relax
  \else
  \ifx#1i\dispkind=\z@\fi
  \ifx#1d\dispkind=\@ne\fi
  \ifx#1f\dispkind=\tw@\fi
  \ifx#1t\addtoLaTeXparams{t}\fi
  \ifx#1b\addtoLaTeXparams{b}\fi
  \ifx#1p\addtoLaTeXparams{p}\fi
  \ifx#1h\addtoLaTeXparams{h}\fi
  \ifx#1X\BoxFrametrue\fi
  \ifx#1O\OverFrametrue\fi
  \ifx#1U\UnderFrametrue\fi
  \ifx#1w
    \ifnum\draft=1\wasdrafttrue\else\wasdraftfalse\fi
    \draft=\@ne
  \fi
  \let\next=\readFRAMEparams
  \fi
 \next
 }%
\def\IFRAME#1#2#3#4#5#6{%
      \bgroup
      \let\QCTOptA\empty
      \let\QCTOptB\empty
      \let\QCBOptA\empty
      \let\QCBOptB\empty
      #6%
      \parindent=0pt%
      \leftskip=0pt
      \rightskip=0pt
      \setbox0 = \hbox{\QCBOptA}%
      \@tempdima = #1\relax
      \ifOverFrame
          \typeout{This is not implemented yet}%
          \show\HELP
      \else
         \ifdim\wd0>\@tempdima
            \advance\@tempdima by \@tempdima
            \ifdim\wd0 >\@tempdima
               \textwidth=\@tempdima
               \setbox1 =\vbox{%
                  \noindent\hbox to \@tempdima{\hfill\GRAPHIC{#5}{#4}{#1}{#2}{#3}\hfill}\\%
                  \noindent\hbox to \@tempdima{\parbox[b]{\@tempdima}{\QCBOptA}}%
               }%
               \wd1=\@tempdima
            \else
               \textwidth=\wd0
               \setbox1 =\vbox{%
                 \noindent\hbox to \wd0{\hfill\GRAPHIC{#5}{#4}{#1}{#2}{#3}\hfill}\\%
                 \noindent\hbox{\QCBOptA}%
               }%
               \wd1=\wd0
            \fi
         \else
            \ifdim\wd0>0pt
              \hsize=\@tempdima
              \setbox1 =\vbox{%
                \unskip\GRAPHIC{#5}{#4}{#1}{#2}{0pt}%
                \break
                \unskip\hbox to \@tempdima{\hfill \QCBOptA\hfill}%
              }%
              \wd1=\@tempdima
           \else
              \hsize=\@tempdima
              \setbox1 =\vbox{%
                \unskip\GRAPHIC{#5}{#4}{#1}{#2}{0pt}%
              }%
              \wd1=\@tempdima
           \fi
         \fi
         \@tempdimb=\ht1
         \advance\@tempdimb by \dp1
         \advance\@tempdimb by -#2%
         \advance\@tempdimb by #3%
         \leavevmode
         \raise -\@tempdimb \hbox{\box1}%
      \fi
      \egroup%
}%
\def\DFRAME#1#2#3#4#5{%
 \begin{center}
     \let\QCTOptA\empty
     \let\QCTOptB\empty
     \let\QCBOptA\empty
     \let\QCBOptB\empty
     \ifOverFrame 
        #5\QCTOptA\par
     \fi
     \GRAPHIC{#4}{#3}{#1}{#2}{\z@}
     \ifUnderFrame 
        \nobreak\par #5\QCBOptA
     \fi
 \end{center}%
 }%
\def\FFRAME#1#2#3#4#5#6#7{%
 \begin{figure}[#1]%
  \let\QCTOptA\empty
  \let\QCTOptB\empty
  \let\QCBOptA\empty
  \let\QCBOptB\empty
  \ifOverFrame
    #4
    \ifx\QCTOptA\empty
    \else
      \ifx\QCTOptB\empty
        \caption{\QCTOptA}%
      \else
        \caption[\QCTOptB]{\QCTOptA}%
      \fi
    \fi
    \ifUnderFrame\else
      \label{#5}%
    \fi
  \else
    \UnderFrametrue%
  \fi
  \begin{center}\GRAPHIC{#7}{#6}{#2}{#3}{\z@}\end{center}%
  \ifUnderFrame
    #4
    \ifx\QCBOptA\empty
      \caption{}%
    \else
      \ifx\QCBOptB\empty
        \caption{\QCBOptA}%
      \else
        \caption[\QCBOptB]{\QCBOptA}%
      \fi
    \fi
    \label{#5}%
  \fi
  \end{figure}%
 }%
\def\makeactives{
  \catcode`\"=\active
  \catcode`\;=\active
  \catcode`\:=\active
  \catcode`\'=\active
  \catcode`\~=\active
}
   \gdef\activesoff{%
      \def"{\string"}
      \def;{\string;}
      \def:{\string:}
      \def'{\string'}
      \def~{\string~}
    }
\def\FRAME#1#2#3#4#5#6#7#8{%
 \bgroup
 \@ifundefined{bbl@deactivate}{}{\activesoff}
 \ifnum\draft=\@ne
   \wasdrafttrue
 \else
   \wasdraftfalse%
 \fi
 \def\LaTeXparams{}%
 \dispkind=\z@
 \def\LaTeXparams{}%
 \doFRAMEparams{#1}%
 \ifnum\dispkind=\z@\IFRAME{#2}{#3}{#4}{#7}{#8}{#5}\else
  \ifnum\dispkind=\@ne\DFRAME{#2}{#3}{#7}{#8}{#5}\else
   \ifnum\dispkind=\tw@
    \edef\@tempa{\noexpand\FFRAME{\LaTeXparams}}%
    \@tempa{#2}{#3}{#5}{#6}{#7}{#8}%
    \fi
   \fi
  \fi
  \ifwasdraft\draft=1\else\draft=0\fi{}%
  \egroup
 }%
\def\TEXUX#1{"texux"}
\long\def\QQQ#1#2{%
     \long\expandafter\def\csname#1\endcsname{#2}}%
\long\def\QQA#1#2{}%
\def\QTR#1#2{{\csname#1\endcsname #2}}
\def\EXPAND#1[#2]#3{}%
\def\NOEXPAND#1[#2]#3{}%
\def\LaTeXparent#1{}%
\def\ChildStyles#1{}%
\def\ChildDefaults#1{}%
\def\QTagDef#1#2#3{}%
\def\QQfnmark#1{\footnotemark}
\def\makeatletter\input gnuindex.sty\makeatother\makeindex{\makeatletter\input gnuindex.sty\makeatother\makeindex}%
\def\initial#1{\bigbreak{\raggedright\large\bf #1}\kern 2\p@\penalty3000}}%
 \def\abstract{%
  \if@twocolumn
   \section*{Abstract (Not appropriate in this style!)}%
   \else \small 
   \begin{center}{\bf Abstract\vspace{-.5em}\vspace{\z@}}\end{center}%
   \quotation 
   \fi
  }%
   \def\registered{\relax\ifmmode{}\r@gistered
                    \else$\m@th\r@gistered$\fi}%
 \def\r@gistered{^{\ooalign
  {\hfil\raise.07ex\hbox{$\scriptstyle\rm\text{R}$}\hfil\crcr
  \mathhexbox20D}}}}{}%
\newdimen\theight
\def\Column{%
 \vadjust{\setbox\z@=\hbox{\scriptsize\quad\quad tcol}%
  \theight=\ht\z@\advance\theight by \dp\z@\advance\theight by \lineskip
  \kern -\theight \vbox to \theight{%
   \rightline{\rlap{\box\z@}}%
   \vss
   }%
  }%
 }%
\def\qed{%
 \ifhmode\unskip\nobreak\fi\ifmmode\ifinner\else\hskip5\p@\fi\fi
 \hbox{\hskip5\p@\vrule width4\p@ height6\p@ depth1.5\p@\hskip\p@}%
 }%
\def\miss{\hbox{\vrule height2\p@ width 2\p@ depth\z@}}%
\def\tcol#1{{\baselineskip=6\p@ \vcenter{#1}} \Column}  %
\def\newfmtname{LaTeX2e}
\def\chkcompat{%
   \if@compatibility
   \else
     \usepackage{latexsym}
   \fi
}
  \DeclareOldFontCommand{\rm}{\normalfont\rmfamily}{\mathrm}
  \DeclareOldFontCommand{\sf}{\normalfont\sffamily}{\mathsf}
  \DeclareOldFontCommand{\tt}{\normalfont\ttfamily}{\mathtt}
  \DeclareOldFontCommand{\bf}{\normalfont\bfseries}{\mathbf}
  \DeclareOldFontCommand{\it}{\normalfont\itshape}{\mathit}
  \DeclareOldFontCommand{\sl}{\normalfont\slshape}{\@nomath\sl}
  \DeclareOldFontCommand{\sc}{\normalfont\scshape}{\@nomath\sc}
\def\alpha{{\Greekmath 010B}}%
\def\beta{{\Greekmath 010C}}%
\def\gamma{{\Greekmath 010D}}%
\def\delta{{\Greekmath 010E}}%
\def\epsilon{{\Greekmath 010F}}%
\def\zeta{{\Greekmath 0110}}%
\def\eta{{\Greekmath 0111}}%
\def\theta{{\Greekmath 0112}}%
\def\iota{{\Greekmath 0113}}%
\def\kappa{{\Greekmath 0114}}%
\def\lambda{{\Greekmath 0115}}%
\def\mu{{\Greekmath 0116}}%
\def\nu{{\Greekmath 0117}}%
\def\xi{{\Greekmath 0118}}%
\def\pi{{\Greekmath 0119}}%
\def\rho{{\Greekmath 011A}}%
\def\sigma{{\Greekmath 011B}}%
\def\tau{{\Greekmath 011C}}%
\def\upsilon{{\Greekmath 011D}}%
\def\phi{{\Greekmath 011E}}%
\def\chi{{\Greekmath 011F}}%
\def\psi{{\Greekmath 0120}}%
\def\omega{{\Greekmath 0121}}%
\def\varepsilon{{\Greekmath 0122}}%
\def\vartheta{{\Greekmath 0123}}%
\def\varpi{{\Greekmath 0124}}%
\def\varrho{{\Greekmath 0125}}%
\def\varsigma{{\Greekmath 0126}}%
\def\varphi{{\Greekmath 0127}}%
\def\nabla{{\Greekmath 0272}}
\def\FindBoldGroup{%
   {\setbox0=\hbox{$\mathbf{x\global\edef\theboldgroup{\the\mathgroup}}$}}%
}
\def\Greekmath#1#2#3#4{%
    \if@compatibility
        \ifnum\mathgroup=\symbold
           \mathchoice{\mbox{\boldmath$\displaystyle\mathchar"#1#2#3#4$}}%
                      {\mbox{\boldmath$\textstyle\mathchar"#1#2#3#4$}}%
                      {\mbox{\boldmath$\scriptstyle\mathchar"#1#2#3#4$}}%
                      {\mbox{\boldmath$\scriptscriptstyle\mathchar"#1#2#3#4$}}%
        \else
           \mathchar"#1#2#3#4%
        \fi 
    \else 
        \FindBoldGroup
        \ifnum\mathgroup=\theboldgroup 
           \mathchoice{\mbox{\boldmath$\displaystyle\mathchar"#1#2#3#4$}}%
                      {\mbox{\boldmath$\textstyle\mathchar"#1#2#3#4$}}%
                      {\mbox{\boldmath$\scriptstyle\mathchar"#1#2#3#4$}}%
                      {\mbox{\boldmath$\scriptscriptstyle\mathchar"#1#2#3#4$}}%
        \else
           \mathchar"#1#2#3#4%
        \fi     	    
	  \fi}
\newif\ifGreekBold  \GreekBoldfalse
\let\SAVEPBF=\pbf
\def\pbf{\GreekBoldtrue\SAVEPBF}%
  \newcounter{equationnumber}  
  \def\mathletters{%
     \addtocounter{equation}{1}
     \edef\@currentlabel{\theequation}%
     \setcounter{equationnumber}{\c@equation}
     \setcounter{equation}{0}%
     \edef\theequation{\@currentlabel\noexpand\alph{equation}}%
  }
    \def\BibTeX{{\rm B\kern-.05em{\sc i\kern-.025em b}\kern-.08em
                 T\kern-.1667em\lower.7ex\hbox{E}\kern-.125emX}}}{}%
\def\AmS{{\protect\usefont{OMS}{cmsy}{m}{n}%
                A\kern-.1667em\lower.5ex\hbox{M}\kern-.125emS}}}{}%
\let\DOTSI\relax
\def\RIfM@{\relax\ifmmode}%
\def\FN@{\futurelet\next}%
\def\iint{\DOTSI\intno@\tw@\FN@\ints@}%
\def\iiint{\DOTSI\intno@\thr@@\FN@\ints@}%
\def\iiiint{\DOTSI\intno@4 \FN@\ints@}%
\def\idotsint{\DOTSI\intno@\z@\FN@\ints@}%
\def\ints@{\findlimits@\ints@@}%
\newif\iflimtoken@
\newif\iflimits@
\def\findlimits@{\limtoken@true\ifx\next\limits\limits@true
 \else\ifx\next\nolimits\limits@false\else
 \limtoken@false\ifx\ilimits@\nolimits\limits@false\else
 \ifinner\limits@false\else\limits@true\fi\fi\fi\fi}%
\def\multint@{\int\ifnum\intno@=\z@\intdots@                          
 \else\intkern@\fi                                                    
 \ifnum\intno@>\tw@\int\intkern@\fi                                   
 \ifnum\intno@>\thr@@\int\intkern@\fi                                 
 \int}
\def\multintlimits@{\intop\ifnum\intno@=\z@\intdots@\else\intkern@\fi
 \ifnum\intno@>\tw@\intop\intkern@\fi
 \ifnum\intno@>\thr@@\intop\intkern@\fi\intop}%
\def\intic@{%
    \mathchoice{\hskip.5em}{\hskip.4em}{\hskip.4em}{\hskip.4em}}%
\def\negintic@{\mathchoice
 {\hskip-.5em}{\hskip-.4em}{\hskip-.4em}{\hskip-.4em}}%
\def\ints@@{\iflimtoken@                                              
 \def\ints@@@{\iflimits@\negintic@
   \mathop{\intic@\multintlimits@}\limits                             
  \else\multint@\nolimits\fi                                          
  \eat@}
 \else                                                                
 \def\ints@@@{\iflimits@\negintic@
  \mathop{\intic@\multintlimits@}\limits\else
  \multint@\nolimits\fi}\fi\ints@@@}%
\def\intkern@{\mathchoice{\!\!\!}{\!\!}{\!\!}{\!\!}}%
\def\plaincdots@{\mathinner{\cdotp\cdotp\cdotp}}%
\def\intdots@{\mathchoice{\plaincdots@}%
 {{\cdotp}\mkern1.5mu{\cdotp}\mkern1.5mu{\cdotp}}%
 {{\cdotp}\mkern1mu{\cdotp}\mkern1mu{\cdotp}}%
 {{\cdotp}\mkern1mu{\cdotp}\mkern1mu{\cdotp}}}%
\def\RIfM@{\relax\protect\ifmmode}
\def\text{\RIfM@\expandafter\text@\else\expandafter\mbox\fi}
\let\nfss@text\text
\def\text@#1{\mathchoice
   {\textdef@\displaystyle\f@size{#1}}%
   {\textdef@\textstyle\tf@size{\firstchoice@false #1}}%
   {\textdef@\textstyle\sf@size{\firstchoice@false #1}}%
   {\textdef@\textstyle \ssf@size{\firstchoice@false #1}}%
   \glb@settings}
\def\textdef@#1#2#3{\hbox{{%
                    \everymath{#1}%
                    \let\f@size#2\selectfont
                    #3}}}
\newif\iffirstchoice@
\def\Let@{\relax\iffalse{\fi\let\\=\cr\iffalse}\fi}%
\def\vspace@{\def\vspace##1{\crcr\noalign{\vskip##1\relax}}}%
\def\multilimits@{\bgroup\vspace@\Let@
 \baselineskip\fontdimen10 \scriptfont\tw@
 \advance\baselineskip\fontdimen12 \scriptfont\tw@
 \lineskip\thr@@\fontdimen8 \scriptfont\thr@@
 \lineskiplimit\lineskip
 \vbox\bgroup\ialign\bgroup\hfil$\m@th\scriptstyle{##}$\hfil\crcr}%
\def\Sb{_\multilimits@}%
\def\endSb{\crcr\egroup\egroup\egroup}%
\def\Sp{^\multilimits@}%
\newdimen\ex@
\def\rightarrowfill@#1{$#1\m@th\mathord-\mkern-6mu\cleaders
 \hbox{$#1\mkern-2mu\mathord-\mkern-2mu$}\hfill
 \mkern-6mu\mathord\rightarrow$}%
\def\leftarrowfill@#1{$#1\m@th\mathord\leftarrow\mkern-6mu\cleaders
 \hbox{$#1\mkern-2mu\mathord-\mkern-2mu$}\hfill\mkern-6mu\mathord-$}%
\def\leftrightarrowfill@#1{$#1\m@th\mathord\leftarrow
\mkern-6mu\cleaders
 \hbox{$#1\mkern-2mu\mathord-\mkern-2mu$}\hfill
 \mkern-6mu\mathord\rightarrow$}%
\def\overrightarrow{\mathpalette\overrightarrow@}%
\def\overrightarrow@#1#2{\vbox{\ialign{##\crcr\rightarrowfill@#1\crcr
 \noalign{\kern-\ex@\nointerlineskip}$\m@th\hfil#1#2\hfil$\crcr}}}%
\def\overleftarrow{\mathpalette\overleftarrow@}%
\def\overleftarrow@#1#2{\vbox{\ialign{##\crcr\leftarrowfill@#1\crcr
 \noalign{\kern-\ex@\nointerlineskip}$\m@th\hfil#1#2\hfil$\crcr}}}%
\def\overleftrightarrow{\mathpalette\overleftrightarrow@}%
\def\overleftrightarrow@#1#2{\vbox{\ialign{##\crcr
   \leftrightarrowfill@#1\crcr
 \noalign{\kern-\ex@\nointerlineskip}$\m@th\hfil#1#2\hfil$\crcr}}}%
\def\underrightarrow{\mathpalette\underrightarrow@}%
\def\underrightarrow@#1#2{\vtop{\ialign{##\crcr$\m@th\hfil#1#2\hfil
  $\crcr\noalign{\nointerlineskip}\rightarrowfill@#1\crcr}}}%
\def\underleftarrow{\mathpalette\underleftarrow@}%
\def\underleftarrow@#1#2{\vtop{\ialign{##\crcr$\m@th\hfil#1#2\hfil
  $\crcr\noalign{\nointerlineskip}\leftarrowfill@#1\crcr}}}%
\def\underleftrightarrow{\mathpalette\underleftrightarrow@}%
\def\underleftrightarrow@#1#2{\vtop{\ialign{##\crcr$\m@th
  \hfil#1#2\hfil$\crcr
 \noalign{\nointerlineskip}\leftrightarrowfill@#1\crcr}}}%
\def\qopnamewl@#1{\mathop{\operator@font#1}\nlimits@}
\let\nlimits@\displaylimits
\def\setboxz@h{\setbox\z@\hbox}
\def\varlim@#1#2{\mathop{\vtop{\ialign{##\crcr
 \hfil$#1\m@th\operator@font lim$\hfil\crcr
 \noalign{\nointerlineskip}#2#1\crcr
 \noalign{\nointerlineskip\kern-\ex@}\crcr}}}}
 \def\rightarrowfill@#1{\m@th\setboxz@h{$#1-$}\ht\z@\z@
  $#1\copy\z@\mkern-6mu\cleaders
  \hbox{$#1\mkern-2mu\box\z@\mkern-2mu$}\hfill
  \mkern-6mu\mathord\rightarrow$}
\def\leftarrowfill@#1{\m@th\setboxz@h{$#1-$}\ht\z@\z@
  $#1\mathord\leftarrow\mkern-6mu\cleaders
  \hbox{$#1\mkern-2mu\copy\z@\mkern-2mu$}\hfill
  \mkern-6mu\box\z@$}
\def\projlim{\qopnamewl@{proj\,lim}}
\def\injlim{\qopnamewl@{inj\,lim}}
\def\varinjlim{\mathpalette\varlim@\rightarrowfill@}
\def\varprojlim{\mathpalette\varlim@\leftarrowfill@}
\def\varliminf{\mathpalette\varliminf@{}}
\def\varliminf@#1{\mathop{\underline{\vrule\@depth.2\ex@\@width\z@
   \hbox{$#1\m@th\operator@font lim$}}}}
\def\varlimsup{\mathpalette\varlimsup@{}}
\def\varlimsup@#1{\mathop{\overline
  {\hbox{$#1\m@th\operator@font lim$}}}}
\def\align{\@verbatim \frenchspacing\@vobeyspaces \@alignverbatim
You are using the "align" environment in a style in which it is not defined.}
\let\csname endalign*\endcsname =\endtrivlist
\def\alignat{\@verbatim \frenchspacing\@vobeyspaces \@alignatverbatim
You are using the "alignat" environment in a style in which it is not defined.}
\let\csname endalignat*\endcsname =\endtrivlist
\def\xalignat{\@verbatim \frenchspacing\@vobeyspaces \@xalignatverbatim
You are using the "xalignat" environment in a style in which it is not defined.}
\let\csname endxalignat*\endcsname =\endtrivlist
\def\gather{\@verbatim \frenchspacing\@vobeyspaces \@gatherverbatim
You are using the "gather" environment in a style in which it is not defined.}
\let\csname endgather*\endcsname =\endtrivlist
\def\multiline{\@verbatim \frenchspacing\@vobeyspaces \@multilineverbatim
You are using the "multiline" environment in a style in which it is not defined.}
\let\csname endmultiline*\endcsname =\endtrivlist
\def\arrax{\@verbatim \frenchspacing\@vobeyspaces \@arraxverbatim
You are using a type of "array" construct that is only allowed in AmS-LaTeX.}
\def\tabulax{\@verbatim \frenchspacing\@vobeyspaces \@tabulaxverbatim
You are using a type of "tabular" construct that is only allowed in AmS-LaTeX.}
\let\csname endarrax*\endcsname =\endtrivlist
\let\csname endtabulax*\endcsname =\endtrivlist
\def\@@eqncr{\let\@tempa\relax
    \ifcase\@eqcnt \def\@tempa{& & &}\or \def\@tempa{& &}%
      \else \def\@tempa{&}\fi
     \@tempa
     \if@eqnsw
        \iftag@
           \@taggnum
        \else
           \@eqnnum\stepcounter{equation}%
        \fi
     \fi
     \global\tag@false
     \global\@eqnswtrue
     \global\@eqcnt\z@\cr}
 \def\endequation{%
     \ifmmode\ifinner 
      \iftag@
        \addtocounter{equation}{-1} 
        $\hfil
           \displaywidth\linewidth\@taggnum\egroup \endtrivlist
        \global\tag@false
        \global\@ignoretrue   
      \else
        $\hfil
           \displaywidth\linewidth\@eqnnum\egroup \endtrivlist
        \global\tag@false
        \global\@ignoretrue 
      \fi
     \else   
      \iftag@
        \addtocounter{equation}{-1} 
        \eqno \hbox{\@taggnum}
        \global\tag@false%
        $$\global\@ignoretrue
      \else
        \eqno \hbox{\@eqnnum}
        $$\global\@ignoretrue
      \fi
     \fi\fi
 } 
 \newif\iftag@ \tag@false
 \def\tag{\@ifnextchar*{\@tagstar}{\@tag}}
 \def\@tag#1{%
     \global\tag@true
     \global\def\@taggnum{(#1)}}
 \def\@tagstar*#1{%
     \global\tag@true
     \global\def\@taggnum{#1}%
}
\begin{document}

\title{Entropy Power Inequality in Fermionic Quantum Computation}
\author{N. J. B. Aza \and D. A. Barbosa T.}
\date{\today}
\maketitle

\begin{abstract}
We study quantum computation relations on unital finite--dimensional $\CAR$ $\C$--algebras. We prove an entropy power inequality (EPI) in a fermionic setting, which presumably will permit understanding the capacities in fermionic linear optics. Similar relations to the bosonic case are shown, and alternative proofs of known facts are given. Clifford algebras and the Grassmann representation can thus be used to obtain mathematical results regarding coherent fermion states.

\noindent \textbf{Keywords:} Quantum Information Theory, Fermionic Gaussian States, Strongly Continuous Semigroups, Entropy Power.\bigskip

\noindent \textbf{AMS Subject Classification:} 37L05, 81P45, 46L57
\end{abstract}
\tableofcontents

\begin{notation}\label{remark constant}
A norm on the generic vector space $\X$ is denoted by $\Vert \cdot \Vert _{\X}$ and the identity map of $\X$ by $\mathbf{1}_{\X}$. The space of all bounded linear operators on $(\X,\Vert \cdot \Vert _{\X}\mathcal{)}$ is denoted by $\BL(\X)$. The unit element of any algebra $\X$ is always denoted by $\1$, provided it exists of course. The set of linear functionals and states will be denoted by $\X^{*}$ and $\states_{\X}$ respectively, while its tracial state will be written as $\tr_{\X}$. The scalar product of any Hilbert space $\X$ is denoted by $\langle \cdot,\cdot\rangle_{\X}$ and $\Tr_{\X}$ represents the usual trace on $\BL(\X)$. 
\end{notation}
\section{Introduction}
Information Theory is one of the paradigmatic examples in the interphase between physics and mathematics. Several important results concerning strict aspects in Information Theory have caught the eye of mathematicians. This was revealed when \emph{Shannon} founded the \emph{Classical Information Theory} (CIT) \cite{shannon01}; among his proposals it was established that the \emph{measure} of information contained in a physical system can be studied via the probability theory framework. In particular, the measure of information contained in a random variable can be described by \emph{Shannon's entropy}. Some results of interest in CIT are the \emph{Classical Young's Inequality}, \emph{de Bruijin identity} and the \emph{Stam inequality}. As stressed by Blachman \cite{blachman}, all these serve as springboards to prove the \emph{Convolution Inequality for Entropy Powers}, which from a physical point of view, is useful to determine information capacities of broadcast channels \cite{konSmi14}, for instance. See below, Expression \eqref{eq:classic_ineq} for a concrete overview. From the mathematical point of view, such \emph{Entropy Power Inequalities} (EPI) are interesting due to their connection with geometrical quantities such as the \emph{Brunn--Minkowski inequalities}, which bound the volume of the set--sum of two compact convex sets in $\R^{d}$, with $d\in\N$. For further details about the importance of EPI in physics and mathematics see \cite{palmargio14} and \cite{gardner02}, respectively.\par
In the \emph{quantum} setting, one replaces probability densities by \emph{density matrices}, defined on some Hilbert space $\H$: $\rho_{\H}\doteq\{\rho\in\BL(\H)\colon\rho>0\text{ with }\Tr_{\H}(\rho)=1\}$. These density matrices are useful to describe a non--commutative probability measure space in quantum systems. Among the recent analogs to classical probabilities and their quantum counterpart the \emph{Fokker--Planck equation} has been stated. Namely, in \cite{carlen14}, Carlen and Maas used the \emph{Clifford $\C$--algebras} as a probability space, and developed a differential calculus to get a \emph{fermionic} Fokker--Plank equation. They gave an ensemble of parallel results between the quantum and classical cases.\par
In this work we prove an EPI for fermion systems, and we provide pivotal identities such as de Bruijin's identity and the Stam inequality for this non--commutative framework. As stressed in the Ph.D. Thesis \cite{NJBAza}, in a fermionic setting, geometric inequalities and their relation with quantum information are unknown. We focus on a quantum information framework in the scope of fermionic quantum information, namely, fermionic linear optics (FLO). The latter refers to \emph{free}--fermion systems under external potentials, which is reminiscent of \emph{simple} conduction electron problems as at the \emph{Anderson model} \cite{Anderson58,klein2007mott,klein2007cond,JMP-autre2,bru2014heat}. Physically, FLO is a limited form of quantum computation that can efficiently simulate classical computers and their study might help at the understanding of quantum channels \cite{bravyi05,bravyi12clas}. As a one of the main contributions of this paper is to construct an unambiguous mathematical structure to study at FQI, which, in turn, is pivotal to prove an EPI at the fermionic setting.\par
\bigskip
In order to state the relevance of the problem, let $(\Omega,\Sigma,\mathfrak{m})$ be a probability measure space. For random variables $X$ and $Y$ in $(\Omega,\Sigma,\mathfrak{m})$, one \emph{clasically} study information quanties such as the ``Shannon (differential) entropy'' $H(X)$ and the ``convolution'' between $X$ and $Y$, $u_{X+Y}$. The latter are defined by \cite{shannon01}
$$
H(X)\doteq-\int u_{X}(x)\ln u_{X}(x)\mathfrak{m}(\d x)\qquad u_{X+Y}(z)\doteq\int u_{Y}(z-x)u_{X}(x)\mathfrak{m}(\d x),
$$
with $z\in\Omega$ and $u_{X}$ the probability density associated to $X$. Note that $u_{X+Y}$ can be recognized as the probability density of the \emph{output} $X+Y$. Then, if $X\doteq\{X_{j}\}_{1\leq j\leq d}$ and $Y\doteq\{Y_{j}\}_{1\leq j\leq d}$ are two sets of random variables with their joint respective probability measures on $\R^{d\in\N}$, the Shannon entropy of the combination of the variables $X$ and $Y$ satisfies the \emph{entropy power inequality} for classical systems 
$$
\e^{\frac{2}{d}H(X+Y)}\geq\e^{\frac{2}{d}H(X)}+\e^{\frac{2}{d}H(Y)}.
$$
In \cite{konSmi14}, K\"{o}nig and Smith improved last inequality in the sense that for $\lambda\in\cal\equiv[0,1]$ they proven the following two \emph{concave} inequalities:
\begin{equation}
\e^{\frac{2}{d}H(X\boxplus_{\lambda}Y)}\geq\lambda_{X}\e^{\frac{2}{d}H(X)}+\lambda_{Y}\e^{\frac{2}{d}H(Y)},\qquad H(X\boxplus_{\lambda}Y)\geq\lambda_{X}H(X)+\lambda_{Y}H(Y),\label{eq:classic_ineq}
\end{equation}
where the symbol $\boxplus_{\lambda}$ refers to the \emph{addition rule} $\boxplus_{\lambda}\colon\R^{d}\times\R^{d}\to\R$ over the probability density space obeying $u_{X\boxplus_{\lambda}Y}\doteq u_{\sqrt{\lambda_{X}}X+\sqrt{\lambda_{Y}}Y}$ with $\lambda_{X}\doteq\lambda$ and $\lambda_{Y}\doteq1-\lambda$, the weights of the sets of random variables $X$ and $Y$, respectively. In order to prove the inequalities \eqref{eq:classic_ineq}, they use the \emph{heat or diffusion (or Gaussian)} equation, which in the current context is described as follows: for $(X,t)\in\R^{d+1}$ with $d\in\N$, the heat or diffusion (or Gaussian) equation describing a scalar field $u(X,t)\colon\R^{d}\times\R^{+}\to\R$ has the form of the \emph{Cauchy problem}\footnote{More generally the heat equation is the simplest example of \emph{parabolic equation} \cite{brezis2010functional}.}
\begin{equation}
\frac{\p u(X,t)}{\p t}=\Delta u(X,t), \quad u_{0}(X)\doteq u(X,0),\label{eq:heat_eq}
\end{equation}
where $u_{0}(X)$ is the \emph{initial data} and $\Delta\doteq\sum\limits_{i=1}^{d}\frac{\p^{2}}{\p x_{i}^{2}}$ is the \emph{Laplacian}. By \cite[Chapter 2, Sect. 2]{EngelNagel} we know that for any $p\in[1,\infty)$ the \emph{diffusion semigroup}, $\{\P_{t}\}_{t\in\R^{+}}\in L^{p}(\R^{d})$, associated to the latter equation verifies for $u_{0}\in L^{p}(\R^{d})$ that
$$
\P_{t}u_{0}(X)=\frac{1}{\left(4\pi t\right)^{\frac{d}{2}}}\int_{\R^{d}}\e^{-\frac{\Vert X-Y\Vert^{2}}{4t}}u_{0}(Y)\d Y.
$$
Then, one can prove that if $\P_{0}=1$, the family of semigroups $\P\doteq\{\P_{t}\}_{t\in\R^{+}}\in L^{p}(\R^{d})$ is strongly continuous and even more, the solution of \eqref{eq:heat_eq} is provided by $u_{0}(X)$ and $\P$ via $u(X,t)=\P_{t}u_{0}(X)$. See Definition \ref{def:semigroup} at Appendix \ref{appen:qin} for a general formulation of semigroups. Note that in \cite{konSmi14}, is assumed that the time dependency of the random set $X$ is $X_{t}= X+\sqrt{t}Z$, for $t\in\R_{0}^{+}$ and $Z\doteq\{Z_{j}\}_{1\leq j\leq d}$ is a set of random variables with \emph{standard normal} distribution, $\Nc(0,1)$.\\
Quantum \emph{bosonic} versions of \eqref{eq:classic_ineq}--\eqref{eq:heat_eq} were given in \cite{konSmi14} and \cite{palmargio14}, and hence the mathematical framework of $\CCR$\footnote{The name refers to \emph{Canonical Commutation Relations}.} algebras was required. They consider bosonic systems \emph{interacting} with the environment such that dissipative processes can occur. The latter means that given some initial \emph{density matrix} $\rho_{0}\equiv\rho$ on a well--defined subset of a $\CCR$ $\C$--algebra $\W$ ($\rho$ satisfies $\tr_{\W}(\rho)=1$ and it is positive), the density matrix $\rho_{t}$ evolves over time via the \emph{Markovian master equation} (see \eqref{eq:qde} and \eqref{eq:Liouvillean})
\begin{equation}
\frac{\d}{\d t}\rho_{t}=\L\rho_{t},\qquad \rho_{t}\doteq\e^{t\L}\rho,\label{eq:qde1} 
\end{equation}
for any $t\in\R_{0}^{+}$. Here, $\L$ is the so--called \emph{Liouvillean} of the system, which is the infinitesimal (unbounded) generator of the strongly continuous semigroup $\P\doteq\{\e^{t\L}\}_{t\in\R_{0}^{+}}$. By assuming that the $\CCR$ algebra has a unit operator $\1$, explicitly, for any $A\in\mathcal{D}(\L)$ (domain of $\L$), $\L$ is given by 
$$
\L A=-\frac{1}{4}\sum_{i\in\I}\left(\left[Q_{i},\left[Q_{i},A\right]\right]+\left[P_{i},\left[P_{i},A\right]\right]\right),
$$
where $\I$ is a finite index set, and $\{Q_{i}\}_{i\in\I}$, $\{P_{i}\}_{i\in\I}$ are two families of operators satisfying the $\CCR$ relations
$$
[Q_{i},P_{j}]=\ii\delta_{i,j}\1,\quad[Q_{i},Q_{j}]=0=[P_{i},P_{j}],
$$
for $i,j\in\I$. As is usual, $Q_{i}$ and $P_{i}$ are the \emph{position} and \emph{momentum} operators at the $i$--\emph{mode}. Here, $[A,B]\doteq AB-BA$ denotes the commutator between $A$ and $B$.\\
Instead of using for $\lambda\in\cal\equiv[0,1]$ the addition rule $\boxplus_{\lambda}$ described above, they consider the \emph{beam splitter} quantum \emph{channel} $\M_{\mathbb{U}_{\lambda}}$ defined by:
\begin{equation}
\rho_{\mathbf{C}}\equiv\M_{\mathbb{U}_{\lambda}}(\rho_{\mathbf{A}}\otimes\rho_{\mathbf{B}})\doteq\tr_{\mathbf{B}}\left(\mathbb{U}_{\lambda}^{*}\left(\rho_{\mathbf{A}}\otimes\rho_{\mathbf{B}}\right)\mathbb{U}_{\lambda}\right),\label{eq:intro_qchannel} 
\end{equation}
where $\rho_{\mathbf{A}}$ and $\rho_{\mathbf{B}}$ are the density matrices of two different \emph{but} similar bosonic systems, e.g., the system $\mathbf{A}$ is described by the $\CCR$ $\C$--algebra $\W_{\mathbf{A}}$ generated by the family of (unbounded) operators $\{a_{i}\}_{i\in\I}$ while the family of (unbounded) operators $\{b_{i}\}_{i\in\I}$ generates the $\CCR$ $\C$--algebra $\W_{\mathbf{B}}$, which describes the bosonic system $\mathbf{B}$. In fact, they both satisfy the $\CCR$ ($\C$--algebra) relations
$$
[a_{i},a_{j}^{*}]=\delta_{i,j}\1,\qquad[b_{i},b_{j}^{*}]=\delta_{i,j}\1, \quad\text{with}\quad i,j\in\I.
$$
Here, as is usual, for $A,B\in\W_{\mathbf{A},\mathbf{B}}$, $[A,B]\doteq AB-BA\in\W_{\mathbf{A},\mathbf{B}}$ denotes the commutator between $A$ and $B$, which by simplicity we assume that is closed on $\W_{\mathbf{A},\mathbf{B}}$. Moreover, for any $a_{i}\in\W_{\mathbf{A}}$ and $b_{j}\in\W_{\mathbf{B}}$ we impose $[a_{i},b_{j}]=0$. In \eqref{eq:intro_qchannel}, for $\lambda\in\cal$, $\mathbb{U}_{\lambda}$ is a unitary operator implementing the beam--splitter \emph{$^{*}$--automorphism} on the \emph{tensor} product algebra $\W_{\mathbf{A}}\otimes\W_{\mathbf{B}}$ so that $\mathbb{U}_{\lambda}\mathbb{U}_{\lambda}^{*}=\1$, and $\tr_{\mathbf{B}}(\cdot)$ means that one of the output \emph{signals} (in this case of the system $\mathbf{B}$) is discarded in order to get the output \emph{mixed} density matrix $\rho_{\mathbf{C}}$. Additionally, under the quantum channel $\M_{\mathbb{U}_{\lambda}}$ the mixing of the families $\{a_{i}\}_{i\in\I}$ and $\{b_{i}\}_{i\in\I}$ provides a new family of (unbounded) operators $\{c_{i}\}_{i\in\I}:$
$$
c_{i}=\sqrt{\lambda_{\mathbf{A}}}a_{i}+\sqrt{\lambda_{\mathbf{B}}}b_{i},\qquad i\in\I,
$$
which uphold the $\CCR$ relations, i.e., $[c_{i},c_{j}]=\delta_{i,j}\1$, with $\lambda_{\mathbf{A}}\doteq\lambda,\lambda_{\mathbf{B}}\doteq1-\lambda$. For more details about channels in the $\C$--algebra context, see Appendix \ref{appen:qin}.\\ 
Then the \emph{modified} bosonic version of the entropy power inequality as given in \eqref{eq:classic_ineq} is the following concave inequality:
$$
\E(\mathbf{C})\geq \lambda_{\mathbf{A}}\E(\mathbf{A})+\lambda_{\mathbf{B}}\E(\mathbf{B}),
$$
where for $\mathbf{D}\in\{\mathbf{A},\mathbf{B},\mathbf{C}\}$, $\rho_{\mathbf{D}}$ is a density matrix, $\E(\mathbf{D})\doteq\e^{s(\rho_{\mathbf{D}})}$ is the \emph{quantum entropy power} with $s(\rho_{\mathbf{D}})\doteq\frac{1}{|\I|}\mS(\rho_{\mathbf{D}})$ the \emph{density entropy} of $\rho_{\mathbf{D}}$ such that
$$
\mS(\rho_{\mathbf{D}})\doteq-\tr_{\W}(\rho_{\mathbf{D}}\ln\rho_{\mathbf{D}}),
$$
denotes the von Neumann entropy.\par
\bigskip

In contrast to the mentioned results, the current work is focused in its fermionic version: Our contributions regard the study of fermionic systems in a quantum computation framework. From the physical point of view, this is of interest because as already mentioned FLO is classically simulable. Mathematically, one must consider a non--commutative framework and hence it is interesting to study the existence of the inequality \eqref{eq:classic_ineq} for the underlying setting.  \\
The standard mathematical formalism when one deals with fermion systems are $\CAR$ $\C$--algebras. If one consider a unital finite $\CAR$ $\C$--algebra $\cA\equiv(\cA,^{*},+,\cdot,\Vert\cdot\Vert_{\cA})$ of size $2^{2N}$ with $N\in\N_{0}$, then this is isomorphic to the $\C$--algebra of the square complex matrices $\mathrm{Mat}(2^{N},\CP)$ \cite{BratteliRobinson}. For a finite Hilbert space $\h$, two very well--studied $\CAR$ $\C$--algebras are that given by (i) the \emph{usual} $\CAR$ algebra $\A\equiv\A(\h)$ generated by the identity $\1$ and the elements $\{a(\varphi)\}_{\varphi\in\h}$ such that satisfy the $\CAR$
\begin{equation}
 a(\varphi_{1})a(\varphi_{2})^{*}+a(\varphi_{2})^{*}a(\varphi_{1})=\inner{\varphi_{1},\varphi_{2}}\1, \qquad \varphi_{1},\varphi_{2}\in\h,\label{eq: CAR basic}
\end{equation}
and (ii) the \emph{Clifford} $\CAR$ algebra $\Q\equiv\Q(\h)$, which is generated by $\1$ and the self--adjoint elements $\{R^{\pm}(\varphi)\}_{\varphi\in\h}$ so that
$$
R^{\#}(\varphi_{1})R^{\#}(\varphi_{2})+R^{\#}(\varphi_{2})R^{\#}(\varphi_{1})=2\inner{\varphi_{1},\varphi_{2}}\delta_{+,-}\1, \qquad \varphi_{1},\varphi_{2}\in\h,
$$
where $\#$ denotes either $R^{+}$ or $R^{-}$. Naturally, $\A$ and $\Q$ are isomorphic \cite{carlieb93}. Note that, view can algebras, $\A$ and $\Q$ are isomorphic to the Grassmann algebra $\wedge^{*}\h$ because they have exactly the same dimension: $\dim\wedge^{*}\h=2^{\dim\h}=\dim\A=\dim\Q$. Moreover, by using Definition \ref{definition star} below, we can endow to Grassmann algebras of a well--defined norm (see expression \ref{def norm}) such that the $\C$--algebras $\dim\wedge^{*}\h=2^{\dim\h}=\dim\A=\dim\Q$ are equivalent. The latter gain relevance for \emph{fermionic coherent states} because we are able to prove \emph{transparent} properties associated to the \emph{displacement} fermion operator, which are reminiscent to the bosonic version applied on quantum optics \cite{cahill99}. See Definition \ref{def:bil_weyl} and Lemma \ref{lemma:Weyl_grass} below.\\
Our main result is Theorem \ref{theo:main_result}, and the set of Lemmata \ref{lemma:Weyl_grass}, \ref{lemma:qfi}, \ref{lemma:Bruijn} and \ref{lemma:stam} are pivotal. Theorem \ref{theo:main_result} refers to the fermionic entropy power inequality for a \emph{beam--splitter} quantum channel operation. Its extension to \emph{amplifiers} quantum channels follows directly from the methods used in \cite{palmargio14}. We omit the mathematical details for the sake of simplicity. Additionally, we prove Lemmata \ref{lemma:qfi} and \ref{lemma:Bruijn} even for infinite dimensional $\CAR$ $\C$--algebras, which differ in their proofs of their parallel bosonic counterpart given in \cite{hubKonVer,datta17}.\\
To conclude, this paper is organized as follows:
\begin{itemize}
 \item In Section \ref{sec:mathe}, we present the mathematical framework of $\CAR$ $\C$--algebras. As a first step we present the self--dual $\CAR$ $\C$--algebras introduced by Araki \cite{A68,A70}, which were elegantly raised in the study of \emph{non}--interacting but \emph{non}--gauge fermion systems. Secondly, we provide the well--known Grassmann algebras $\G$, as well as the Berezin integrals and we define a ``circle'' product such that it converts the self--dual $\CAR$ algebras and the Grassmann algebras to be equivalent. After that, the Clifford $\C$--algebras, isomorphic to $\A$ and $\G$, are introduced. Then, a \emph{non}--commutative calculus is shown, which will provide us the differential equation determining the evolution of fermionic open systems. At the end of the section general properties of the set of states are exposed. 
 \item Section \ref{sec:main_results} provides the statement of the main result, Theorem \ref{theo:main_result}, and some main definitions concerning the entropy for the current work are defined. To be precise, we conveniently use the Clifford $\C$--algebra to define information quantities such as the quantum Fisher information and the entropy variation rate.   
 \item Section \ref{sec:tech_proofs} devotes to all technical proofs. Fermionic versions of well--known bosonic versions are proven. Additionally, as an application of our results, in the fermionic coherent states context, is the derivation of a natural mathematical framework only requiring the circle product presented in Section \ref{sec:mathe}.
 \item We finally include Appendix \ref{appen:qin}, stating a $\C$--algebra mathematical framework on the study of open quantum systems. This will permit to the  non--experts at quantum information theory and open quantum systems understand their mathematical motivations.
\end{itemize}
\section{Mathematical Framework}\label{sec:mathe}
\subsection{\eqt{$\CAR$ $\C$}--algebras}
\subsubsection{Self--dual \eqt{$\CAR$} algebras}
From now on and through of all the paper, let $\H$ be a finite--dimensional (complex) Hilbert space with even dimension $\dim\H\in 2\N$, and let $\a$  be an antiunitary involution on $\H$ such that $\a^{2}=\mathbf{1}_{\H}$ and 
\begin{equation*}
\left\langle \a\varphi _{1},\a\varphi _{2}\right\rangle
_{\H}=\left\langle \varphi _{2},\varphi _{1}\right\rangle _{\H}\ ,\qquad \varphi _{1},\varphi _{2}\in \H.
\end{equation*}
$\H$ endowed with $\a$, denoted $(\H,\a)$ is named a \emph{self--dual Hilbert space} and yields \emph{self--dual $\CAR$ algebra}, $\A\equiv(\mathrm{sCAR}(\H,\a),+,\cdot,^{*},\Vert\cdot\Vert_{\sCAR})$, which is nothing but a $C^{*}$--algebra generated by a unit $\1$ and a family $\{\B(\varphi )\}_{\varphi \in \H}$ of elements satisfying: $\B\left( \varphi \right) ^{*} $ is (complex) linear, $\B(\varphi )^{*}=\B(\a(\varphi ))$ for any $\varphi \in \H$ and the family $\{\B(\varphi )\}_{\varphi \in \H}$ satisfies the Canonical Anti--Commutation Relations ($\CAR$): For any $\varphi _{1},\varphi _{2}\in\H$,
\begin{equation}
\B(\varphi _{1})\B(\varphi _{2})^{*}+\B(\varphi_{2})^{*}\B(\varphi _{1})=\left\langle \varphi _{1},\varphi_{2}\right\rangle _{\H}\1.  \label{eq: CAR}
\end{equation}
Note that for any $\varphi\in\H$, $\Vert\B(\varphi)\Vert_{\A}\leq\Vert\varphi\Vert_{\H}$ with $\Vert\cdot\Vert_{\A}\equiv\Vert\cdot\Vert_{\sCAR}$. Additionally, $\A$ is isomorphic to the $\C$--algebra $\otimes^{\dim\H/2}\mathrm{Mat}(2,\CP)$, where for $N\in\N$, $\mathrm{Mat}(N,\CP)$ denotes the complex matrices of size $N\times N$ , see \eqref{dimension fock} and \cite{BratteliRobinson}.\\
For any self--dual Hilbert space $(\H,\a)$ we introduce:
\begin{definition}[Basis projections]\label{def basis projection}
A basis projection associated with $(\H,\a)$ is an
orthogonal projection $P\in \mathcal{B}(\H)$ satisfying $\a P\a=P^{\bot }\doteq \mathbf{1}_{\H}-P$. We denote by $
\mathfrak{h}_{P}$ the range $\mathrm{ran}P$ of the basis projection $P$. The set of all basis projections on $(\H,\a)$ it will denoted by $\fp$.
\end{definition}
For any $P\in\fp$, we can identify $\H$ with
\begin{equation}
\H\equiv \mathfrak{h}_{P}\oplus \mathfrak{h}_{P}^{*}
\label{definition H bar}
\end{equation}
and 
\begin{equation}
\B\left( \varphi \right) \equiv \B_{P}(\varphi )\doteq\B\left(P\varphi\right)+\B\left(\a P^{\bot }\varphi\right)^{*}.\label{map iodiote}
\end{equation}
Therefore, there is a natural isomorphism of $C^{*}$--algebras from $\A$ to the $\CAR$ algebra $\mathrm{CAR}(\mathfrak{h}_{P})$ generated by the unit $\1$ and $\{\B_{P}(\varphi )\}_{\varphi \in \mathfrak{h}_{P}}$. See Expression \eqref{eq: CAR basic}. In other words, a basis projection $P$ can be used to \emph{fix} so--called \emph{annihilation} and \emph{creations} operators.\\
For any unitary operator $U\in \mathcal{B}(\H)$ such that $U\mathfrak{A=A}U$, the family of elements $\B(U\varphi )_{\varphi \in\H}$, together with the unit $\1$, generates $\A$. In the latter case, $U$ is named a \emph{Bogoliubov transformation}, and the unique $^{*}$--automorphism $\mathbf{\chi }_{U}$ such that 
\begin{equation}\label{eq:bog_aut}
\mathbf{\chi }_{U}\left( \B(\varphi )\right) =\B(U\varphi ),\qquad \varphi \in \H,
\end{equation}
is called in this case a \emph{Bogoliubov }$^{*}$\emph{--automorphism}. Note that a Bogoliubov transformation $U\in \mathcal{B}(\H)$ always satisfies: $\det\left( U\right) =\pm 1$. If $\det\left( U\right) =1$, we say that $U$ has \emph{positive} orientation. Otherwise $U$ is said to have \emph{negative} orientation. These properties are also called even and odd.\\
Considering the Bogoliubov $^{*}$--automorphism \eqref{eq:bog_aut} with $U=-\mathbf{1}_{\H}$, an element $A\in\A$, satisfying
\begin{equation}\label{eq:even odd}
\mathbf{\chi}_{-\mathbf{1}_{\H}}(A)=
\begin{cases}
   \quad A&\text{is called \emph{even}},\\
   -A&\text{is called \emph{odd}}.
\end{cases}
\end{equation}
The subspace of even elements $\A^{+}$ is a sub--$C^{*}$--algebra of $\A$.\par
In order to study \emph{non}--interacting fermion systems, as is the case of the (reduced) BCS model at condensed matter physics, or Gaussian states at fermionic quantum computation, it is useful to introduce for $H\in \mathcal{B}(\H)$ its \emph{bilinear element} by
\begin{align}
\inner{\B,H\B}\doteq \sum\limits_{i,j\in I}\left\langle \psi _{i},H\psi _{j}\right\rangle _{\H}\B\left( \psi _{j}\right) \B\left( \psi _{i}\right) ^{*},\label{eq:bil_self}
\end{align}
where $\{\psi _{i}\}_{i\in I}$ is an orthonormal basis of $\H$. Note that $\inner{\B,H\B}$ is \emph{uniquely} defined in the sense that \emph{does not depend} on the particular choice of the orthonormal basis, but does depend on the choice of generators $\{\B(\varphi )\}_{\varphi \in \H}$ of the self--dual $\CAR$ algebra $\A$. Moreover, $\langle \B,H\B\rangle ^{*}=\langle\B,H^{*}\B\rangle$ for all $H\in \mathcal{B}(\H)$. The analysis of bilinear elements can be restricted to self--dual operators:
\begin{definition}[Self--dual operators]\label{def:self_dual}
A self--dual operator on $(\H,\a)$ is an operator $H\in\mathcal{B}(\H)$ satisfying the equality $H^{*}=-\a H\a$. If, additionally, $H$ is self--adjoint, then we say that it is a self--dual Hamiltonian on $(\H,\a)$.
\end{definition}
A basis projection $P$ (Definition \ref{def basis projection}) (block--) ``diagonalizes'' the self--dual operator $H\in \mathcal{B}(\H)$ whenever
\begin{equation}
H=\frac{1}{2}\left( PH_{P}P-P^{\bot }\a H_{P}^{*}\a P^{\bot }\right),\qquad \text{with}\qquad H_{P}\doteq2PHP\in\mathcal{B}(\mathfrak{h}_{P}).  \label{kappabisbiskappabisbis}
\end{equation}
In this situation, we also say that the basis projection $P$ diagonalizes $\inner{\B,H\B}$. On the other hand, given some self--dual Hamiltonian $H\in\BL(\H)$, and basis projection $P\in\fp$ with range $\h_{P}$, one can define a self--adjoint operator $H_{P}=H_{P}^{*}\in\BL(\h_{P})$ and the antilinear operator $G_{P}^{*}=-G_{P}$ on $\h_{P}$ as
\begin{equation}\label{eq:super_terms}
H_{P}\doteq2PHP \qquad\text{and}\qquad G_{P}\doteq2PH\a P.
\end{equation}
With this notation $\h_{P}$ is the so--called \emph{one--particle} Hilbert space while $H_{P}$ and $G_{P}$ are the gauge invariant and not gauge invariant \emph{one--particle} Hamiltonians respectively. In fact, for elements $\{\varphi_{i}\}_{i\in J}\in\h_{P}$, in the $\CAR$ $\C$--algebra setting (with generators $\1$ and $\{a(\varphi_{i})\}_{i\in J}$, see Expression \ref{eq: CAR basic}), one can write \emph{any} quadratic fermionic Hamiltonian as linear combinations of \emph{gauge--invariant} elements $a(\varphi_{i})a(\varphi_{j})^{*}$, for all $i,j\in J$, and linear combinations of \emph{non--gauge--invariant} elements $a(\varphi_{i})a(\varphi_{j})$ and $a(\varphi_{i})^{*}a(\varphi_{j})^{*}$, for all $i,j\in J$. Then, from \eqref{eq:bil_self} one note that any quadratic fermionic Hamiltonian can be recognized as
\begin{equation*}
\mathrm{d}\Gamma(H_{P})+\mathrm{d}\Upsilon(G_{P})=-\langle \mathrm{B},\left[\kappa \left(H_{P}\right) +\tilde{\kappa}\left(G_{P}\right)\right] \mathrm{B}\rangle +\frac{1}{2}\mathrm{Tr}_{\h_{P}}\left(H_{P}\right) \mathfrak{1},
\end{equation*}
with 
$$
\kappa\left(H_{P}\right) \doteq \frac{1}{2}\left(PH_{P}P-\a PH_{P}P\a\right)\in\BL(\H),\qquad\tilde{\kappa}\left(G_{P}\right) \doteq \frac{1}{2}\left( PG_{P}P\a-\a PG_{P}P\right) \in \mathcal{B%
}(\mathcal{H}),
$$
and $\mathrm{d}\Upsilon(G_{P})=-\inner{\B,\tilde{\kappa}\left(G_{P}\right)\B}$. Additionally, the $\CAR$ $\C$--algebra $\mathrm{CAR}(\h_{P})$ and the self--dual $\CAR$ $\C$--algebra $\mathrm{sCAR}(\H,\a)$ are the same $C^{\ast}$--algebra, by defining 
$$
\mathrm{B}\left( \varphi \right)\doteq a(\mathrm{\varphi }_{1})+a(\mathrm{\varphi }_{2})^{\ast },\qquad \varphi =(\mathrm{\varphi }_{1},\mathrm{\varphi }_{2}^{\ast })\in\h_{P}\oplus\h_{P}^{*}.
$$
\subsubsection{Grassmann Algebras}\label{Section Grassmann}
Consider the self--dual Hilbert space $(\H,\a)$. Grassmann algebras, also called exterior algebras are defined as follows: For every $n\in {\mathbb{N}}$ and $\varphi
_{1},\ldots ,\varphi _{n}\in \H$, we define the completely
antisymmetric $n$-linear form $\varphi _{1}\wedge \cdots \wedge \varphi _{n}$
from $\H^{n}$ to $\mathbb{C}$ by
\begin{equation*}
\varphi _{1}\wedge \cdots \wedge \varphi_{n}(\psi_{1},\ldots,\psi_{n})\doteq \det \left( (\varphi _{k}^{*}(\psi _{l}))_{k,l=1}^{n}\right)=\det\left((\left\langle \varphi _{k},\psi _{l}\right\rangle _{\H})_{k,l=1}^{n}\right),\qquad\psi_{1},\ldots ,\psi _{n}\in \H.
\end{equation*}
Then, using the definitions $\wedge ^{*0}\H\doteq {\mathbb{C}}$
and, for $n\in \mathbb{N}$, 
\begin{equation}
\wedge ^{*n}\H\doteq \mathrm{lin}\{\varphi _{1}\wedge \cdots
\wedge \varphi _{n}\colon\varphi _{1},\ldots ,\varphi _{n}\in \H^{*
}\equiv \H\},  \label{n-space grassmann}
\end{equation}%
we denote by 
\begin{equation}
\wedge ^{*}\H\doteq \bigoplus\limits_{n=0}^{\infty }\wedge^{*n}\H\equiv (\wedge ^{*}\H,+,\wedge )
\label{grassmann algebra}
\end{equation}
the Grassmann (associative and distributive) algebra associated with the
(self--dual) Hilbert space $(\H,\a)$. Here, the exterior product is defined, for any $n,m\in {\mathbb{N}}_{0}$, $\xi \in \wedge ^{*n}\H$ and $\zeta \in\wedge^{*m}\H$, by 
\begin{equation*}
\xi \wedge \zeta \left( \psi _{1},\ldots ,\psi _{n+m}\right) \doteq \frac{1}{n!m!}\sum_{\pi \in \mathcal{S}_{n+m}}\left( -1\right) ^{\pi }\xi \left( \psi
_{\pi (1)},\ldots ,\psi _{\pi (n)}\right) \zeta \left( \psi _{\pi(n+1)},\ldots ,\psi _{\pi (n+m)}\right),
\end{equation*}
where $\mathcal{S}_{N}$ is the set of all permutations of $N\in \mathbb{N}$ elements and $\psi _{1},\ldots ,\psi _{n+m}\in \H$. Obviously, for any $n\in {\mathbb{N}}$, $n\geq 2$, and $\varphi _{1},\ldots ,\varphi_{n}\in \H$,
\begin{equation}
\varphi _{1}\wedge \varphi _{2}=-\varphi _{2}\wedge \varphi _{1}\qquad \text{%
and}\qquad \varphi _{1}\wedge \left( \varphi _{2}\wedge \cdots \wedge
\varphi _{n}\right) =\varphi _{1}\wedge \cdots \wedge \varphi _{n}.
\label{grassmana anticommute00}
\end{equation}%
In the sequel, when there is no risk of ambiguity, we use 
\begin{equation}
\varphi_{1}\wedge\cdots\wedge\varphi_{n}\equiv\varphi _{1}\cdots\varphi_{n},\qquad \varphi _{1},\ldots,\varphi_{n}\in\H.
\label{eq:edge_product}
\end{equation}%
The unit of the Grassmann algebra $\wedge ^{*}\H$ is denoted by 
\begin{equation*}
\1\doteq 1\in \wedge ^{*0}\H\subseteq \wedge ^{*}\H
\end{equation*}
and $[\xi ]_{n}$ stands for the $n$-degree component of any element $\xi $
of $\wedge ^{*n}\H$, with $n\in \mathbb{N}_{0}$. Note also that 
\begin{equation*}
\H\equiv \H^{*}\doteq \wedge ^{*1}\H.
\end{equation*}
The subspace of $\wedge ^{*}\H$ generated by monomials $\varphi_{1}\cdots \varphi _{n}$ of even order $n\in 2{\mathbb{N}}_{0}$ forms a commutative subalgebra, the \emph{even subalgebra} of $\wedge ^{*}\H$, which is denoted by $\wedge _{+}^{*}\H$ in the sequel.\par
For any (complex) Hilbert space $\H$ and antiunitary involution $\a$ on $\H$ we can define a self--dual CAR algebra $\A$. The linear spaces $\A$ and $\wedge ^{*}\H$ are isomorphic to each other because they have exactly the same dimension:
\begin{equation}\label{dimension fock}
\dim\A=2^{\dim \H}=\dim \left(\wedge ^{*}\H\right).
\end{equation}
However, because of the CAR \eqref{eq: CAR} and the involution $\a$, $(\wedge ^{*}\H,+,\wedge)$ is not isomorphic to a self--dual $\CAR$ algebra over $\H$ and, following \cite{LD1}, we introduce the \emph{circle product} at Definition \ref{definition star} as well as an involution in order to make $\wedge ^{*}\H$ a self--dual CAR algebra.\par
For any $\varphi \in \H$, the linear operator $\delta/\delta \varphi$ acting on the Grassmann algebra $\wedge ^{*}\H$ is called \emph{Berezin derivative}, which is uniquely defined by the conditions 
\begin{equation}
\frac{\delta }{\delta \varphi }\tilde{\varphi}=\left\langle \varphi ,\tilde{%
\varphi}\right\rangle _{\H}\1\qquad \text{and}\qquad 
\frac{\delta }{\delta \varphi }\xi _{1}\xi _{2}=\left( \frac{\delta }{\delta
\varphi }\xi _{1}\right) \wedge \xi _{2}+\left( -1\right) ^{n}\xi _{1}\wedge
\left( \frac{\delta }{\delta \varphi }\xi _{2}\right),
\label{derivation plus1}
\end{equation}
for any $\tilde{\varphi}\in \H$ and element $\xi _{1}\in \wedge
^{*n}\H$ of degree $n\in \N$, and all $\xi _{2}\in
\wedge ^{*}\H$. \\
For each $k\in \N_{0}$, $\H^{(k)}$ denotes a copy of the Hilbert space $\H$ and the corresponding copy of $\xi\in\H$ is written as $\xi ^{(k)}$. For any $K\subset \{0,\ldots ,N\}$ with $N\in \N_{0}$, we identify $\wedge ^{*}(\oplus _{k\in K}\H^{(k)})$ with the Grassmann subalgebra of $\wedge ^{*}(\oplus _{k=0}^{N}\H^{(k)})$ generated by the union
\begin{equation*}
\bigcup\limits_{k\in K}\left\{\varphi^{(k)}\colon\varphi \in \H\right\}.
\end{equation*}%
We meanwhile identify $\wedge ^{*}\H^{(0)}$ with the Grassmann algebra $\wedge^{*}\H$, i.e.,
\begin{equation}
\wedge ^{*}\H^{(0)}\equiv \wedge ^{*}\H.
\label{identification}
\end{equation}
Taking into account this, we define:
\begin{definition}[Berezin integral]\label{Grassmann Integral}
Let $N\in \N_{0}$ and consider a basis projection $P$ (Definition \ref{def basis projection}) with $\{\psi _{i}\}_{i\in J}$ being any orthonormal basis of its range $\mathfrak{h}_{P}$. For all $k\in \{0,\ldots ,N\}$, we define the linear map 
\begin{equation*}
\int_{P}\d\left( \H^{(k)}\right) \colon\wedge ^{*}\left(
\oplus _{q=0}^{N}\H^{(q)}\right) \to \wedge ^{*}\left(
\oplus _{q\in \{0,\ldots ,N\}\backslash \{k\}}\H^{(q)}\right)
\end{equation*}
by 
\begin{align*}
\int_{P}\d\left( \H^{(k)}\right) \doteq \prod\limits_{i\in
J}\left( \frac{\delta }{\delta \psi _{i}^{(k)}}\frac{\delta }{\delta ((%
\a\psi _{i})^{(k)})}\right).
\end{align*}
\end{definition}
For $N=0$, the Berezin integral defines a linear form from $\wedge^{*}\H^{(0)}\equiv \wedge ^{*}\H$ to $\CP\1\equiv \CP$. One can show that for any basis projection $P$ diagonalizing a self--dual operator $H\in\BL(\H)$, we have
\begin{equation}
\int_{P}\d\left( \H\right)\e^{\inner{\H,H\H}}=\det\left( H_{P}\right),
\label{equation a la con}
\end{equation}
where for an orthonormal basis $\{\psi _{i}\}_{i\in I}$ of $\H$, the bilinear element $\inner{\H,H\H}$ on the Grassmann algebra $\wedge ^{*}\H$ is \emph{uniquely} given by
\begin{equation}
\inner{\H,H\H}\doteq \sum\limits_{i,j\in I}\inner{\psi _{i},H\psi _{j}}_{\H}\left(\a\psi _{j}\right) \wedge \psi _{i},\label{eq:bil_grass}
\end{equation}
c.f. \eqref{eq:bil_self}. In particular, $\det\left(H_{P}\right) $ only depends on $H$ and the orientation of $P$, which was defined around expression \eqref{eq:bog_aut}. Gaussian Berezin integrals are then defined as follows:
\begin{definition}[Gaussian Berezin integrals]\label{def:gauss_integral}
For $C\in \mathcal{B}(\H)$ an invertible self--dual operator, the Gaussian Berezin integral with covariance $C\in \mathcal{B}(\H)$ is the linear map $\int \mathrm{d\mu }_{C}\left( \H\right) $ from $\wedge ^{*}\H$ to $\CP\1$ defined by 
\begin{equation*}
\int \mathrm{d\mu }_{C}\left( \H\right) \xi \doteq \det\left( \frac{C_{P}}{2}\right) \int_{P}\d\left( \H\right)\e^{\frac{1}{2}\langle \H,C^{-1}\H\rangle }\wedge\xi,\quad \xi \in \wedge ^{*}\H,
\end{equation*}
where $P\in\fp$ is any basis projection diagonalizing $C$ (see \eqref{kappabisbiskappabisbis}).
\end{definition}
It can be proven the following \cite{LD1}:
\begin{proposition}[Gaussian Berezin integrals as Pfaffians]
\label{Gaussian integral properties}
Let $C\in \mathcal{B}(\H)$ be any invertible self--dual operator.
Then, $\int \mathrm{d\mu }_{C}\left( \H\right) \1=\1$ while, for all $N\in \N_{0}$ and $\varphi _{0},\ldots,\varphi _{2N}\in \H$, 
\begin{align*}
\int \mathrm{d\mu }_{C}\left( \H\right)\varphi_{0}\cdots \varphi_{2N}=0\quad\text{and}\quad \int\mathrm{d\mu}_{C}\left( \H\right)\varphi _{1}\cdots \varphi _{2N}=\mathrm{Pf}\left[ \left\langle \a\varphi _{k},C\varphi _{l}\right\rangle _{\H}\right] _{k,l=1}^{2N}\1. 
\end{align*}
\end{proposition}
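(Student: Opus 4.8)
The plan is to treat the three assertions in turn; the first two are short, and the Pfaffian identity reduces to a single ``fermionic Wick'' recursion.

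For the normalization, fix a basis projection $P\in\fp$ diagonalizing $C$; then $P$ also diagonalizes the self--dual operator $\tfrac12 C^{-1}$, with the same block structure. By linearity of the bilinear element \eqref{eq:bil_grass} in its operator slot, $\tfrac12\langle\H,C^{-1}\H\rangle = \langle\H,(\tfrac12 C^{-1})\H\rangle$, so \eqref{equation a la con} gives $\int_{P}\d(\H)\,\e^{\frac12\langle\H,C^{-1}\H\rangle} = \det\!\bigl((\tfrac12 C^{-1})_{P}\bigr)\1$. Since $P$ diagonalizes $C$ one has, as operators on $\h_{P}$, $C_{P}/2 = C|_{\h_{P}}$ and $(\tfrac12 C^{-1})_{P} = PC^{-1}P = (C|_{\h_{P}})^{-1}$, whence $\det(C_{P}/2)\det\!\bigl((\tfrac12 C^{-1})_{P}\bigr) = 1$, which is exactly the prefactor in Definition \ref{def:gauss_integral}; thus $\int\d\mu_{C}(\H)\,\1 = \1$. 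The odd moments vanish by parity: $\int_{P}\d(\H) = \prod_{i\in J}\tfrac{\delta}{\delta\psi_{i}}\tfrac{\delta}{\delta(\a\psi_{i})}$ is a composition of $\dim\H$ Berezin derivatives, each lowering Grassmann degree by one, so it annihilates every homogeneous element of degree $\neq\dim\H$, in particular every odd--degree element (recall $\dim\H\in2\N$); as $\e^{\frac12\langle\H,C^{-1}\H\rangle}\in\wedge_{+}^{*}\H$, the product $\e^{\frac12\langle\H,C^{-1}\H\rangle}\wedge\varphi_{0}\cdots\varphi_{2N}$ is supported in odd degrees and its Berezin integral is $0$.

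For the Pfaffian formula I would induct on $N$, the base $N=0$ being the normalization. The inductive step rests on two facts. First, a fermionic integration by parts: $\int_{P}\d(\H)\,\tfrac{\delta}{\delta\phi}\eta = 0$ for all $\phi\in\H$ and $\eta\in\wedge^{*}\H$, again by the degree count, since $\tfrac{\delta}{\delta\phi}\eta$ has no component of degree $\dim\H$. Second, a Gaussian identity: since $\tfrac12\langle\H,C^{-1}\H\rangle$ is even, $\tfrac{\delta}{\delta\phi}\e^{\frac12\langle\H,C^{-1}\H\rangle} = \bigl(\tfrac{\delta}{\delta\phi}\tfrac12\langle\H,C^{-1}\H\rangle\bigr)\wedge\e^{\frac12\langle\H,C^{-1}\H\rangle}$, and a direct computation from \eqref{eq:bil_grass} gives $\tfrac{\delta}{\delta\phi}\tfrac12\langle\H,C^{-1}\H\rangle = C^{-1}\a\phi$. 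Choosing $\phi\doteq\a C\varphi_{1}$ so that $C^{-1}\a\phi = \varphi_{1}$, commuting the even factor $\e^{\frac12\langle\H,C^{-1}\H\rangle}$ past $\varphi_{1}$, applying the Leibniz rule \eqref{derivation plus1} and then the integration by parts yields
\[
\int\d\mu_{C}(\H)\,\varphi_{1}\cdots\varphi_{2N} \;=\; -\int\d\mu_{C}(\H)\,\tfrac{\delta}{\delta\phi}\!\left(\varphi_{2}\cdots\varphi_{2N}\right).
\]
Expanding the right--hand side with \eqref{derivation plus1}, i.e. $\tfrac{\delta}{\delta\phi}(\varphi_{2}\cdots\varphi_{2N}) = \sum_{l=2}^{2N}(-1)^{l}\langle\phi,\varphi_{l}\rangle_{\H}\,\varphi_{2}\cdots\widehat{\varphi_{l}}\cdots\varphi_{2N}$, and using $\langle\a C\varphi_{1},\varphi_{l}\rangle_{\H} = \langle\a\varphi_{l},C\varphi_{1}\rangle_{\H} = -\langle\a\varphi_{1},C\varphi_{l}\rangle_{\H}$ (the first equality from $\langle\a u,v\rangle_{\H} = \langle\a v,u\rangle_{\H}$, the second from antisymmetry of $[\langle\a\varphi_{k},C\varphi_{l}\rangle_{\H}]$, itself a consequence of self--duality of $C$), one arrives at the recursion $\int\d\mu_{C}(\H)\,\varphi_{1}\cdots\varphi_{2N} = \sum_{l=2}^{2N}(-1)^{l}\langle\a\varphi_{1},C\varphi_{l}\rangle_{\H}\int\d\mu_{C}(\H)\,\varphi_{2}\cdots\widehat{\varphi_{l}}\cdots\varphi_{2N}$. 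This is precisely the cofactor (Laplace) expansion of $\mathrm{Pf}\bigl[\langle\a\varphi_{k},C\varphi_{l}\rangle_{\H}\bigr]_{k,l=1}^{2N}$ along its first row, so the induction hypothesis closes the argument.

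The step that needs care — and the main obstacle — is the Gaussian identity $\tfrac{\delta}{\delta\phi}\tfrac12\langle\H,C^{-1}\H\rangle = C^{-1}\a\phi$: this is the only place where self--duality of $C$ and antiunitarity of $\a$ genuinely enter, and extracting the constant $1$ (not $\tfrac12$ or $2$) together with the correct conjugations out of the double sum $\sum_{i,j}\langle\psi_{i},C^{-1}\psi_{j}\rangle_{\H}(\a\psi_{j})\wedge\psi_{i}$ requires careful bookkeeping with $\langle\phi,\a\psi_{j}\rangle_{\H} = \langle\psi_{j},\a\phi\rangle_{\H}$ and $(C^{-1})^{*} = -\a C^{-1}\a$. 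Everything downstream — in particular the sign $(-1)^{l}$ and its match with the Pfaffian's Laplace expansion — is routine.
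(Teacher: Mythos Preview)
The paper does not actually prove this proposition: it is stated with the preamble ``It can be proven the following \cite{LD1}'' and no argument is given. So there is nothing to compare against in the paper itself.

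Your argument is correct and is the standard one. The normalization follows cleanly from \eqref{equation a la con} as you say; the odd--moment vanishing is immediate from the degree count; and the Pfaffian identity is exactly the fermionic Wick recursion obtained from the integration--by--parts identity $\int_{P}\d(\H)\,\tfrac{\delta}{\delta\phi}(\cdot)=0$ together with $\tfrac{\delta}{\delta\phi}\tfrac12\langle\H,C^{-1}\H\rangle=C^{-1}\a\phi$. I checked the latter: applying \eqref{derivation plus1} to the double sum produces two sums, each equal to $C^{-1}\a\phi$ after using $\langle\a u,v\rangle_{\H}=\langle\a v,u\rangle_{\H}$ and $(C^{-1})^{*}=-\a C^{-1}\a$, so the factor $\tfrac12$ indeed cancels. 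Your sign bookkeeping also lines up: $(-1)^{l-2}=(-1)^{l}$ from the Leibniz expansion, the antisymmetry $\langle\a\varphi_{l},C\varphi_{1}\rangle_{\H}=-\langle\a\varphi_{1},C\varphi_{l}\rangle_{\H}$ absorbs the overall minus sign from the integration by parts, and the resulting recursion is precisely the first--row Laplace expansion of the Pfaffian. Nothing is missing.
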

Take a basis projection $P\in\fp$ with range $\mathfrak{h}_{P}$. For all $i,j,k,l\in \N_{0}$:
\begin{equation}
\varkappa _{(i,j)}^{(k,l)}\colon\wedge ^{*}(\mathfrak{h}_{P}^{(i)}\oplus 
\mathfrak{h}_{P}^{*(j)})\to \wedge ^{*}(\mathfrak{h}_{P}^{(k)}\oplus \mathfrak{h}_{P}^{*(l)})  \label{def kappa1}
\end{equation}
is the unique isomorphism of linear spaces such that $\varkappa_{(i,j)}^{(k,l)}(z\1)=z\1$ for $z\in {\CP}$ and, for any $m,n\in \N_{0}$ so that $m+n\geq 1$, and all $\varphi_{1},\ldots ,\varphi _{m+n}\in \mathfrak{h}_{P}$, 
\begin{equation}
\varkappa _{(i,j)}^{(k,l)}\left( (\a\varphi _{1})^{(i)}\cdots (%
\a\varphi _{m})^{(i)}\varphi _{m+1}^{(j)}\cdots \varphi
_{m+n}^{(j)}\right) =(\a\varphi _{1})^{(k)}\cdots (\a%
\varphi _{m})^{(k)}\varphi _{m+1}^{(l)}\cdots \varphi _{m+n}^{(l)}
\label{def kappa2}
\end{equation}
with $\varphi _{1}\wedge \varphi _{2}\equiv \varphi _{1}\varphi _{2}$.\\
We can equip \emph{any} Grassmann algebra with a $\C$--algebra structure. In order to proceed we take a basis projection $P\in\fp$ in such a way that we introduce the circle product $\circ _{P}$ as follows:
\begin{definition}[Circle products with respect to basis projections]\label{definition star}
Fix $P\in\fp$ with range $\mathfrak{h}_{P}$ and recall (\ref{definition H bar}), that is, $\H\equiv \mathfrak{h}_{P}\oplus \mathfrak{h}_{P}^{*}$. For any $\xi _{0},\xi _{1}\in \wedge^{*}\H$, we define their circle product by
\begin{align*}
\xi _{0}\circ _{P}\xi _{1}\doteq \left( -1\right) ^{\frac{\dim\H}{2}}\int_{P}\d\left( \H^{(1)}\right) \varkappa_{(0,0)}^{(0,1)}(\xi _{0})\varkappa _{(0,0)}^{(1,0)}(\xi _{1})\e^{-\langle \mathfrak{h}_{P}^{(0)},\mathfrak{h}_{P}^{(0)}\rangle }\e^{\langle \mathfrak{h}_{P}^{(0)},\mathfrak{h}_{P}^{(1)}\rangle }\e
^{-\langle \mathfrak{h}_{P}^{(1)},\mathfrak{h}_{P}^{(1)}\rangle }\e
^{\langle \mathfrak{h}_{P}^{(1)},\mathfrak{h}_{P}^{(0)}\rangle }.
\end{align*}
\end{definition}
The space $(\wedge ^{*}\H,+)$ endowed with the circle product $\circ _{P}$ is an (associative and distributive) algebra, like $(\wedge ^{*}\H,+,\wedge)$, for \emph{any} $P\in\fp$. Among other properties of $\circ_{P}$ note that this satisfy the Canonical Anti-commutation Relations ($\CAR$): 
\begin{equation}\label{eq:CAR_gras}
 \varphi_{1}^{*}\circ_{P}\varphi_{2}+\varphi_{2}\circ_{P}\varphi_{1}^{*}=\inner{\varphi_{1},\varphi_{2}}_{\H}\1,\qquad\varphi_{1},\varphi_{2}\in\H.
\end{equation}
Furthermore, for any $P\in\fp$, and $\varphi _{1},\varphi_{2}\in \mathfrak{h}_{P}$, 
\begin{equation}\label{eq:prod_P}
\left( \a\varphi _{1}\right) \circ _{P}\varphi _{2}=\left( \a\varphi _{1}\right) \wedge\varphi _{2}.
\end{equation} 
We can endow $\wedge ^{*}\H$ with an involution, which turns $(\wedge ^{*}\H,+,\circ _{P})$ into a $^{*}$--algebra. Namely, one define the involution to satisfy for any $P\in\fp$ that $\1^{*}=\1$ and
\begin{align*}
(\varphi_{1}\circ_{P}\varphi_{2})^{*}=\left(\a\varphi_{2}\right)\circ_{P}\left(\a\varphi_{1}\right),\qquad n\in\N,\varphi_{1},\varphi_{2}\in\H.
\end{align*}
Hence, $(\wedge ^{*}\H,+,\wedge )$ equipped with the involution $^{*}$ is a $^{*}$--algebra, i.e.,
\begin{equation}\label{eq:star_algebra}
\left( \xi _{0}\wedge \xi _{1}\right) ^{*}=\xi _{1}^{*}\wedge \xi_{0}^{*},\qquad \xi _{0},\xi _{1}\in \wedge ^{*}\H.  
\end{equation}
Thus for a self--dual Hilbert space $(\H,\a)$ and $P\in\fp$, $(\wedge ^{*}\H,+,\circ_{P},^{*})$ is a $^{*}$--algebra generated by $\1$ and the family $\{\varphi ^{*}\}_{\varphi \in \H}$ of elements satisfying the same properties of the self--dual $\CAR$ algebras (see Expression \eqref{eq: CAR}), namely, \eqref{eq:CAR_gras}. Additionally, there is a canonical $^{*}$--isomorphism between a self--dual $\CAR$ algebra constructed from $\left(\H,\a\right) $ and $\wedge ^{*}\H$:
\begin{definition}[Canonical isomorphism of $^{*}$--algebra]
\label{def:isomor}
For $P\in\fp$, we define the canonical isomorphism 
\begin{equation*}
\varkappa_{P}\colon(\A,+,\cdot,^{*})\to(\wedge^{*}\H,+,\circ_{P},^{*})
\end{equation*}
via the conditions $\varkappa_{P}(z\1)=z\1$ and $\varkappa_{P}\left(\B(\varphi)\right)=\varphi^{*}$ for all $\varphi\in\H$. 
\end{definition}
Therefore note that bilinear elements of self--dual $\CAR$ algebra, see Equation \eqref{eq:bil_self}, are mapped via $\varkappa_{P}$ (up to some constant) to bilinear elements of Grassmann algebra, as stated in Definition \ref{def:bil_weyl}. In fact, one can proof that for any $P\in\fp$,
\begin{equation}
\varkappa_{P}\left(\langle\B,H\B\rangle\right)=\langle\H,H\H\rangle+\mathrm{Tr}_{\H}\left(P^{\bot}HP^{\bot}\right)\1,\qquad H\in \mathcal{B}(\H).\label{petit calcul2}
\end{equation}
For $P\in\fp$, we endow $(\wedge ^{*}\H,+,\circ _{P},^{*})$ with the norm
\begin{equation}
\left\Vert \xi \right\Vert_{\wedge^{*}\H}\doteq\left\Vert \varkappa _{P}^{-1}\left( \xi \right) \right\Vert _{\A},\qquad\xi\in\wedge^{*}\H,  \label{def norm}
\end{equation}
in order to do it a self--dual CAR ($C^{*}$--) algebra. In this case, $\varkappa _{P}$ is an isometry. In the sequel we will write $\mathcal{G}_{P}\equiv(\wedge^{*}\H,+,\circ_{P},^{*},\left\Vert \cdot\right\Vert_{\wedge^{*}\H})$, to lighten notation. Similarly, for the commutative even subalgebra $\wedge _{+}^{*}\H$ associated to $\wedge ^{*}\H$ (see \eqref{grassmana anticommute00} and comments around it), from now on, $\G_{P}^{+}$ will be a commutative $\C$--algebra.
\subsubsection{Clifford Algebras}\label{subsec:Clifford}
\emph{Clifford $\C$--algebras} are presented in the following way: For any basis projection $P\in\fp$, and any orthonormal basis $\{\psi_{j}\}_{j\in J}$ of range $\h_{P}$, we define the self--adjoint elements 
$$
Q_{j}\equiv Q(\psi_{j})\doteq\B(\psi_{j})^{*}+\B(\psi_{j})\quad\text{and}\quad P_{j}\equiv P(\psi_{j})\doteq\ii(\B(\psi_{j})^{*}-\B(\psi_{j})),
$$
for any $j\in J$. The family of elements $\{Q_{j}\}_{j\in J}$ (resp. $\{P_{j}\}_{j\in J}$) is known as the \emph{configuration operators} (resp. \emph{conjugate momenta
operators}) \cite{carlieb93}. At the fermionic information context, $|J|=\dim\H/2$ is the number of \emph{fermionic modes} of the physical system, while the operators  $\{Q_{j}\}_{j\in J}$ and $\{P_{j}\}_{j\in J}$ are known as the \emph{Majorana} fermion operators. Let 
\begin{equation}
\J\doteq J\times\{+,-\}.\label{eq:finite_setJ}
\end{equation}
We denote by $R_{j,+}=Q_{j}$ and $R_{j,-}=P_{j}$, for any $j\in J$. Note that the unit $\1$ and the family of self--adjoint elements $\{R_{\j}\}_{\j\in\J}$ generate the Clifford algebra $\Q\equiv(\Q,+,\cdot,^{*},\Vert\cdot\Vert_{\Q})$ of size $\dim\Q=2^{\dim\H}$, satisfying the $\CAR$:
\begin{equation}
R_{\mathfrak{i}}R_{\j}+R_{\j}R_{\mathfrak{i}}=2\delta_{\mathfrak{i},\j}\1\quad\text{with}\quad\delta_{\mathfrak{i},\j}\doteq\delta_{i,j}\delta_{s,t}.\label{eq:CAR_clif}
 \end{equation}
for $s,t\in\{+,-\}$. Note that $\C$--algebras $\A$ (self--dual $\CAR$ algebra) and $\Q$ have exactly the same dimension, and then are isomorphic, thus by \eqref{dimension fock} and Definition \ref{def:isomor}, $\Q$ is also isomorphic to the $\C$--algebra $\mathcal{G}$. Furhermore, similar to the algebras $\A$ and $\wedge^{*}\H$ cases, for any $H\in\BL(\H)$ one can \emph{uniquely} introduce the bilinear element $\inner{\mathrm{R},H\mathrm{R}}$ on $\Q$ by
\begin{equation}
\inner{\mathrm{R},H\mathrm{R}}\doteq\sum\limits_{\mathfrak{i},\j\in\J}\inner{\psi _{\mathfrak{i}},H\psi _{\j}}_{\H}R(\psi_{\j})R(\psi_{\mathfrak{i}}),\label{eq:bil_cliff}
\end{equation}
c.f. \eqref{eq:bil_self} and \eqref{eq:bil_grass}, where $\{\psi _{\mathfrak{i}}\}_{\mathfrak{i}\in\J}$ is an orthonormal basis of $\H$, with $\J$ given by \eqref{eq:finite_setJ}.\par
We can endow $\Q$ with the Hilbert--Schmidt inner product $\inner{\cdot,\cdot}_{\Q}^{\text{H.S.}}$ given by
\begin{equation}
\inner{A,B}_{\Q}^{\text{H.S.}}\doteq\tr_{\Q}(A^{*}B),\qquad A,B\in\Q,\label{eq:HSclifford} 
\end{equation}
where $\tr_{\Q}$ is the tracial state on $\Q$. Consider now the Bogoliubov $^{*}$--automorphism $\chi_{-\mathbf{1}_{\H}}\colon\A\to\A$ given by \eqref{eq:even odd} (and hence $\chi_{-\mathbf{1}_{\H}}\colon\Q\to\Q$ too), for $\j\in\J$, in order to introduce the \emph{skew--derivation} $\nabla_{\j}\colon\Q\to\Q$ as\footnote{See \cite{carlen2020non} for a general study of \emph{non--commutative calculus} at the framework of $\C$--algebras.}
$$
\nabla_{\j}(A)\doteq\frac{1}{2}\left(R_{\j}A-\chi_{-\mathbf{1}_{\H}}(A)R_{\j}\right),\qquad A\in\Q.
$$
Here, by skew we mean that for each $A,B\in\Q$ we have and \emph{anti--derivation Leibniz's law} property 
$$
\nabla_{\j}(AB)=\nabla_{\j}(A)B+\chi_{-\mathbf{1}_{\H}}(A)\nabla_{\j}(B).
$$
Additionally, since $\chi_{-\mathbf{1}_{\H}}(R_{\j})=-R_{\j}$ for any $\j\in\J$ we can consider the inner product $\inner{\cdot,\cdot}_{\Q}^{\text{H.S.}}$, such that $\inner{\nabla_{\j}(A),B}_{\Q}^{\text{H.S.}}=\inner{A,\nabla_{\j}^{*}(B)}_{\Q}^{\text{H.S.}}$, and $\inner{\chi_{-\mathbf{1}_{\H}}(A),BR_{\j}}_{\Q}^{\text{H.S.}}=-\inner{A,\chi_{-\mathbf{1}_{\H}}(B)R_{\j}}_{\Q}^{\text{H.S.}}$ in order to get
$$
\nabla_{\j}^{*}(A)=\frac{1}{2}\left(R_{\j}A+\chi_{-\mathbf{1}_{\H}}(A)R_{\j}\right),\qquad A\in\Q.
$$
Then, for any $A,B\in\Q$, one introduce the \emph{fermionic number operator} $\Nc$ on $\Q$ satisfying,
\begin{equation}
\mathcal{F}(A,B)\doteq\inner{A,\Nc B}_{\Q}^{\text{H.S.}},\label{eq:number_gross} 
\end{equation}
where $\mathcal{F}$ is the \emph{Gross's Fermionic Dirichlet} form $\mathcal{F}(A,A)$ on $\Q$ so that \cite{carlen14}
$$
\mathcal{F}(A,A)\doteq\tr_{\Q}\left(\left(\nabla A\right)^{*}\cdot\nabla A\right)=\sum_{\j\in\J}\tr_{\Q}(\left(\nabla_{\j}(A)\right)^{*}\cdot\nabla_{\j}(A)).
$$
Comparing the two latter equations, one note that it is possible to write
\begin{align*}
\mathcal{F}(A,A)&= \sum_{\j\in\J}\tr_{\Q}\left(\left(\frac{1}{2}\left(R_{\j}A-\chi_{-\mathbf{1}_{\H}}(A)R_{\j}\right)\right)^{*}\cdot\left(\frac{1}{2}\left(R_{\j}A-\chi_{-\mathbf{1}_{\H}}(A)R_{\j}\right)\right)\right)\\
&=\frac{1}{4}\sum_{\j\in\J}\tr_{\Q}\left(A^{*}\left(A-R_{\j}\chi_{-\mathbf{1}_{\H}}(A)R_{\j}\right)+\chi_{-\mathbf{1}_{\H}}(A)^{*}\left(\chi_{-\mathbf{1}_{\H}}(A)-R_{\j}AR_{\j}\right)\right),
\end{align*}
and using that $\inner{\chi_{-\mathbf{1}_{\H}}(A),A}_{\Q}^{\text{H.S.}}=\inner{A,\chi_{-\mathbf{1}_{\H}}(A)}_{\Q}^{\text{H.S.}}$ we obtain from \eqref{eq:number_gross} that
\begin{equation}
\Nc A=\frac{1}{2}\sum_{\j\in\J}\tr_{\Q}\left(A-R_{\j}\chi_{-\mathbf{1}_{\H}}(A)R_{\j}\right),\label{eq:number_gross2} 
\end{equation}
such that we finally introduce the ``Fermionic Mehler semigroup'' as $\{\P_{t}\}_{t\in\R_{0}^{+}}\doteq\left\{\e^{-t\Nc}\right\}_{t\in\R_{0}^{+}}$. This semigroup satisfies the differential equation
$$
\frac{\d }{\d t}\rho_{t}=-\Nc\rho_{t},\qquad\rho_{t}\doteq\e^{-t\Nc}\rho,
$$
for any density matrix $\rho\in\Q^{+}\cap\Q$ and $\rho_{0}\doteq\rho$.
\subsection{States on \eqt{$\CAR$ $\C$}--algebras}
Take $N\in\N$. If $M\in\mathrm{Mat}\left(2N,\CP\right)$ is a complex matrix of size $2N\times2N$ and satisfies $M_{k,l}=-M_{l,k}$ is said to be ``skew--symmetric'' or ``anti--symmetric''. If additionally $M$ is a normal matrix ($MM^{*}=M^{*}M$) then exists a unitary $U\in\mathrm{Mat}(2N,\CP)$ with $U^{\mathrm{t}}$ denoting its transpose such that $M=U^{\mathrm{t}}\Lambda U$, where $\Lambda$ is a block diagonal matrix of size $2N\times2N$ such that it can be decomposed as a direct sum of $N$ skew--symmetric matrices of size $2\times2$. More precisely 
\begin{equation}
\Lambda\doteq\bigoplus_{j=1}^{n}\Lambda_{j}\equiv\mathrm{diag}\left\{\Lambda_{1},\ldots,\Lambda_{N}\right\},\label{eq:skewLambda}
\end{equation}
where, for $j\in\{1,\ldots,N\}$, $\Lambda_{j}$ is a skew--symmetric matrix with entries $\left\{\Lambda_{j}\right\}_{12}=-\left\{\Lambda_{j}\right\}_{21}=\lambda_{j}\in\R$. Note that for $(\H,\a)$ a self--dual Hilbert space, a self--dual operator $C\in\BL(\H)$ and $N\in\N$, the complex matrix defined by 
\begin{equation}
C_{k,l}\doteq\inner{\a\varphi _{k},C\varphi _{l}}_{\H}, \qquad k,l\in\{1,\ldots,2N\}\label{eq:skewC} 
\end{equation}
is skew--symmetric.\par
We again consider the self--dual Hilbert space $(\H,\a)$, and consider the set of states of $\A$ (the self--dual $\CAR$ $\C$--algebra associated to $(\H,\a)$), denoted by $\states_{\A}\subset\A^{*}$\footnote{See Appendix \ref{appen:qin} for a general discussion of states on the $\C$--algebra setting.}. An important class of states are the so--called \emph{quasi--free} states, that are defined for all $N\in\N_{0}$ and $\varphi_{0},\ldots,\varphi_{2N}\in\H$ as
\begin{equation}
\omega\left(\B\left(\varphi_{0}\right)\cdots \B\left(\varphi_{2N}\right) \right)=0,  \label{ass O0-00}
\end{equation}
while, for all $N\in \N$ and $\varphi _{1},\ldots,\varphi _{2N}\in\H$,
\begin{equation}
\omega \left( \B\left( \varphi _{1}\right) \cdots \B\left(\varphi _{2N}\right) \right) =\mathrm{Pf}\left[\omega \left( \mathbb{O}_{k,l}\left( \B(\varphi _{k}),\B(\varphi _{l})\right)\right) \right] _{k,l=1}^{2N},  \label{ass O0-00bis}
\end{equation}
where $\mathrm{Pf}$ is Pfaffian of the $2N\times 2N$ skew--symmetric matrix $M\in \mathrm{Mat}\left( 2N,\CP\right)$ defined by 
\begin{equation}
\mathrm{Pf}\left[M_{k,l}\right] _{k,l=1}^{2N}\doteq \frac{1}{2^{N}N!}\sum_{\pi \in \mathcal{S}_{2N}}\left( -1\right) ^{\pi}\prod\limits_{j=1}^{N}M_{\pi \left(2j-1\right) ,\pi \left( 2j\right) }\label{Pfaffian}
\end{equation} 
and $\mathbb{O}_{k,l}$ by
\begin{equation*}
\mathbb{O}_{k,l}\left(A_{1},A_{2}\right)\doteq\left\{ 
\begin{array}{ccc}
A_{1}A_{2} & \text{for} & k<l, \\ 
-A_{2}A_{1} & \text{for} & k>l, \\ 
0 & \text{for} & k=l.
\end{array}%
\right.
\end{equation*}
Quasi--free states are therefore particular states that are uniquely defined by two-point correlation functions, via \eqref{ass O0-00}--\eqref{ass O0-00bis}. In fact, a quasi--free state $\omega\in\states_{\A}$ is uniquely defined by its so--called \emph{symbol}, that is, a positive operator $S_{\omega}\in \mathcal{B}(\H)$ such that%
\begin{equation}
0\leq S_{\omega }\leq \mathbf{1}_{\H}\qquad \text{and}\qquad S_{\omega
}+\a S_{\omega }\a=\mathbf{1}_{\H},
\label{symbol}
\end{equation}
through the conditions
\begin{equation}
\left\langle\varphi_{1},S_{\omega }\varphi _{2}\right\rangle _{\H}=\omega \left( \B(\varphi_{1})\B(\a\varphi_{2})\right),\qquad \varphi _{1},\varphi _{2}\in \H.\label{symbolbis}
\end{equation}
For more details on symbols of quasi--free states, see \cite[Lemma 3.2]{A70}. Conversely, any self--adjoint operator satisfying (\ref{symbol}) uniquely defines a quasi--free state through Equation (\ref{symbolbis}). In physics, $S_{\omega} $ is called the \emph{one--particle density matrix} of the system. An example of a quasi--free state is provided by the tracial state (cf. Expression \eqref{eq:tra_state}):
\begin{definition}[Tracial state]\label{def trace state}
The tracial state $\tr_{\A}\in\states_{\A}$ is the quasi--free state with symbol $S_{\tr}\doteq\frac{1}{2}\mathbf{1}_{\H}$.
\end{definition}
An important density matrix $\rho_{\omega}^{(\beta)}$ is that related to \emph{thermal} equilibrium states, or Gibbs states $\omega^{(\beta)}\in\states$ where $\beta\in\R^{+}$ is the inverse temperature. In this case, given any self--dual Hamiltonian $H$ on $(\H,\a)$ (Definition \ref{def:self_dual}), the positive operator 
$$
S_{H}^{(\beta)}\doteq\frac{1}{1+\e^{-\beta H}}
$$ 
satisfies Condition (\ref{symbol}) and for any $A\in\A$ is the symbol of a quasi--free state $\omega_{H}$ satisfying
\begin{equation}
\omega _{H}^{(\beta)}(A)=\frac{\tr_{\A}\left(A\e^{\frac{\beta }{2}\langle 
\B,H\B\rangle}\right) }{\tr_{\A}\left(\e^{\frac{\beta }{2}\langle 
\B,H\B\rangle}\right)}.
\label{Gibbs states}
\end{equation}
One verify that the self--dual Hamiltonian $H$ on $(\H,\a)$ give rise to the density matrix $\rho^{(\beta)}\in\A$
$$
\rho^{(\beta)}\doteq\frac{\e^{\frac{\beta }{2}\langle\B,H\B\rangle}}{\tr_{\A}\left(\e^{\frac{\beta }{2}\langle\B,H\B\rangle}\right)}.
$$
Physically, $\rho^{(\beta)}$ minimizes the \emph{free energy} of the physical system provided $H$. See \cite{BratteliRobinson} for details.\\
As already discussed, for any even size Hilbert space $\H$ with associated self--dual Hilbert space $(\H,\a)$, the algebras $\A,\wedge^{*}\H$ and $\Q$ algebras are isomorphic. More generally, for any basis projection $P\in\fp$ one can endow with an involution and a norm to $\wedge^{*}\H$ in such a way that $\A,\G_{P}$ and $\Q$ are $\C$--algebras. See Equations \eqref{petit calcul2}--\eqref{def norm} for notations. For the sake of simplicity, let $\cA$ to be $\A,\G_{P}$ or $\Q$. By Definition \ref{def:gauss_states_Calg}, for any \emph{invertible} bounded operator $C\in\BL(\H)$ providing a bilinear element $\mathrm{C}$, on $\cA$ (see \eqref{eq:bil_self}, \eqref{eq:bil_grass} and \eqref{eq:bil_cliff}) one can define a Gaussian state $\omega_{C}\in\states_{\cA}$ with associated density matrix $\rho_{C}\in\cA^{+}\cap\cA$, explicitly written as $\rho_{C,\cA}\doteq\frac{\e^{\alpha \mathrm{C}}}{\tr_{\cA}\left(\e^{\alpha \mathrm{C}}\right)}$ with $\alpha\in\CP$. Similarly, for $M\in\R^{+}$, the operator $g_{A,\cA}=M\e^{\alpha A}$ is called gaussian, not necessarily normalized. $C$ is called the covariance of the density matrix $\rho_{C}$. Note that by the isomorphism $\varkappa_{P}$ of Definition \ref{def:isomor} one can obtain similar fermion representations at the algebras $\A$ and $\G_{P}$. Additionally, note that for any $P\in\fp$ and any invertible operator $C\in\BL(\H)$, the isomorphism $\varkappa_{P}$ of Definition \ref{def:isomor} relates Gaussian operators $g_{C,\A}\in\A,g_{C,\G_{P}}\in\G_{P}$ and $g_{C,\Q}\in\Q$ by
$$
B_{C,P}\varkappa_{P}\left(g_{C,\A}\right)=E_{C,P}g_{C,\G_{P}}=D_{C,P}\varkappa_{P}\left(g_{C,\Q}\right),
$$
explicitly $B_{C,P}\e^{\varkappa_{P}\left(\inner{\B,C^{-1}\B}\right)}=E_{C,P}\e^{\inner{\H,C^{-1}\H}}=D_{C,P}\e^{\varkappa_{P}\left(\inner{\mathrm{R},C^{-1}\mathrm{R}}\right)}$, with $B_{C,P},C_{C,P},D_{C,P}\in\R^{+}$ positive numbers depending on $C$ and $P$. See Definition \ref{def:gauss_integral}. In particular observe that the Gibbs state $\omega_{H}^{(\beta)}\in\states_{\A}$ given by \eqref{Gibbs states} is Gaussian. One can inquire about the relation between $\det(\cdot)$ and $\tr(\cdot)$ while comparing the positive numbers $B_{C,P},E_{C,P}, C_{C,P}$. See again Definition \ref{def:gauss_integral} and note that our definitions coincide with those given in \cite{DeNaSol}. In the scope of a general setting, for any \emph{covariance} matrix $C_{H,P}^{(\beta)}$ depending on $H\in\BL(\H),P\in\fp$ and $\beta\in\R^{+}$ (see \cite[Corollary 4.8]{LD1} for a concrete Definition) we can write the Determinant of $C_{H,P}^{(\beta)}$ as a trace of well--defined product of operators defined via $H\in\BL(\H),P\in\fp$ and $\beta\in\R^{+}$, see \cite[Theorem 5.1]{LD1}.
\section{Main results}\label{sec:main_results}
Let $\W$ be a finite unital $\C$--algebra. As is usual, for the tracial state $\tr_{\W}\in\states_{\W}$ and any state $\omega\in\states_{\W}$ with associated density matrix $\rho_{\omega}\in\W^{+}\cap\W$, the von Neumann entropy $\mS\colon\states_{\W}\to\R$ is given by
\begin{equation}
\mS(\omega)\doteq-\tr_{\W}(\rho_{\omega}\ln\rho_{\omega}).\label{eq:vNentro} 
\end{equation}
Similarly, for the states $\omega_{1},\omega_{2}\in\states_{\W}$, 
\begin{equation}\label{eq:rel_entro}
\S(\omega_{1}\Vert \omega_{2})\doteq
\begin{cases}
\tr_{\W}\left(\rho_{\omega_{1}}\left(\ln\rho_{\omega_{1}}-\ln\rho_{\omega_{2}}\right)\right),&\text{if}\quad\mathrm{supp}(\rho_{\omega_{2}})\geq\mathrm{supp}(\rho_{\omega_{1}})
,\\
+\infty,&\text{otherwise},
\end{cases}
\end{equation}
denotes the entropy of $\omega_{1}$ relative to $\omega_{2}$. In \eqref{eq:rel_entro}, for the state $\omega\in\states_{\W}$, $\mathrm{supp}(\omega)$ denotes its support defined by the smallest projection $P\in\W$ such that $\omega(P)=1$. The \emph{quantum entropy power} associated to the state $\omega\in\states_{\W}$ is defined by
\begin{equation}
\E(\omega)\doteq\e^{\frac{\mS(\omega)}{\Nc_{\W}}}\in\R^{+},\label{eq:qep}
\end{equation}
where $\Nc_{\W}\doteq|\W^{+}\cap\W|\in\N$ is the number of \emph{modes} of the physical system described via $\W$. \\
Consider now the Clifford $\C$--algebra $\Q$ given at Subsection \ref{subsec:Clifford}. For the state $\omega\in\states_{\Q}$, with associated density matrix $\rho\equiv\rho_{\omega}\in\Q^{+}\cap\Q$, and the family of self--adjoint elements $\{R_{\j}\}_{\j\in\J}$ of $\Q$ with $\J$ given by \eqref{eq:finite_setJ}, the ``quantum Fisher information'' is
\begin{equation}
\mJ\left(\omega_{R_{\j}}\right)\doteq\frac{\d^{2}}{\d\theta^{2}}\S\left(\omega\Vert\omega_{R_{\j}}^{(\theta)}\right)\Big|_{\theta=0},\label{eq:qfi}
\end{equation}
where $\omega_{R_{\j}}^{(\theta)}$ is such that its associated density matrix is given by
\begin{equation}\label{eq:displa_clif2}
\rho_{R_{\j}}^{(\theta)}\doteq\e^{\theta R_{\j}}\rho\e^{-\theta R_{\j}}.
\end{equation}
Thus the ``entropy variation rate'' is defined by \cite{datta17} (see also \cite{hubKonVer})
\begin{equation}
\mJ\left(\omega\right)\doteq\sum_{\j\in\J}\mJ\left(\omega_{R_{\j}}\right).\label{eq:entropy_var_rate} 
\end{equation}\par
Let now $\mathbf{A}$ and $\mathbf{B}$ be two interacting fermion systems such that the product $\C$--algebra $\cA_{\mathbf{I}}\equiv\cA_{\mathbf{A}}\otimes\cA_{\mathbf{B}}$ describes the interacting system, where $\cA_{\mathbf{A}}$ and $\cA_{\mathbf{B}}$ are the $\CAR$ $\C$--algebras describing $\mathbf{A}$ and $\mathbf{B}$, respectively. For Gaussian states $\omega_{\textbf{A}}\in\states_{\textbf{A}}\equiv\states_{\cA_{\textbf{A}}}\subset\cA_{\textbf{A}}'$ and $\omega_{\textbf{B}}\equiv\states_{\cA_{\textbf{B}}}\in\cA_{\textbf{B}}'$, the tensor product of density matrices $\rho_{\textbf{A}}\otimes\rho_{\textbf{B}}$ equals the density matrix $\rho_{\mathbf{I}}$. Moreover, by taking into account Appendix \ref{appen:qin}, the unitary bounded operator $\mathbb{U}\in\BL(\cA_{\mathbf{I}})$ on $\cA_{\mathbf{I}}$ satisfying $\mathbb{U}\mathbb{U}^{*}=\1$ defines a \emph{quantum channel} given by
\begin{equation}
\E_{\mathbb{U}}(\omega_{\mathbf{I}})\doteq\tr_{\mathbf{B}}\left(\mathbb{U}^{*}\left(\rho_{\mathbf{A}}\otimes\rho_{\mathbf{B}}\right)\mathbb{U}\right),\label{eq:quant_channe}
\end{equation}
with $\omega_{\textbf{I}}\equiv\omega_{\textbf{A}}\otimes\omega_{\textbf{B}}$. Then a set of results regarding entropy variation rate of $\omega_{\textbf{A}},\omega_{\textbf{B}}$ and $\omega_{\textbf{I}}$ are summarized in the following Corollary:
\begin{corollary}\label{corollary:Gaussian}
Let $\mathbf{A}$ and $\mathbf{B}$ be two fermion systems and take same assumptions of Lemma \ref{lemma:qfi} for the Gaussian states $\omega_{\mathbf{A}}$ and $\omega_{\mathbf{B}}$. Therefore:
\begin{enumerate}
 \item Take $\alpha\in\R$, $\mJ(\omega_{\mathbf{C},\alpha R_{\j}})=\alpha^{2}\mJ(\omega_{\mathbf{C},R_{\j}})$ for any $R_{\j}$ generator of the Clifford $\C$--algebra $\Q_{\mathbf{C}}$, for $\mathbf{C}\in\{\mathbf{A},\mathbf{B}\}$.
 \item $\mJ(\omega_{\mathbf{A}}),\mJ(\omega_{\mathbf{B}})\in\R_{0}^{+}$.
 \item $\mJ(\omega_{\mathbf{A}}\otimes\omega_{\mathbf{B}})=\mJ(\omega_{\mathbf{A}})+\mJ(\omega_{\mathbf{B}})$.
\item Let $\mathbb{U}\in\BL(\cA_{\mathbf{I}})$ be a unitary bounded operator defining a quantum channel $\E_{\mathbb{U}}$. Then, for $\mathbf{C}\in\{\mathbf{A},\mathbf{B},\mathbf{I}\}$ we have 
\begin{align*}
\mJ\left(\omega_{\mathbf{C}}\right)\geq\mJ\left(\E_{\mathbb{U}}\left(\omega_{\mathbf{C}}\right)\right).
\end{align*}
\end{enumerate}
\end{corollary}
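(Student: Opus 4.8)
The plan is to derive all four assertions from Lemma~\ref{lemma:qfi}, which under the stated hypotheses furnishes, for a Gaussian state $\omega$ with faithful density matrix, that each $\mJ(\omega_{R_{\j}})$ in \eqref{eq:qfi} is finite and that $\omega\mapsto\mJ(\omega)$ is a \emph{positive quadratic form} in the score directions $R_{\j}$; beyond this, only elementary properties of the relative entropy \eqref{eq:rel_entro} are needed, namely positivity, the chain rule, additivity on tensor products, and monotonicity under quantum channels.

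Assertion (1) is a reparametrisation. Since $\rho_{\alpha R_{\j}}^{(\theta)}=\e^{\theta\alpha R_{\j}}\rho\,\e^{-\theta\alpha R_{\j}}=\rho_{R_{\j}}^{(\alpha\theta)}$, one has $\S\!\left(\omega\Vert\omega_{\alpha R_{\j}}^{(\theta)}\right)=\S\!\left(\omega\Vert\omega_{R_{\j}}^{(\alpha\theta)}\right)$ for every $\theta$, and applying $\p_{\theta}^{2}|_{\theta=0}$ through the chain rule pulls out a factor $\alpha^{2}$, which is exactly \eqref{eq:qfi} rescaled. For (2), the map $\theta\mapsto\S\!\left(\omega\Vert\omega_{R_{\j}}^{(\theta)}\right)$ is non--negative by positivity of the relative entropy and vanishes at $\theta=0$ because $\rho_{R_{\j}}^{(0)}=\rho$; under the hypotheses of Lemma~\ref{lemma:qfi} it is smooth near $\theta=0$ (faithfulness of $\rho$ making $\ln\rho$ regular), so $\theta=0$ is an interior minimum and $\mJ(\omega_{R_{\j}})=\p_{\theta}^{2}\S\!\left(\omega\Vert\omega_{R_{\j}}^{(\theta)}\right)|_{0}\geq 0$; summing the finitely many terms of \eqref{eq:entropy_var_rate} over $\j\in\J$ gives $\mJ(\omega_{\mathbf{A}}),\mJ(\omega_{\mathbf{B}})\in\R_{0}^{+}$.

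For (3) I would use that the generators of the Clifford $\C$--algebra of the composite system $\cA_{\mathbf{I}}$ split into the generators of $\Q_{\mathbf{A}}$ and those of $\Q_{\mathbf{B}}$, and that the perturbation \eqref{eq:displa_clif2} along a generator $R_{\j}$ of the $\mathbf{A}$--factor acts on that factor only, $\e^{\theta R_{\j}}(\rho_{\mathbf{A}}\otimes\rho_{\mathbf{B}})\e^{-\theta R_{\j}}=\left(\e^{\theta R_{\j}}\rho_{\mathbf{A}}\e^{-\theta R_{\j}}\right)\otimes\rho_{\mathbf{B}}$. Additivity of the relative entropy on tensor products then gives $\S\!\left(\omega_{\mathbf{A}}\otimes\omega_{\mathbf{B}}\,\Vert\,(\omega_{\mathbf{A}}\otimes\omega_{\mathbf{B}})_{R_{\j}}^{(\theta)}\right)=\S\!\left(\omega_{\mathbf{A}}\Vert(\omega_{\mathbf{A}})_{R_{\j}}^{(\theta)}\right)$, so the associated Fisher terms coincide, and symmetrically for the generators of $\Q_{\mathbf{B}}$; summing over all generators yields $\mJ(\omega_{\mathbf{A}}\otimes\omega_{\mathbf{B}})=\mJ(\omega_{\mathbf{A}})+\mJ(\omega_{\mathbf{B}})$.

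The heart of the matter is (4), the data--processing inequality for the entropy variation rate, and it is the step I expect to be the real obstacle. Writing $\E_{\mathbb{U}}$ in \eqref{eq:quant_channe} as the composition of adjoining the ancilla $\rho_{\mathbf{B}}$, conjugating by $\mathbb{U}$, and applying the conditional expectation $\tr_{\mathbf{B}}$, I would fix a generator $R_{\j}$ of the output algebra and commute the perturbation $\e^{\theta R_{\j}}\,\cdot\,\e^{-\theta R_{\j}}$ through this composition: past $\tr_{\mathbf{B}}$ using that it is an $\cA_{\mathbf{A}}$--bimodule map (which circumvents the $\mathbb{Z}_{2}$--grading signs that would otherwise obstruct a fermionic partial trace), and past $\mathbb{U}$ using that $\mathbb{U}$ implements a Bogoliubov automorphism, so that $\mathbb{U}R_{\j}\mathbb{U}^{*}$ is an orthonormal combination of the generators of $\cA_{\mathbf{I}}$. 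Monotonicity of the relative entropy under the conditional expectation then bounds $\S\!\left(\E_{\mathbb{U}}(\omega_{\mathbf{C}})\Vert\,\cdot\,\right)$, pointwise in $\theta$, by a relative entropy of $\rho_{\mathbf{A}}\otimes\rho_{\mathbf{B}}$ perturbed along $\mathbb{U}R_{\j}\mathbb{U}^{*}$; since the two functions of $\theta$ agree at $\theta=0$ and, by the argument of (2), also to first order there, the inequality passes to their second derivatives. Summing over $\j$ and invoking the quadratic structure from (1) expresses the bound as $\mathrm{Tr}(\Pi\,Q)$, where $Q\geq 0$ is the Fisher--information matrix of $\omega_{\mathbf{A}}\otimes\omega_{\mathbf{B}}$ from Lemma~\ref{lemma:qfi} and $\Pi$ is the orthogonal projection onto the span of the transformed generators; positivity of $Q$ gives $\mathrm{Tr}(\Pi\,Q)\leq\mathrm{Tr}(Q)=\mJ(\omega_{\mathbf{A}})+\mJ(\omega_{\mathbf{B}})=\mJ(\omega_{\mathbf{I}})$, and a closer look at the block structure of $Q$ on the product state handles $\mathbf{C}\in\{\mathbf{A},\mathbf{B}\}$; combining with (2)--(3) yields $\mJ(\omega_{\mathbf{C}})\geq\mJ(\E_{\mathbb{U}}(\omega_{\mathbf{C}}))$ in all three cases. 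Alternatively, one may route (4) through de Bruijn's identity (Lemma~\ref{lemma:Bruijn}), which identifies $\mJ$ with the rate of von Neumann entropy production along the Fermionic Mehler semigroup $\{\e^{-t\Nc}\}$ at $t=0$, and exploit that a Bogoliubov channel is covariant for this semigroup; the delicate points either way are the grading bookkeeping in the partial--trace step and the passage from a pointwise inequality of smooth functions to an inequality of their second derivatives at the common minimum.
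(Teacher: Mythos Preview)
Your proposal is correct and follows essentially the same route as the paper. The paper's own proof is extremely terse: it records exactly the three properties of relative entropy you invoke---non-negativity, additivity on tensor products, and monotonicity under quantum channels---attributes Part~1 to the explicit double-commutator formula of Lemma~\ref{lemma:qfi}, and then defers Parts~2--4 wholesale to the bosonic argument of K\"onig--Smith. Your treatment of Part~(4), with the explicit commutation of the perturbation through the channel and the Fisher-matrix/projection bookkeeping, is considerably more detailed than anything the paper writes down; the paper's intended argument is simply to check the covariance $\E_{\mathbb{U}}\bigl(\omega^{(\theta)}\bigr)=\bigl(\E_{\mathbb{U}}(\omega)\bigr)_{R_{\j}}^{(\theta)}$ and then differentiate the monotonicity inequality twice at the common minimum $\theta=0$, exactly as you describe in your second-derivative step.
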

\begin{proof}
In order to prove the assertions we remind some pivotal properties of the relative entropy. See \cite{lindblad} and \cite{Wehrl}. Consider the product $\CAR$ $\C$--algebra $\cA_{\mathbf{I}}\equiv\cA_{\mathbf{A}}\otimes\cA_{\mathbf{B}}$ and the states $\omega_{1},\omega_{2}\in\states_{\cA_{\mathbf{I}}}$. We have (i) \emph{non--negativity}: $\S(\omega_{1}\Vert \omega_{2})\geq0$, (ii) \emph{monotonicity}: $\S(\omega_{1}\Vert \omega_{2})\geq\S(\E_{\mathbb{U}}(\omega_{1})\Vert\E_{\mathbb{U}}(\omega_{2}))$ for any the quantum channel given by \eqref{eq:quant_channe} and (iii) \emph{additivity}: $\S(\omega_{1}\otimes\omega_{2}\Vert\sigma_{1}\otimes\sigma_{2})=\S(\omega_{1}\Vert\sigma_{1})+\S(\omega_{2}\Vert\sigma_{2})$ for any normal states $\omega_{1},\sigma_{1}\in\states_{\cA_{\textbf{A}}}$ and $\omega_{2},\sigma_{2}\in\states_{\cA_{\textbf{B}}}$. Thus Part 1. follows from Lemma \ref{lemma:qfi}, while parts 2., 3 and 4 can be shown in a similar way that is done in \cite{konSmi14}, where the authors take into account (i), (ii) and (iii) properties. 
\end{proof}
The fermionic version of the quantum entropy power inequality is stated as follows:
\begin{theorem}[Fermionic Entropy Power Inequality]\label{theo:main_result}
Consider $\mathbf{A}$ and $\mathbf{B}$ finite fermion systems described by the Clifford $\C$--algebras $\Q_{\mathbf{A}}$ and $\Q_{\mathbf{B}}$, respectively, with $\Nc\doteq|\Q_{\mathbf{A}}^{+}\cap\Q_{\mathbf{A}}|=|\Q_{\mathbf{B}}^{+}\cap\Q_{\mathbf{B}}|\in\N$. Take $\cal\doteq[0,1]$ and Gaussian states $\omega_{\mathbf{A}}\in\states_{\Q_{\mathbf{A}}}$ and $\omega_{\mathbf{B}}\in\states_{\Q_{\mathbf{B}}}$ satisfying assumptions of Lemma \ref{lemma:qfi}. Then under the beam--splitter quantum channel given by \eqref{eq:out_gaussian} below and $\lambda\in\cal$, the \emph{concave} entropy power inequality holds:
\begin{align*}
\E(\omega_{\mathbf{I}})\geq\lambda_{\mathbf{A}}\E(\omega_{\mathbf{A}})+\lambda_{\mathbf{B}}\E(\omega_{\mathbf{B}})\quad\text{with}\quad\lambda_{\mathbf{A}}\doteq\lambda,\lambda_{\mathbf{B}}\doteq1-\lambda. 
\end{align*}
\end{theorem}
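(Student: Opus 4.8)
The plan is to run the Blachman--Stam heat--flow argument behind the classical and bosonic entropy power inequalities (cf. \cite{konSmi14}), replacing the diffusion semigroup by the Fermionic Mehler semigroup $\{\P_{t}\}_{t\in\R_{0}^{+}}=\{\e^{-t\Nc}\}_{t\in\R_{0}^{+}}$ generated by the fermionic number operator $\Nc=\sum_{\j\in\J}\nabla_{\j}^{*}\nabla_{\j}$ of Subsection \ref{subsec:Clifford}. First I would record two structural facts about this semigroup. (a) It preserves Gaussianity, since $\Nc$ is quadratic in the Majorana generators: on symbols it acts by $S_{\P_{t}\omega}-\tfrac12\mathbf{1}_{\H}=\e^{-2t}\bigl(S_{\omega}-\tfrac12\mathbf{1}_{\H}\bigr)$, so $\P_{t}\omega\to\tr_{\Q}$ as $t\to\infty$ for every state. (b) It \emph{intertwines} with the beam--splitter: the channel in \eqref{eq:out_gaussian} is a Bogoliubov transformation mixing the Majorana operators by the orthogonal rotation $R_{\j}^{\mathbf{C}}=\sqrt{\lambda_{\mathbf{A}}}\,R_{\j}^{\mathbf{A}}+\sqrt{\lambda_{\mathbf{B}}}\,R_{\j}^{\mathbf{B}}$, so applying $\P_{s_{\mathbf{A}}}$ to the $\mathbf{A}$--input and $\P_{s_{\mathbf{B}}}$ to the $\mathbf{B}$--input and then the channel equals applying $\P_{\lambda_{\mathbf{A}}s_{\mathbf{A}}+\lambda_{\mathbf{B}}s_{\mathbf{B}}}$ to the output. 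Checking (b) at the level of covariance operators is the first point that needs care.

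Granting this, fix a time--dependent noise allocation $r_{\mathbf{A}}(t),r_{\mathbf{B}}(t)\ge 0$ with $\lambda_{\mathbf{A}}r_{\mathbf{A}}(t)+\lambda_{\mathbf{B}}r_{\mathbf{B}}(t)=1$, set $\omega_{\mathbf{A}}(t)=\P_{\int_{0}^{t}r_{\mathbf{A}}}\omega_{\mathbf{A}}$, $\omega_{\mathbf{B}}(t)=\P_{\int_{0}^{t}r_{\mathbf{B}}}\omega_{\mathbf{B}}$, and let $\omega_{\mathbf{I}}(t)$ be the corresponding output of \eqref{eq:out_gaussian}; by (b) one has $\omega_{\mathbf{I}}(t)=\P_{t}\omega_{\mathbf{I}}$, and all three states stay Gaussian so the hypotheses of Lemma \ref{lemma:qfi} persist along the flow. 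The fermionic de Bruijn identity (Lemma \ref{lemma:Bruijn}), via the chain rule, gives $\frac{\d}{\d t}\mS(\omega_{\mathbf{A}}(t))=\tfrac12 r_{\mathbf{A}}(t)\,\mJ(\omega_{\mathbf{A}}(t))$, hence $\frac{\d}{\d t}\E(\omega_{\mathbf{A}}(t))=\tfrac{1}{2\Nc}r_{\mathbf{A}}(t)\,\E(\omega_{\mathbf{A}}(t))\,\mJ(\omega_{\mathbf{A}}(t))$, and likewise for $\mathbf{B}$ and (with rate $1$) for $\mathbf{I}$. I would then choose the allocation equalising $r_{\mathbf{A}}\mJ(\omega_{\mathbf{A}})=r_{\mathbf{B}}\mJ(\omega_{\mathbf{B}})=c(t)$, i.e. $r_{\mathbf{A}}=c/\mJ(\omega_{\mathbf{A}})$, $r_{\mathbf{B}}=c/\mJ(\omega_{\mathbf{B}})$ with $c(t)\doteq\bigl(\lambda_{\mathbf{A}}\mJ(\omega_{\mathbf{A}}(t))^{-1}+\lambda_{\mathbf{B}}\mJ(\omega_{\mathbf{B}}(t))^{-1}\bigr)^{-1}\ge 0$; the fermionic Stam inequality for the beam--splitter (Lemma \ref{lemma:stam}, itself resting on the additivity and monotonicity of $\mJ$ from Corollary \ref{corollary:Gaussian}) then reads precisely $\mJ(\omega_{\mathbf{I}}(t))\le c(t)$.

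Now put $\Phi(t)\doteq\E(\omega_{\mathbf{I}}(t))-\lambda_{\mathbf{A}}\E(\omega_{\mathbf{A}}(t))-\lambda_{\mathbf{B}}\E(\omega_{\mathbf{B}}(t))$. Using $\lambda_{\mathbf{A}}r_{\mathbf{A}}\mJ(\omega_{\mathbf{A}})=\lambda_{\mathbf{A}}c$ and its analogue for $\mathbf{B}$, the de Bruijn formulas give $2\Nc\,\Phi'(t)=\E(\omega_{\mathbf{I}})\mJ(\omega_{\mathbf{I}})-c(t)\bigl(\lambda_{\mathbf{A}}\E(\omega_{\mathbf{A}})+\lambda_{\mathbf{B}}\E(\omega_{\mathbf{B}})\bigr)\le c(t)\,\Phi(t)$, where the inequality is Stam together with $\E(\omega_{\mathbf{I}})\ge 0$. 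By fact (a), $\omega_{\mathbf{A}}(t),\omega_{\mathbf{B}}(t),\omega_{\mathbf{I}}(t)\to\tr_{\Q}$; since $\mathbf{A}$, $\mathbf{B}$ and the output all carry $\Nc$ fermionic modes, their tracial states share one and the same entropy power $\E(\tr_{\Q})$, whence $\Phi(\infty)=\bigl(1-\lambda_{\mathbf{A}}-\lambda_{\mathbf{B}}\bigr)\E(\tr_{\Q})=0$. Multiplying the differential inequality $\Phi'\le\tfrac{1}{2\Nc}c\,\Phi$ by the positive integrating factor $\exp(-\tfrac{1}{2\Nc}\int_{0}^{t}c)$ makes the product non--increasing with limit $0$ at $+\infty$, so the product stays $\ge 0$ and hence $\Phi(t)\ge 0$ for all $t\ge 0$; evaluating at $t=0$ is exactly $\E(\omega_{\mathbf{I}})\ge\lambda_{\mathbf{A}}\E(\omega_{\mathbf{A}})+\lambda_{\mathbf{B}}\E(\omega_{\mathbf{B}})$.

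The main obstacle is twofold. First, Lemmata \ref{lemma:Weyl_grass}--\ref{lemma:stam} must be established in the genuinely non--commutative fermionic setting, the natural route being to transport the problem through the Grassmann picture $\G_{P}$ and the Clifford picture via the isomorphism $\varkappa_{P}$ of Definition \ref{def:isomor}, reducing de Bruijn and Stam to covariance--matrix statements by means of the fermionic displacement operator of Lemma \ref{lemma:Weyl_grass} and the Gaussian Berezin calculus of Section \ref{sec:mathe}; the scaling $\mJ(\alpha R_{\j})=\alpha^{2}\mJ(R_{\j})$ of Corollary \ref{corollary:Gaussian}(1) is what turns the unnormalised Stam inequality into the weighted form $\mJ(\omega_{\mathbf{I}})^{-1}\ge\lambda_{\mathbf{A}}\mJ(\omega_{\mathbf{A}})^{-1}+\lambda_{\mathbf{B}}\mJ(\omega_{\mathbf{B}})^{-1}$ used above. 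Second, one must control the allocation $r_{\mathbf{A}},r_{\mathbf{B}}$ and the factor $c(t)$ as $t\to\infty$, where $\mJ(\omega_{\mathbf{A}}(t)),\mJ(\omega_{\mathbf{B}}(t))\to 0$; I would run the whole argument on a compact interval $[\varepsilon,T]\subset(0,\infty)$, on which $c$ and the allocation are bounded and smooth for Gaussian states (by the explicit covariance flow (a) and Lemma \ref{lemma:qfi}), and then send $\varepsilon\downarrow 0$, $T\uparrow\infty$, using continuity of $\mS$ and $\E$ in the state to close the limit.
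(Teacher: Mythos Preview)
Your proposal follows the same Blachman--Stam heat--flow scheme as the paper: both evolve the inputs and the output under the fermionic diffusion semigroup, invoke the de~Bruijn identity (Lemma~\ref{lemma:Bruijn}) and the Stam inequality (Lemma~\ref{lemma:stam}), use the intertwining of the semigroup with the beam--splitter (your point~(b), which in the paper is Equation~\eqref{eq:covABI}--\eqref{eq:timesIAB} inside the proof of Lemma~\ref{lemma:stam}), and close with an asymptotic comparison as $t\to\infty$.

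There are, however, three implementation differences worth flagging. First, the paper tracks the \emph{ratio} $f(t)=\bigl(\lambda_{\mathbf{A}}\E_{\mathbf{A}}(t)+\lambda_{\mathbf{B}}\E_{\mathbf{B}}(t)\bigr)/\E_{\mathbf{I}}(t)$ and shows $f'\ge 0$, whereas you track the \emph{difference} $\Phi(t)$ and run a Gr\"onwall/integrating--factor argument; both are standard variants of Blachman's original. Second, the time allocation differs: the paper takes $\dot t_{\mathbf{C}}=\E_{\mathbf{C}}$ (rates equal to entropy powers, with an AM--GM step to combine them), while you equalise $r_{\mathbf{C}}\mJ(\omega_{\mathbf{C}})$ (rates inversely proportional to Fisher informations), which feeds the harmonic--mean form of Stam directly and avoids the AM--GM step. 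Third, and most interestingly, the endgame is different: the paper imports the bosonic--style asymptotic $\E(\omega_{t})\sim\tfrac{\e}{2}t$ from \cite{konSmi14,palmargio14} to get $f(\infty)=1$, whereas you use that the Mehler flow drives every state to the tracial state $\tr_{\Q}$ so that $\Phi(\infty)=0$. Your asymptotic is the more natural one in the finite--dimensional Clifford setting, since there the entropy is bounded by $\ln(\dim\Q)$ and entropy powers cannot grow linearly; the paper's ratio argument still works because the three entropy powers share the same finite limit, but your formulation makes this transparent.
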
 
\begin{proof}
As is usual, we proceed in similar form that Blachman \cite{blachman}. For this, a few of supporting definitions are introduced:
\begin{enumerate}
\item[i.] Take $t\in\R_{0}^{+}$ and consider the differentiable functions $t_{\mathbf{A}},t_{\mathbf{B}}\in C^{1}(\R_{0}^{+};\R_{0}^{+})$ such that $t_{\mathbf{A}}(0)=t_{\mathbf{B}}(0)=0$, $t_{\mathbf{A}}(t)\approx\frac{t}{2}+\mathcal{O}(t)$ and $t_{\mathbf{B}}(t)\approx\frac{t}{2}+\mathcal{O}(t)$ as $t\to\infty$. This is physically justified because of the independence of the systems $\mathbf{A}$ and $\mathbf{B}$. See proof of Lemma \ref{lemma:stam}. 
\item[ii.] For $\mathbf{C}\in\{\mathbf{A},\mathbf{B},\mathbf{I}\}$, let $t_{\mathbf{C}}$ the time describing the evolution of the state $\rho_{\mathbf{C}}$. By Lemma \ref{lemma:stam}, Expression \eqref{eq:timesIAB} and i. one can define the function composition $t_{\mathbf{I}}(t)\doteq\lambda_{\mathbf{A}}t_{\mathbf{A}}(t)+\lambda_{\mathbf{B}}t_{\mathbf{B}}(t)$ for all $t\in\R_{0}^{+}$. 
\item[iii.] For $\mathbf{C}\in\{\mathbf{A},\mathbf{B},\mathbf{I}\}$, let $\E\left(\omega_{\mathbf{C},t_{\mathbf{C}}(t)}\right)$ be the quantum entropy power \eqref{eq:qep}, such that by the de Bruijin's identity, Lemma \ref{lemma:Bruijn}, we note that 
$$
\dot{\E}\left(\omega_{\mathbf{C},t_{\mathbf{C}}(t)}\right)\doteq\frac{\d}{\d t}\E\left(\omega_{\mathbf{C},t_{\mathbf{C}}(t)}\right)=\frac{1}{\Nc}\E\left(\omega_{\mathbf{C},t_{\mathbf{C}}(t)}\right)\mJ\left(\omega_{\mathbf{C},t_{\mathbf{C}}(t)}\right)\frac{\d}{\d t}t_{\mathbf{C}},
$$
Here we assume that $\E\left(\omega_{\mathbf{C},t_{\mathbf{C}}(0)}\right)=\E\left(\omega_{\mathbf{C}}\right)$. We can invoke the results \cite[Eqs. (19) and (84)]{palmargio14} and \cite[Corollary III-4]{konSmi14}, which are also valid for fermion systems. Namely, for any Gaussian state $\omega$ (with associated density matrix $\rho$) evolving in time by the Liouvillean \eqref{eq:ferm_liou} (that is $\rho_{t}=\e^{t\L}\rho$, for $t\in\R_{0}^{+}$), we have the following asymptotic estimate as $t\to\infty$
\begin{equation}
\mathcal{E}(\omega_{t})=\frac{\e}{2}t+\mathcal{O}(t)\label{eq:asymptotic}. 
\end{equation}
\item[iv.] For $\mathbf{C}\in\{\mathbf{A},\mathbf{B}\}$ the times satisfy the initial value problem $\dot{t}_{\mathbf{C}}\doteq\frac{\d}{\d t}t_{\mathbf{C}}(t)=\E\left(\omega_{\mathbf{C},t_{\mathbf{C}}(t)}\right)$, $\dot{t}_{\mathbf{C}}(0)=0$. Observe that by the \emph{Peano's Theorem} for ordinary differential equations we know that has at least one solution.
\item[v.] To lighten notations, for $\mathbf{C}\in\{\mathbf{A},\mathbf{B},\mathbf{I}\}$ let $\E_{\mathbf{C}}(t)\equiv\E\left(\omega_{\mathbf{C},t_{\mathbf{C}}(t)}\right)$ and $\mJ_{\mathbf{C}}(t)\equiv\mJ\left(\omega_{\mathbf{C},t_{\mathbf{C}}(t)}\right)$. Thus by following iii. $\E_{\mathbf{C}}(0)$ equals $\E(\omega_{\mathbf{C}})$.
\end{enumerate}
First of all, in the Stam inequality, Lemma \ref{lemma:stam}, take $\alpha=\frac{\mJ\left(\omega_{\mathbf{A}}\right)^{-1}}{\mJ\left(\omega_{\mathbf{A}}\right)^{-1}+\mJ\left(\omega_{\mathbf{B}}\right)^{-1}}$ and $\beta=\frac{\mJ\left(\omega_{\mathbf{B}}\right)^{-1}}{\mJ\left(\omega_{\mathbf{A}}\right)^{-1}+\mJ\left(\omega_{\mathbf{B}}\right)^{-1}}$: 
$$\frac{\lambda_{\textbf{A}}}{\mJ\left(\omega_{\mathbf{A}}\right)}+\frac{\lambda_{\textbf{B}}}{\mJ\left(\omega_{\mathbf{B}}\right)}\leq\frac{1}{\mJ\left(\omega_{\mathbf{I}}\right)}.
$$
Combining last inequality with the AM--GM inequality applied to $\E_{\mathbf{A}}(t)^{2}\mJ_{\mathbf{A}}(t)^{2}$ and $\mathcal{E}_{\mathbf{B}}(t)^{2}\mathcal{J}_{\mathbf{B}}(t)^{2}$, and some rearrangements we get
\begin{eqnarray}
\lambda_{\mathbf{A}}\mathcal{E}_{\mathbf{A}}(t)^{2}\mathcal{J}_{\mathbf{A}}(t)+\lambda_{\mathbf{B}}\mathcal{E}_{\mathbf{B}}(t)^{2}\mathcal{J}_{\mathbf{B}}(t)
&\geq&(\lambda_{\mathbf{A}}\mathcal{E}_{\mathbf{A}}(t)+\lambda_{\mathbf{B}}\mathcal{E}_{\mathbf{B}}(t))^{2}\frac{\mathcal{J}_{\mathbf{A}}(t)\mathcal{J}_{\mathbf{B}}(t)}{\lambda_{\mathbf{B}}\mathcal{J}_{\mathbf{A}}(t)+\lambda_{\textbf{A}}\mathcal{J}_{\mathbf{B}}(t)}\nonumber\\
&\geq&(\lambda_{\mathbf{A}}\mathcal{E}_{\mathbf{A}}(t)+\lambda_{\mathbf{B}}\mathcal{E}_{\mathbf{B}}(t))^{2}\mathcal{J}_{\mathbf{I}}(t).\label{eq:ineJE} 
\end{eqnarray}
With the previous notations we can study the behavior of the positive valued real differentiable function
$$
f_{\textbf{A},\textbf{B},\textbf{I}}(t)\doteq\frac{\lambda_{\textbf{A}}\E_{\mathbf{A}}(t)+\lambda_{\textbf{B}}\E_{\mathbf{B}}(t)}{\E_{\mathbf{I}}(t)}.
$$ 
From ii. iii. iv., v. and Inequality \eqref{eq:ineJE} note that $f_{\textbf{A},\textbf{B},\textbf{I}}'(t)\geq0$:
\begin{align*}
f_{\textbf{A},\textbf{B},\textbf{I}}'(t)=&\frac{\left(\lambda_{\textbf{A}}\dot{\E}_{\mathbf{A}}(t)+\lambda_{\textbf{B}}\dot{\E}_{\mathbf{B}}(t)\right)\E_{\mathbf{I}}(t)-\left(\lambda_{\textbf{A}}\E_{\mathbf{A}}(t)+\lambda_{\textbf{B}}\E_{\mathbf{B}}(t)\right)\dot{\E}_{\mathbf{I}}(t)}{\E_{\mathbf{I}}(t)^{2}}\\
=&\frac{\left(\lambda_{\textbf{A}}\E_{\mathbf{A}}(t)\mJ_{\mathbf{A}}(t)\dot{t}_{\mathbf{A}}+\lambda_{\textbf{B}}\E_{\mathbf{B}}(t)\mJ_{\mathbf{B}}(t)\dot{t}_{\mathbf{B}}\right)-\left(\lambda_{\textbf{A}}\E_{\mathbf{A}}(t)+\lambda_{\textbf{B}}\E_{\mathbf{B}}(t)\right)\mJ_{\mathbf{I}}(t)\dot{t}_{\mathbf{I}}}{\Nc\E_{\mathbf{I}}(t)}\\
=&\frac{\left(\lambda_{\textbf{A}}\E_{\mathbf{A}}(t)^{2}\mJ_{\mathbf{A}}(t)+\lambda_{\textbf{B}}\E_{\mathbf{B}}(t)^{2}\mJ_{\mathbf{B}}(t)\right)-\left(\lambda_{\textbf{A}}\E_{\mathbf{A}}(t)+\lambda_{\textbf{B}}\E_{\mathbf{B}}(t)\right)^{2}\mJ_{\mathbf{I}}(t)}{\Nc\E_{\mathbf{I}}(t)}.
\end{align*}
Hence, $f_{\textbf{A},\textbf{B},\textbf{I}}(t)$ is an increasing function. Thus $f_{\textbf{A},\textbf{B},\textbf{I}}(t\to\infty)\geq f_{\textbf{A},\textbf{B},\textbf{I}}(0)$, which according to i., ii.,  iii. and v. yields us to 
$$
1\geq\frac{\lambda_{\textbf{A}}\E_{\mathbf{A}}(0)+\lambda_{\textbf{B}}\E_{\mathbf{B}}(0)}{\E_{\mathbf{I}}(0)}=\frac{\lambda_{\textbf{A}}\E(\omega_{\mathbf{A}})+\lambda_{\textbf{B}}\E(\omega_{\mathbf{B}})}{\E(\omega_{\mathbf{I}})},
$$ 
concluding the proof of the Theorem. 
\end{proof}
\section{Technical results}\label{sec:tech_proofs}
\subsection{Beam splitter operators}
Consider the self--dual Hilbert space $(\H,\a)$, and let $\A'$ be the self--dual $\CAR$ algebra generated by the unit $\1$ and the family $\{\mathrm{C}(\varphi )\}_{\varphi \in \H}$ of elements satisfying: $\mathrm{C}\left(\varphi\right)^{*}$ is (complex) linear, $\mathrm{C}(\varphi )^{*}\doteq\mathrm{C}(\a(\varphi))$ for any $\varphi\in\H$ and satisfies the $\CAR$, \eqref{eq: CAR}. For the self--dual $\C$--algebra $\A$, with generators $\1$ and $\{\B(\varphi )\}_{\varphi \in \H}\in\A$, we consider the product $\C$--algebra $\V\equiv\A\otimes\A'$ such that for any $\varphi_{1},\varphi_{2}\in\H$, 
$$
\B(\varphi_{1})\otimes\1\equiv\B(\varphi_{1})\in\V,\qquad\1\otimes\mathrm{C}(\varphi_{2})\equiv\mathrm{C}(\varphi_{2})\in\V
$$ 
so that 
\begin{eqnarray}
\varphi_{1}\oplus\varphi_{2}\to\B(\varphi_{1})\mathrm{C}(\varphi_{2})^{\#}+\mathrm{C}(\varphi_{2})^{\#}\B(\varphi_{1})=0. \label{eq:BC}
\end{eqnarray}
Here the symbol $\#$ stands for either $\mathrm{C}(\varphi_{2})$ or $\mathrm{C}(\varphi_{2})^{*}$. Note that, due to all elements of $\A$ and $\A'$ are bounded, the product $\C$--algebra $\V$ is well--defined, see \cite{dixmier77}. We define:
\begin{definition}[Displacement and Beam Splitter operators on self--dual $\CAR$--algebras]\label{def:dis_weyl}
Let $(\H,\a)$ be a self--dual Hilbert space, and let $\A$ and $\A'$ be the self--dual $\CAR$ algebras as above defined. Fix a basis projection $P$ associated with $(\H,\a)$ and an orthonormal basis $\{\psi_{j}\}_{j\in J}$ of its range $\mathfrak{h}_{P}$. 
\begin{enumerate}
\item The element $\inner{\mathrm{C},\B}$ on $\V$ is defined by
$$
\inner{\mathrm{C},\B}\doteq\sum_{j\in J}\mathrm{C}(\psi_{j})^{*}\B(\psi_{j}),
$$
where the families of elements $\B_{|J|}\doteq\{\B(\psi_{i})\}_{j\in J}$, $\mathrm{C}_{|J|}\doteq\{\mathrm{C}(\psi_{i})\}_{j\in J}$ will describe $|J|$ different \emph{modes}. 
\item Denote by 
$$
\left(\mathbf{C}_{|J|}\right)^{\mathrm{T}}\doteq\left(\mathrm{C}(\psi_{1}),\mathrm{C}(\psi_{1})^{*},\ldots,\mathrm{C}(\psi_{|J|}),\mathrm{C}(\psi_{|J|})^{*}\right)\quad\text{and}\quad
\mJ\doteq
\left( \begin{array}{cc}
0 & 1 \\ 
1 & 0
\end{array}
\right)^{\oplus|J|}.
$$
The fermionic ``Weyl displacement operator'' $\mathbb{D}_{\A}\left(\mathbf{C}_{|J|}\right)\colon\A\to\V$, of $\A$ relative to $\A'$, is defined by
$$
\mathbb{D}_{\A}\left(\mathbf{C}_{|J|}\right)\doteq\e^{-\left(\mathbf{C}_{|J|},\mJ\mathbf{B}_{|J|}\right)_{\A}},
$$
with $(A,B)_{\A}\doteq A^{T}B$ for any $A,B\in\A,\A'$, and the exponential function given by \eqref{eq:expCAR}.
\item Denote by $\mathcal{C}$ the compact interval $[0,1]$. The beam splitter map $\mathbb{U}_{\lambda}\colon\mathcal{C}\to\V$ is the unitary operator on $\BL(\V)$ given by
$$
\mathbb{U}_{\lambda}\doteq\e^{f(\lambda)\inner{\B,\mathrm{C}}-f(\lambda)^{*}\inner{\mathrm{C},\B}},
$$
fot $\lambda\in\mathcal{C}$ the \emph{transmissivity} of the beam splitter, and $f\colon\mathcal{C}\to\CP\setminus\{0\}$ a well--defined complex function controlling the relative weight of the $\C$--algebras $\A$ and $\A'$, where $f(\lambda)^{*}\in\CP\setminus\{0\}$ is the conjugate complex of $f(\lambda)$. 
\end{enumerate}
\end{definition}
We can state the following \cite{konSmi14}:
\begin{lemma}\label{lemma:displ_weyl}
In regard to the displacement operator $\mathbb{D}_{\A}$ of Definition \ref{def:dis_weyl} we have the following properties:
\begin{enumerate}
\item For any $\inner{\B,\mathrm{C}}$ and $\inner{\mathrm{C},\B}$ as in Definition we have:
$$
\mathbb{D}_{\A}\left(\mathbf{C}_{|J|}\right)=\e^{\inner{\B,\mathrm{C}}-\inner{\mathrm{C},\B}},
$$
and 
$$
\mathbb{D}_{\A}\left(\mathbf{C}_{|J|}\right)^{-1}=\mathbb{D}_{\A}\left(-\mathbf{C}_{|J|}\right)=\mathbb{D}_{\A}\left(\mathbf{C}_{|J|}\right)^{*}=\mathbb{D}_{\A'}\left(\mathbf{B}_{|J|}\right)^{-1},
$$
\item For any $\lambda\in\mathcal{C}$, exists a complex matrix $\mathbb{M}_{\lambda}^{|J|}\in\mathrm{Mat}(2|J|,\CP)$ such that the \emph{Heisenberg evolution} of the \emph{modes} $\mathbf{B}$ and $\mathbf{C}$ is given by
$$
\mathbb{U}_{\lambda}^{*}
\mathbf{D}
  \mathbb{U}_{\lambda}=
  \mathbb{M}_{\lambda}^{|J|}\mathbf{D},
$$
where for $\mathbb{M}_{\lambda}\in\mathrm{Mat}(2,\CP)$, $\mathbb{M}_{\lambda}^{|J|}\doteq\mathbb{M}_{\lambda}^{\oplus|J|}$ and
\begin{align*}
\mathbf{D}_{|J|}^{\mathrm{T}}\doteq\left(\B(\psi_{1}),\mathrm{C}(\psi_{1}),\ldots,\B(\psi_{|J|}),\mathrm{C}(\psi_{|J|})\right). 
\end{align*}
\end{enumerate}
\end{lemma}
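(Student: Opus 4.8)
The plan is to reduce both parts to elementary manipulations with the $\CAR$ \eqref{eq: CAR} of $\A$, the mixed anti--commutation relation \eqref{eq:BC} between the $\B$'s and the $\mathrm{C}$'s, and the defining property $\a P\a=P^{\bot}$ of a basis projection (Definition \ref{def basis projection}), which forces $\inner{\psi_{j},\a\psi_{k}}_{\H}=0$ for all $\psi_{j},\psi_{k}\in\mathfrak{h}_{P}$.

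\emph{Part 1.} First I would unfold the definition $\mathbb{D}_{\A}(\mathbf{C}_{|J|})=\e^{-(\mathbf{C}_{|J|},\mJ\mathbf{B}_{|J|})_{\A}}$. Expanding the bilinear form $\mathbf{C}_{|J|}^{\mathrm{T}}\mJ\mathbf{B}_{|J|}$ block by block gives $\sum_{j\in J}\bigl(\mathrm{C}(\psi_{j})\B(\psi_{j})^{*}+\mathrm{C}(\psi_{j})^{*}\B(\psi_{j})\bigr)$, and applying \eqref{eq:BC} with $\varphi_{1}=\a\psi_{j}$, $\varphi_{2}=\psi_{j}$ rewrites $\mathrm{C}(\psi_{j})\B(\psi_{j})^{*}=-\B(\psi_{j})^{*}\mathrm{C}(\psi_{j})$, so that the exponent equals $\inner{\B,\mathrm{C}}-\inner{\mathrm{C},\B}$; this is the first displayed identity. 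The structural point is that this exponent is skew--adjoint, since $\inner{\B,\mathrm{C}}^{*}=\inner{\mathrm{C},\B}$ (using $(\B(\psi)^{*})^{*}=\B(\psi)$, valid as $\a^{2}=\mathbf{1}_{\H}$). Hence $\mathbb{D}_{\A}(\mathbf{C}_{|J|})$ is unitary and $\mathbb{D}_{\A}(\mathbf{C}_{|J|})^{*}=\mathbb{D}_{\A}(\mathbf{C}_{|J|})^{-1}$. Replacing $\mathrm{C}\mapsto-\mathrm{C}$ flips the sign of both terms of the exponent, identifying $\mathbb{D}_{\A}(-\mathbf{C}_{|J|})$ with that inverse, and the same computation with the roles of $\A$ and $\A'$ interchanged ($\B\leftrightarrow\mathrm{C}$) identifies $\mathbb{D}_{\A'}(\mathbf{B}_{|J|})$ with $\mathbb{D}_{\A}(\mathbf{C}_{|J|})^{-1}$, which supplies the remaining identity of the chain.

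\emph{Part 2.} Put $Y\doteq f(\lambda)\inner{\B,\mathrm{C}}-f(\lambda)^{*}\inner{\mathrm{C},\B}$. Again $Y^{*}=-Y$, so $\mathbb{U}_{\lambda}=\e^{Y}$ is unitary and $\mathbb{U}_{\lambda}^{*}A\mathbb{U}_{\lambda}=\e^{-\mathrm{ad}_{Y}}(A)$ for every $A\in\V$, the series terminating since $\dim\H<\infty$. The crux is the computation of four elementary brackets: from \eqref{eq: CAR}, \eqref{eq:BC} and $\inner{\psi_{j},\a\psi_{k}}_{\H}=0$ one obtains $[\inner{\B,\mathrm{C}},\B(\psi_{k})]=-\mathrm{C}(\psi_{k})$ and $[\inner{\mathrm{C},\B},\B(\psi_{k})]=0$ (the latter being exactly where the basis--projection identity enters), and by the $\B\leftrightarrow\mathrm{C}$ symmetry $[\inner{\mathrm{C},\B},\mathrm{C}(\psi_{k})]=-\B(\psi_{k})$ and $[\inner{\B,\mathrm{C}},\mathrm{C}(\psi_{k})]=0$. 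The Kronecker $\delta$ produced at each step kills every cross term between distinct modes $k\neq k'$, so $\mathrm{ad}_{Y}$ preserves, for each $k\in J$, the two--dimensional space $\mathrm{lin}\{\B(\psi_{k}),\mathrm{C}(\psi_{k})\}$, acting there as the fixed matrix $\left(\begin{smallmatrix}0 & -f(\lambda)\\ f(\lambda)^{*} & 0\end{smallmatrix}\right)$. Exponentiating this $2\times2$ matrix block by block gives $\mathbb{M}_{\lambda}\in\mathrm{Mat}(2,\CP)$ with $\mathbb{U}_{\lambda}^{*}\mathbf{D}_{|J|}\mathbb{U}_{\lambda}=\mathbb{M}_{\lambda}^{\oplus|J|}\mathbf{D}_{|J|}$; for a suitable choice of $f$ the resulting $\mathbb{M}_{\lambda}$ is the rotation mixing the two modes with weights $\sqrt{\lambda}$ and $\sqrt{1-\lambda}$, as expected of a beam splitter.

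I expect the only delicate point to be these bracket computations. Since the generators are odd, $\inner{\B,\mathrm{C}}$ and $\inner{\mathrm{C},\B}$ are even, and sliding $\B(\psi_{k})$ or $\mathrm{C}(\psi_{k})$ through a quadratic monomial requires careful bookkeeping of signs; the fact that $\mathrm{ad}_{Y}$ closes on the span of the generators is not automatic but hinges on the cancellations furnished by \eqref{eq: CAR} together with $\inner{\psi_{j},\a\psi_{k}}_{\H}=0$, which is why the basis--projection hypothesis is essential. Everything else --- unitarity, the exponential identities of Part 1, block--diagonality over the modes, convergence of the finite series --- is then purely formal.
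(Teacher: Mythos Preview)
Your argument is correct and follows essentially the same route as the paper's own proof: Part~1 is handled by unwinding the definition and using the skew--adjointness of the exponent, and Part~2 by the operator expansion \eqref{eq:oper_expCAR}, reducing to a single mode pair and exponentiating a $2\times2$ matrix. Your write--up is in fact more explicit than the paper's, which simply declares Part~1 ``straightforward from Definition~\ref{def:dis_weyl}'' and for Part~2 records the entries of $\mathbb{M}_{\lambda}$ (namely $(\mathbb{M}_{\lambda})_{1,1}=(\mathbb{M}_{\lambda})_{2,2}=\cos|f(\lambda)|$ and $(\mathbb{M}_{\lambda})_{1,2}=\sin|f(\lambda)|\sqrt{f(\lambda)/f(\lambda)^{*}}$, $(\mathbb{M}_{\lambda})_{2,1}=-\sin|f(\lambda)|\sqrt{f(\lambda)^{*}/f(\lambda)}$) without spelling out the bracket computation; your identification of the role of $\inner{\psi_{j},\a\psi_{k}}_{\H}=0$ in forcing $[\inner{\mathrm{C},\B},\B(\psi_{k})]=0$ is a genuine clarification.

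One small slip: the series $\e^{-\mathrm{ad}_{Y}}$ does \emph{not} terminate (iterating $\mathrm{ad}_{Y}$ cycles between $\B(\psi_{k})$ and $\mathrm{C}(\psi_{k})$ indefinitely). What is true, and what you actually use, is that $\mathrm{ad}_{Y}$ preserves each two--dimensional span $\mathrm{lin}\{\B(\psi_{k}),\mathrm{C}(\psi_{k})\}$, so the series reduces there to an ordinary $2\times2$ matrix exponential; convergence is automatic since $Y$ is bounded. Just replace ``terminating'' by ``restricting to a $2\times2$ matrix exponential on each mode pair''.
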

\begin{proof}
Part 1. of Lemma is straightforward from Definition \ref{def:dis_weyl}. For the part 2. it is enough consider the modes $\B(\psi_{1})$ and $\mathrm{C}(\psi_{1})$ such that we can use Expression \eqref{eq:oper_expCAR} to get
$$
\mathbb{U}_{\lambda}^{*}
\begin{pmatrix}
    \B(\psi_{1})\\
    \mathrm{C}(\psi_{1})
\end{pmatrix}
  \mathbb{U}_{\lambda}=
  \mathbb{M}_{\lambda}\begin{pmatrix}
    \B(\psi_{1})\\
    \mathrm{C}(\psi_{1})
\end{pmatrix},
$$
where $\mathbb{M}_{\lambda}\in\mathrm{Mat}(2,\CP)$ is a $2\times2$ complex matrix with components $\left(\mathbb{M}_{\lambda}\right)_{1,1}=\left(\mathbb{M}_{\lambda}\right)_{2,2}=\cos\left(\left|f\left(\lambda\right)\right|\right)$ and
$$
\left(\mathbb{M}_{\lambda}\right)_{1,2}=\sin\left(\left|f\left(\lambda\right)\right|\right)\sqrt{\frac{f\left(\lambda\right)}{f\left(\lambda\right)^{*}}}\quad\text{and}\quad \left(\mathbb{M}_{\lambda}\right)_{2,1}=-\sin\left(\left|f\left(\lambda\right)\right|\right)\sqrt{\frac{f\left(\lambda\right)^{*}}{f\left(\lambda\right)}}.
$$
Considering the matrix $\mathbb{M}_{\lambda}^{\oplus|J|}\in\mathrm{Mat}(2|J|,\CP)$, the proof follows.
\end{proof}
We interprete Part 2. of last Lemma saying that if $\A$ and $\A'$ describe two different fermionic systems interacting between them, then the beam splitter operator $\mathbb{U}_{\lambda}$ is used to obtain a family of output modes of size $|J|$. For technical purposes, and w.l.o.g., one usually take $\mathrm{ran}\left(\cos\left(\left|f\left(\lambda\right)\right|\right)\right)=\mathrm{ran}\left(\sin\left(\left|f\left(\lambda\right)\right|\right)\right)=\mathcal{C}$, so that for any $\lambda\in\mathcal{C}$, $\cos\left(\left|f\left(\lambda\right)\right|\right)=\sqrt{\lambda}$ and $\sin\left(\left|f\left(\lambda\right)\right|\right)=\sqrt{1-\lambda}$, with $f(\lambda)=\e^{\ii\theta}\sqrt{\left|f(\lambda)\right|}$ for some $\theta\in\R$. Then, we writte
$$
\mathbb{M}_{\lambda,\theta}\doteq \begin{pmatrix}
    \sqrt{\lambda} & \sqrt{1-\lambda}\e^{\ii\theta}\\
   -\sqrt{1-\lambda}\e^{-\ii\theta} & \sqrt{\lambda}
  \end{pmatrix},
$$
which is the well--known matrix implementing the beam--splitter operator. Here, for physical purposes we will asume that $\theta=0$, so that 
\begin{equation}
\mathbb{M}_{\lambda}\doteq\mathbb{M}_{\lambda,0}= 
\begin{pmatrix}
    \sqrt{\lambda} & \sqrt{1-\lambda}\\
   -\sqrt{1-\lambda} & \sqrt{\lambda}
\end{pmatrix}, \label{eq:matrixM}
\end{equation}
By taking into account the quantum channel given by \eqref{eq:map_red_sys} for the unitary operator $\mathbb{U}_{\lambda}$ we interprete by
\begin{equation}
\rho_{\mathbf{I}}^{(\lambda)}\doteq\E_{\mathbb{U}_{\lambda}}(\rho_{\mathbf{A}}\otimes\rho_{\mathbf{B}})=\tr_{\mathbf{B}}\left(\mathbb{U}_{\lambda}^{*}\left(\rho_{\mathbf{A}}\otimes\rho_{\mathbf{B}}\right)\mathbb{U}_{\lambda}\right)\label{eq:out_gaussian}
\end{equation}
the density matrix associated to the output state $\omega_{\mathbf{I}}\in\states_{\V}$.\par\bigskip
In regard to the Grassmann algebra $\wedge^{*}\H$ one can introduce Weyl displacement operators similarly to the self--dual $\CAR$ algebra $\A$ case (Definition \ref{def:dis_weyl}):
\begin{definition}[Weyl displacement operator at Grassmann algebras]\label{def:bil_weyl}
Let $(\H,\a)$ be a self--dual Hilbert space:
\begin{enumerate}
\item Fix a basis projection $P\in\fp$ and an orthonormal basis $\{\psi _{j}\}_{j\in J}$ of its range $\mathfrak{h}_{P}$. Given $k,l\in \N_{0}$, 
\begin{align*}
\inner{\mathfrak{h}_{P}^{(k)},\mathfrak{h}_{P}^{(l)}}\doteq\sum\limits_{j\in J}\left( \a\psi _{j}\right) ^{(k)}\wedge \psi_{j}^{(l)}.
\end{align*}
\item For $l\in\N_{0}$, define
$$
\left(\mathbf{\psi}_{|J|}^{(l)}\right)^{\mathrm{T}}\doteq\left(\psi_{1}^{(l)},\left(\a\psi_{1}\right)^{(l)},\ldots,\psi_{|J|}^{(l)},\left(\a\psi_{|J|}\right)^{(l)}\right)\quad\text{and}\quad
\mJ\doteq
\left( \begin{array}{cc}
0 & 1 \\ 
1 & 0
\end{array}
\right)^{\oplus|J|}.
$$
For any $k\in\N_{0}$ and $\mathbf{\psi}_{|J|}^{(k)}$, the fermionic ``Weyl displacement operator'' $\mathbb{T}_{k}\colon\wedge^{*}\h_{P}^{(k)}\to\wedge^{*}(\h_{P}^{(k)}\oplus\h_{P}^{*(l)})$, of the Hilbert space $\h_{P}^{(k)}$ relative to the Hilbert space $\h_{P}^{(l)}$, is defined by
$$
\mathbb{T}_{k}\left(\mathbf{\psi}_{|J|}^{(l)}\right)\doteq\e^{-\left(\mathbf{\psi}_{|J|}^{(l)},\mJ\mathbf{\psi}_{|J|}^{(k)}\right)_{\h_{P}}},
$$
with $(A,B)_{\h_{P}}\doteq A^{T}B$ for any $A,B\in\h_{P}^{(k)},\h_{P}^{(l)}$, and the exponential function of Expression \eqref{eq:expCAR}.
\end{enumerate}
\end{definition}
Straightforward calculations using Definition \ref{def:bil_weyl}--(2), for a basis projection $P\in\fp$, show that the Weyl displacement operator of Definition \ref{def:bil_weyl}--(3) can be redefined by
\begin{equation}\label{eq:displ1}
\mathbb{T}_{k}\left(\mathbf{\psi}_{|J|}^{(l)}\right)=\e^{\inner{\h_{P}^{(k)},\h_{P}^{(l)}}-\inner{\h_{P}^{(l)},\h_{P}^{(k)}}},
\end{equation}
so that
\begin{equation}\label{eq:displ2}
\mathbb{T}_{k}\left(\mathbf{\psi}_{|J|}^{(l)}\right)^{-1}=\mathbb{T}_{k}\left(-\mathbf{\psi}_{|J|}^{(l)}\right)=\mathbb{T}_{k}\left(\mathbf{\psi}_{|J|}^{(l)}\right)^{*}=\mathbb{T}_{l}\left(\mathbf{\psi}_{|J|}^{(k)}\right)^{-1},
\end{equation}
see Lemma \ref{lemma:displ_weyl} for a comparation with the Weyl displacement operator $\mathbb{D}_{\A}$ in the context of self--dual $\CAR$ algebras.\\
For a fix basis projection, consider the Grassmann $\C$--algebra $\G_{P}$. For the displacement operator $\mathbb{T}$ of Definition \ref{def:bil_weyl} we have:
\begin{lemma}\label{lemma:Weyl_grass}
Fix a basis projection $P\in\fp$ with range $\h_{P}$, and take same notations of Definition \ref{def:bil_weyl}. For the displacement operator $\mathbb{T}$ we have:
$$
\mathbb{T}_{k}\left(\mathbf{\psi}_{|J|}^{(l)}\right)^{*}\circ_{P}^{(k)}\left(\psi_{i}^{(k)}\right)^{\#}\circ_{P}^{(k)}\mathbb{T}_{k}\left(\mathbf{\psi}_{|J|}^{(l)}\right)=\left(\psi_{i}^{(k)}\right)^{\#}+\left(\psi_{i}^{(l)}\right)^{\#},
$$
for each $i\in J$, where $\circ_{P}^{(k)}$ is the circle product of Definition \ref{definition star} acting on $\wedge^{*}\h^{(k)}$. The symbol $\#$ stands for either $\psi_{j}^{(k,l)}$ or $\left(\a\psi_{j}\right)^{(k,l)}$. Thus for a fix Hilbert space $\h_{P}^{(k)}$, the element $\mathbf{\psi}_{|J|}^{(l)}$ is \emph{displaced} as
\begin{align*}
\mathbb{T}_{k}\left(\mathbf{\psi}_{|J|}^{(l)}\right)^{*}\circ_{P}^{(k)}\mathbf{\psi}_{|J|}^{(k)}\circ_{P}^{(k)}\mathbb{T}_{k}\left(\mathbf{\psi}_{|J|}^{(l)}\right)=\mathbf{\psi}_{|J|}^{(k)}+\mathbf{\psi}_{|J|}^{(l)}.
\end{align*}
Let $\mathbf{\psi}_{|J|}^{(l)}$ and $\mathbf{\psi}_{|J|}^{(m)}$ well--defined on $\wedge^{*}\h_{P}^{(l)}$ and $\wedge^{*}\h_{P}^{(m)}$, respectively. We have
\begin{align*}
\mathbb{T}_{k}\left(\mathbf{\psi}_{|J|}^{(l)}+\mathbf{\psi}_{|J|}^{(m)}\right)=
\mathbb{T}_{k}\left(\mathbf{\psi}_{|J|}^{(l)}\right)\circ_{P}^{(k)}\mathbb{T}_{k}\left(\mathbf{\psi}_{|J|}^{(m)}\right)\e^{-\frac{1}{2}\left(\mathbf{\psi}_{|J|}^{(l)},\mJ\mathbf{\psi}_{|J|}^{(m)}\right)}.
\end{align*}
\end{lemma}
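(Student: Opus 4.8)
The plan is to reduce every assertion to an $\mathrm{ad}$--expansion for the circle product $\circ_P^{(k)}$, using the fact that $\circ_P^{(k)}$ imposes the $\CAR$ \eqref{eq:CAR_gras} only on the $(k)$--copy of $\H$, while the $(l)$-- and $(m)$--copies enter solely through the wedge product and therefore behave as ``Grassmann $c$--numbers''. Write $\mathbb{T}_k\big(\mathbf{\psi}_{|J|}^{(l)}\big)=\e^{X^{(l)}}$ with $X^{(l)}\doteq\inner{\h_P^{(k)},\h_P^{(l)}}-\inner{\h_P^{(l)},\h_P^{(k)}}$ the exponent of \eqref{eq:displ1}, the exponential taken with respect to $\circ_P^{(k)}$. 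Since $\big(X^{(l)}\big)^{*}=-X^{(l)}$, relation \eqref{eq:displ2} gives $\mathbb{T}_k\big(\mathbf{\psi}_{|J|}^{(l)}\big)^{*}=\mathbb{T}_k\big(\mathbf{\psi}_{|J|}^{(l)}\big)^{-1}=\e^{-X^{(l)}}$, so the conjugation in the statement is computed by $\e^{-\mathrm{ad}_{X^{(l)}}}$, where $\mathrm{ad}_{X^{(l)}}(Y)\doteq X^{(l)}\circ_P^{(k)}Y-Y\circ_P^{(k)}X^{(l)}$.

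First I would evaluate $\mathrm{ad}_{X^{(l)}}$ on a single generator $\big(\psi_i^{(k)}\big)^{\#}$ (the computation being identical whether $\#$ denotes $\psi_i^{(k)}$ or $(\a\psi_i)^{(k)}$). Expanding $X^{(l)}=\sum_{j}(\a\psi_j)^{(k)}\wedge\psi_j^{(l)}-\sum_{j}(\a\psi_j)^{(l)}\wedge\psi_j^{(k)}$ and using the elementary identity $[ab,c]=a\{b,c\}-\{a,c\}b$ together with \eqref{eq:CAR_gras} and the orthogonality $\h_P\perp\h_P^{*}$ (so that $\{\psi_j,\psi_i\}_{\circ_P}=\{(\a\psi_j),(\a\psi_i)\}_{\circ_P}=0$ while $\{(\a\psi_j),\psi_i\}_{\circ_P}=\inner{\psi_j,\psi_i}_{\H}\1=\delta_{ij}\1$), and noting that every anticommutator mixing the $(k)$--copy with the $(l)$--copy vanishes, one finds $\mathrm{ad}_{X^{(l)}}\big(\big(\psi_i^{(k)}\big)^{\#}\big)=-\big(\psi_i^{(l)}\big)^{\#}$ and $\mathrm{ad}_{X^{(l)}}\big(\big(\psi_i^{(l)}\big)^{\#}\big)=0$. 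Hence $\mathrm{ad}_{X^{(l)}}^{2}$ annihilates $\big(\psi_i^{(k)}\big)^{\#}$, the exponential series truncates after two terms, and $\e^{-\mathrm{ad}_{X^{(l)}}}\big(\big(\psi_i^{(k)}\big)^{\#}\big)=\big(\psi_i^{(k)}\big)^{\#}+\big(\psi_i^{(l)}\big)^{\#}$, which is the first displayed identity; stacking the $2|J|$ scalar identities into the vector $\mathbf{\psi}_{|J|}^{(k)}$ yields the second. Equivalently, one may transport the statement through the canonical isomorphism $\varkappa_P$ of Definition \ref{def:isomor}, under which this is the Grassmann mirror of Lemma \ref{lemma:displ_weyl}.

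For the composition rule I would use the Baker--Campbell--Hausdorff identity in the form $\e^{A}\circ_P^{(k)}\e^{B}=\e^{A+B}\circ_P^{(k)}\e^{\frac12[A,B]}$, valid whenever $[A,B]$ is central. Taking $A=X^{(l)}$ and $B=X^{(m)}$, the key observation is that $\big[X^{(l)},X^{(m)}\big]_{\circ_P^{(k)}}$ has no $(k)$--component: the $\circ_P^{(k)}$--commutator collapses, via \eqref{eq:CAR_gras}, to the scalars $\inner{\psi_i,\psi_j}_{\H}=\delta_{ij}$ generated inside the $(k)$--copy times wedge products supported purely on copies $l$ and $m$, and a short bookkeeping calculation identifies the result with $\big(\mathbf{\psi}_{|J|}^{(l)},\mJ\,\mathbf{\psi}_{|J|}^{(m)}\big)$. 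Being an even element with no $(k)$--content, it is central for $\circ_P^{(k)}$ — on copies $l,m$ the circle product is just the wedge, and even wedge--monomials commute with everything — so BCH applies; since $X^{(l)}+X^{(m)}$ is precisely the exponent of $\mathbb{T}_k\big(\mathbf{\psi}_{|J|}^{(l)}+\mathbf{\psi}_{|J|}^{(m)}\big)$, rearranging gives $\mathbb{T}_k\big(\mathbf{\psi}_{|J|}^{(l)}+\mathbf{\psi}_{|J|}^{(m)}\big)=\mathbb{T}_k\big(\mathbf{\psi}_{|J|}^{(l)}\big)\circ_P^{(k)}\mathbb{T}_k\big(\mathbf{\psi}_{|J|}^{(m)}\big)\,\e^{-\frac12\left(\mathbf{\psi}_{|J|}^{(l)},\mJ\,\mathbf{\psi}_{|J|}^{(m)}\right)}$. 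I expect the only real difficulty to be bookkeeping: keeping the three copies and their relative wedge signs straight while expanding $\mathrm{ad}_{X^{(l)}}$ and $\big[X^{(l)},X^{(m)}\big]$, and verifying that $\circ_P^{(k)}$ genuinely leaves the $(l)$-- and $(m)$--copies inert so that all cross--copy anticommutators vanish; once the convention for extending $\circ_P^{(k)}$ to the enlarged Grassmann algebra $\wedge^{*}\big(\oplus_{q}\H^{(q)}\big)$ is fixed, this is routine.
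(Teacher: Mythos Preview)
Your proposal is correct and follows essentially the same route as the paper: for the displacement identity you use the $\mathrm{ad}$--expansion of the conjugation $\e^{-X^{(l)}}\circ_P^{(k)}(\cdot)\circ_P^{(k)}\e^{X^{(l)}}$, compute the first commutator via the identity $[ab,c]=a\{b,c\}-\{a,c\}b$ together with the $\CAR$ \eqref{eq:CAR_gras} on the $(k)$--copy and Grassmann anticommutation on the $(l)$--copy, and observe that the second commutator vanishes so the series truncates; for the composition law you invoke Baker--Campbell--Hausdorff after checking that $[X^{(l)},X^{(m)}]_{\circ_P^{(k)}}$ lives purely in copies $l,m$ and is therefore central. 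The paper does exactly this (it phrases the commutator computation through the four--term identity \eqref{eq:iden_fourtermsCAR} and verifies the vanishing of the triple commutator $\big[\langle\h_P^{(k)},\h_P^{(l)}\rangle,[\langle\h_P^{(m)},\h_P^{(n)}\rangle,\langle\h_P^{(o)},\h_P^{(p)}\rangle]\big]=0$ to justify \eqref{eq:CBHT}), so there is no substantive difference.
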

\begin{remark}
Observe that this Lemma brings us to similar results proven for fermionic coherent states \cite{Ohnuki_Kashiwa,cahill99,com12coher}. The fundamental difference between our result and those of the mentioned works, is that here we do not require a $\CAR$ algebra and the \emph{Grassmann numbers}, which are usually introduced to provide anticommutative properties. Instead, the circle product $\circ_{P}$ on Grassmann algebras provides a \emph{natural} structure that combines the $\CAR$ algebra structure, Equation \eqref{eq:CAR_gras}, and the anticommutative property of Grassmann algebras, see \eqref{grassmana anticommute00}. Thus, it is only necessary having Grassmann algebras endowed with $\circ_{P}$ in order to study coherent states of fermions.
\end{remark}
\begin{proof}
Take $k,l\in\N_{0}$ and consider Expressions \eqref{eq:displ1}--\eqref{eq:displ2}. Similar to Proof of Lemma \ref{lemma:displ_weyl}, we can use Expressions \eqref{eq:oper_expCAR}--\eqref{eq:iden_fourtermsCAR} for the vector space $\wedge^{*}(\h_{P}^{(k)}\oplus\h_{P}^{(l)})$. Here, we are displacing a fix vector $\mathbb{\psi}_{|J|}^{(k)}$ the \emph{quantity} $\mathbb{\psi}_{|J|}^{(l)}$. Hence, for $\mathbb{\psi}^{(k)}_{|J|}$ and $\psi_{i}^{(l)}$ of Lemma, we take into account the circle product $\circ_{P}$ of Definition \ref{definition star}, Expressions \eqref{eq:displ1}--\eqref{eq:displ2} and \eqref{eq:oper_expCAR} in order to obtain
\begin{eqnarray*}
\mathbb{T}_{k}\left(\mathbf{\psi}_{|J|}^{(l)}\right)^{*}\circ_{P}^{(k)}\psi_{i}^{(k)}\circ_{P}^{(k)}\mathbb{T}_{k}\left(\mathbf{\psi}_{|J|}^{(l)}\right)&=&
\e^{\inner{\h_{P}^{(l)},\h_{P}^{(k)}}-\inner{\h_{P}^{(k)},\h_{P}^{(l)}}}\circ_{P}^{(k)}\psi_{i}^{(k)}\circ_{P}^{(k)}\e^{\inner{\h_{P}^{(k)},\h_{P}^{(l)}}-\inner{\h_{P}^{(l)},\h_{P}^{(k)}}}\\
&=&\psi_{i}^{(k)}+\left[A^{(k,l)},\psi_{i}^{(k)}\right]_{P}^{(k)}+\frac{1}{2}\left[A^{(k,l)},\left[A^{(k,l)},\psi_{i}^{(k)}\right]_{P}^{(k)}\right]_{P}^{(k)}\ldots, 
\end{eqnarray*}
with 
$$
A^{(k,l)}\doteq\sum_{j\in J}\left(\left(\a\psi_{j}\right)^{(l)}\psi_{j}^{(k)}-\left(\a\psi_{j}\right)^{(k)}\psi_{j}^{(l)}\right),
$$
and $[A,B]_{P}^{(k)}$ is the commutator of $A,B\in\wedge^{*}(\h_{P}^{(k)}\oplus\h_{P}^{*(l)})$ obeying the circle product $\CAR$ \eqref{eq:CAR_gras} for $A,B\in\wedge^{*}\h_{P}^{(k)}$, while it is equals to zero for $A,B\in\wedge^{*}\h_{P}^{(l)}$ according to \eqref{grassmana anticommute00}. Observe that for any basis projection $P\in\fp$ and integers $n\in\mathbb{N}_{0}$ and $k\in \{0,\ldots ,n\}$ we can define
\begin{equation}
\varkappa _{P}^{(k)}\doteq \varkappa _{(0,0)}^{(k,k)}\circ \varkappa _{P}
\label{definition xi-k}
\end{equation}
For $n\in\N_{0}$ we can extend the isomorphism $\varkappa_{P}$ of Definition \ref{def:isomor} for any $k$--copy $\wedge^{*}\H^{(k)}$ of $\wedge^{*}\H$. Thus for any $\varphi_{1},\varphi_{2}\in\H$, we have $\varphi_{1}^{(k)},\varphi_{2}^{(k)}\in\H^{(k)}$ satisfying \eqref{eq:CAR_gras}. Then, we can use this and Expression \eqref{eq:iden_fourtermsCAR} in order to calculate $\left[A^{(k,l)},\psi_{i}^{(l)}\right]_{P}^{(k)}$ as
$$
\left[A^{(k,l)},\psi_{i}^{(k)}\right]_{P}^{(k)}=\sum_{j\in J}\left\{\left(\a\psi_{j}\right)^{(k)}\circ_{P}^{(k)}\psi_{i}^{(k)}+\psi_{i}^{(k)}\circ_{P}^{(k)}\left(\a\psi_{j}\right)^{(k)}\right\}\psi_{j}^{(l)}.
$$
Hence we are able to combine Expressions \eqref{eq:CAR_gras}--\eqref{eq:prod_P}, from which one has $\left[A^{(k,l)},\psi_{i}^{(k)}\right]_{P}=\psi_{i}^{(l)}$, and hence by \eqref{grassmana anticommute00} and notation \eqref{eq:edge_product} we obtain
$$
\left[A^{(k,l)},\left[A^{(k,l)},\psi_{i}^{(k)}\right]_{P}^{(k)}\right]_{P}^{(k)}=-\left[\left(\a\psi_{j}\right)^{(l)}\wedge\psi_{i}^{(l)}+\psi_{i}^{(l)}\wedge\left(\a\psi_{j}\right)^{(l)}\right]\psi_{i}^{(k)}=0.
$$ 
This shows that the Lemma works for $\psi_{j}^{(k)}$, whereas a similar calculation shows that also works for $\left(\a\psi_{j}\right)^{(k)}$.\\
For the second part, note that for any $k,l\in\N_{0}$, all the elements of the form $\inner{\h_{P}^{(k)},\h_{P}^{(l)}}\in\wedge^{*}(\h_{P}^{(k)}\oplus\h_{P}^{*(l)})$ are even, which are in a commutative subalgebra (see Expression \eqref{grassmana anticommute00} and comments around it, as well as \eqref{def norm}). Thus for $k,l,m,n,o,p\in\N_{0}$ straightforward calculations arrive us to
$$
\left[\inner{\h_{P}^{(k)},\h_{P}^{(l)}},\left[\inner{\h_{P}^{(m)},\h_{P}^{(n)}},\inner{\h_{P}^{(o)},\h_{P}^{(p)}}\right]\right]=0.
$$
Then we apply \eqref{eq:iden_fourtermsCAR} and \eqref{eq:CBHT}, as well as a simple computations in order to obtain
\begin{eqnarray*}
\mathbb{T}_{k}\left(\mathbf{\psi}_{|J|}^{(l)}+\mathbf{\psi}_{|J|}^{(m)}\right)&=&\mathbb{T}_{k}\left(\mathbf{\psi}_{|J|}^{(l)}\right)\circ_{P}^{(k)}\mathbb{T}_{k}\left(\mathbf{\psi}_{|J|}^{(m)}\right)\e^{\frac{1}{2}\left[\inner{\h_{P}^{(l)},\h_{P}^{(k)}}-\inner{\h_{P}^{(k)},\h_{P}^{(l)}},\inner{\h_{P}^{(m)},\h_{P}^{(k)}}-\inner{\h_{P}^{(k)},\h_{P}^{(l)}}\right]}\\
&=&\mathbb{T}_{k}\left(\mathbf{\psi}_{|J|}^{(l)}\right)\circ_{P}^{(k)}\mathbb{T}_{k}\left(\mathbf{\psi}_{|J|}^{(m)}\right)\e^{-\frac{1}{2}\left(\inner{\h_{P}^{(l)},\h_{P}^{(m)}}-\inner{\h_{P}^{(m)},\h_{P}^{(l)}}\right)},
\end{eqnarray*}
which is equivalent to the desired identity.
\end{proof}\par
For $\J$ defined by \eqref{eq:finite_setJ}, consider the Clifford $\C$--algebra $\Q'\equiv(\Q',+,\cdot,^{*},\Vert\cdot\Vert_{\Q})$ with generators, the unit $\1$ and the family of self--adjoint elements $\{S_{\j}\}_{\j\in\J}$, and satisfying \eqref{eq:CAR_clif} so that $\dim\Q'=\dim\Q=2^{\dim\H}$. The product $\C$--algebra $\V'\equiv\Q\otimes\Q'$ is such a one obeying for any $\j\in\J$
$$
R_{\j}\otimes\1\equiv R_{\j}\in\V',\qquad\1\otimes S_{\j}\equiv S_{\j}\in\V'
$$ 
so that for any $\mathfrak{i},\j\in\J$
\begin{eqnarray}
R_{\j}S_{\mathfrak{i}}+S_{\mathfrak{i}}R_{\j}=0, \label{eq:RS}
\end{eqnarray}
cf. \eqref{eq:BC}. Because the isomorphism between Clifford $\C$--algebras and self--dual $\CAR$ algebras, the product $\C$--algebra $\V'$ is well--defined as well as an exponential function on $\V'$, cf. \eqref{eq:expCAR}. For $\theta\in\R$, one can \emph{displace} the operator $A\in\V'$ via the $^{*}$--automorphism $\Delta_{\theta}$ on $\V'$ defined by
\begin{equation}\label{eq:displa_clif}
\Delta_{\theta}(A)\doteq\e^{\frac{\theta}{2}\inner{S,R}}A\e^{-\frac{\theta}{2}\inner{S,R}},
\end{equation}
with $\inner{S,R}\doteq\sum\limits_{\j\in\J}S_{\j}R_{\j}\in\V'$, cf. displacement operator of Definition \ref{def:dis_weyl} and Lemma \ref{lemma:displ_weyl}. By \eqref{eq:oper_expCAR}, we are able to write
$$
\e^{\frac{\theta}{2}\inner{S,R}}A\e^{-\frac{\theta}{2}\inner{S,R}}=A+\sum_{n=1}^{\infty}\frac{\theta^{n}}{2^{n}n!}\text{ad}_{\inner{S,R}}^{n}(A)
$$
where for $n\in\N$, $\text{ad}_{\inner{S,R}}^{n}(A)\doteq[\inner{S,R},[\inner{S,R},[\ldots,A]]\ldots]]$ is the $n$--fold commutator of $A$ with $\inner{S,R}$. In particular, for $A\equiv R_{\mathfrak{i}}\in\Q$, with $\mathfrak{i}\in\J$
$$
\e^{\frac{\theta}{2}\inner{S,R}}R_{\mathfrak{i}}\e^{-\frac{\theta}{2}\inner{S,R}}=\cos\left(\theta\right)R_{\mathfrak{i}}+\sin\left(\theta\right) S_{\mathfrak{i}}.
$$
By taking $\cos\left(\theta_{\lambda}\right)=\sqrt{\lambda}$ and $\sin\left(\theta_{\lambda}\right)=\sqrt{1-\lambda}$, for $\lambda\in(0,1)$ we can write
\begin{equation}
\e^{\frac{\theta_{\lambda}}{2}\inner{S,R}}R_{\mathfrak{i}}\e^{-\frac{\theta_{\lambda}}{2}\inner{S,R}}=\sqrt{\lambda}R_{\mathfrak{i}}+\sqrt{1-\lambda}S_{\mathfrak{i}}\qquad \theta_{\lambda}\doteq\arctan\left(\sqrt{\frac{1-\lambda}{\lambda}}\right).\label{eq:dis_weyl_cliff} 
\end{equation}
I.e., by using the $^{*}$--automorphism \eqref{eq:displa_clif} one arrives to a fermionic quantum version of the addition rule at the \emph{classical phase space}. See \cite[Eqs. (4) and (6)]{konSmi14}.
\subsection{Dissipative systems and Clifford algebras}
Now, suppose that the density matrix $\rho\in\Q^{+}\cap\Q$ satisfies the quantum diffusion equation
\begin{equation}
\frac{\d}{\d t}\rho_{t}=\L\rho_{t},\qquad \rho_{t}\doteq\e^{t\L}\rho,\label{eq:qde} 
\end{equation}
for any $t\in\R_{0}^{+}$, where by definition $\rho_{0}\doteq\rho$. See \eqref{eq:Liouvillean} below. Here, $\L\in\Q$ is the infinitesimal generator or Liouvillean of the strongly continuous semigroup $\{\e^{t\L}\}_{t\in\R_{0}^{+}}$, that we will assume to be bounded, and satisfies $\L\1=0$. Explicitly, using \eqref{eq:Lindblad}, with $V_{\j}=R_{\j}$, for any $A\in\Q$, $\L$ is given by 
\begin{equation}
\L A=\sum_{\j\in\J}\left(R_{\j}[A,R_{\j}]+[R_{\j},A]R_{\j}\right)=2\sum_{\j\in\J}\left(R_{\j}AR_{\j}-A\right)=-\sum_{\j\in\J}\left[R_{\j},\left[R_{\j},A\right]\right],\label{eq:ferm_liou} 
\end{equation}
where we use \eqref{eq:CAR_clif}. Then, we can recognize by $\L$, the generator of the infinite--temperature \emph{Fermi Ornstein--Uhlenbeck semigroup}. For details see \cite{carlen2020non}. By \eqref{eq:number_gross2}, for any even element $A$ (see \eqref{eq:even odd}) one notes that the Liouvillean and the fermionic number operators are related by $\Nc=-\frac{1}{4}\L$. Then, for any $A_{1},A_{2}\in\Q^{+}\cap\Q$ even elements of $\Q$ we get:
\begin{equation}
\inner{\L A_{1},A_{2}}_{\Q}^{\text{H.S.}}=\inner{A_{1},\L A_{2}}_{\Q}^{\text{H.S.}},\label{eq:HSevenQ} 
\end{equation}
where $\inner{\cdot,\cdot}_{\Q}^{\text{H.S.}}$ denotes the Hilbert--Schmidt inner product on $\Q$ given by \eqref{eq:HSclifford}. One can verify that the strongly continuous semigroup $\{\e^{t\L}\}_{t\in\R_{0}^{+}}$ with the Liouvillean $\L$ defined by \eqref{eq:ferm_liou} is trace--preserving completely positive \cite{gorini76}, it follows that $\L$ defines a quantum channel map. For the special case of the Clifford algebra $\Q$, given some density matrix $\rho\in\Q^{+}\cap\Q$ defining a Gaussian state $\omega_{C}$, the skew--symmetric matrix covariance $C\equiv C_{\rho}$ is given by \cite{bravyi05}
\begin{equation}\label{eq:matrixCrho}
C_{\mathfrak{i},\j}\doteq
\inner{\a\psi_{\mathfrak{i}},C\psi_{\j}}_{\H}=\frac{\ii}{2}\tr_{\Q}\left(\rho[R_{\mathfrak{i}},R_{\j}]\right),
\end{equation}
for $\mathfrak{i},\j\in\J$, and satisfying $C^{\mathrm{t}}C\leq\mathrm{1}_{|\J|}$, where $\mathrm{1}_{|\J|}\in\mathrm{Mat}(|\J|,\CP)$ denotes the identity matrix, and the symbol $\mathrm{t}$ denotes the transpose matrix, see \eqref{eq:finite_setJ}--\eqref{eq:CAR_clif}. See also \eqref{eq:skewC} and comments around it. In above inequality, $C^{\mathrm{t}}C=\mathrm{1}_{|\J|}$ holds for \emph{pure} Gaussian states, that is equivalent to say that $\lambda_{j}=\pm1$ for $j\in\frac{1}{2}|\J|$ in \eqref{eq:skewLambda}, \cite{bravyi05}. As proven by Bravyi--K\"{o}nig \cite[Lemma 1]{bravyi12clas}, for an \emph{initial} Gaussian state $\omega$ with density matrix $\rho$ satisfying the differential equation \eqref{eq:qde}, the Liouvillean \eqref{eq:ferm_liou} preserves the Gaussianity of the state $\omega_{t}$ (associated to $\rho_{t}\doteq\e^{t\L}\rho$) for all non--negative times $t\in\R_{0}^{+}$. 
\begin{lemma}[Entropy variation rate]\label{lemma:qfi}
Consider a Gaussian state $\omega\in\states_{\Q}$ with associated density matrix $\rho\equiv\rho_{\omega}\in\Q^{+}\cap\Q$. Suppose that the family of generators $\{R_{\j}\}_{\j\in\J}$ of $\Q$ satisfies $\sup\limits_{\j\in\J}\left\Vert R_{\j}\rho\right\Vert_{\Q}\in\R^{+}$\footnote{Then, $\Q$ is not necessarily finite dimensional, we only require its separability.\label{footnote:separability}}. Then, for any $\j\in\J$, the quantum Fisher information is given by
\begin{equation}
    \mJ\left(\omega_{R_{\j}}\right)=-\tr_{\Q}(\rho[R_{\j},[R_{\j},\ln\rho]]),
\end{equation}
in such a way that the entropy variation rate, see \eqref{eq:entropy_var_rate}, is  
\begin{align*}
\mJ\left(\omega\right)=-\sum_{\j\in\J}\tr_{\Q}(\rho[R_{\j},[R_{\j},\ln\rho]]). 
\end{align*}
\end{lemma}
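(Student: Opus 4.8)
The plan is to differentiate the relative entropy \eqref{eq:rel_entro} twice in $\theta$ and evaluate at $\theta=0$, in the spirit of the classical and bosonic derivations. First I would record that, since $R_\j=R_\j^*$ and $R_\j^2=\1$ by \eqref{eq:CAR_clif}, the operator $\e^{\theta R_\j}=\cosh(\theta)\1+\sinh(\theta)R_\j$ is invertible with inverse $\e^{-\theta R_\j}$; hence the density matrix $\rho_{R_\j}^{(\theta)}=\e^{\theta R_\j}\rho\,\e^{-\theta R_\j}$ of \eqref{eq:displa_clif2} is a similarity transform of $\rho$, so it has the same support (the Gaussian states considered being faithful, this support is $\1$) and, since the functional calculus is invariant under similarity, $\ln\rho_{R_\j}^{(\theta)}=\e^{\theta R_\j}(\ln\rho)\e^{-\theta R_\j}$. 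Plugging this into \eqref{eq:rel_entro} collapses $\S(\omega\Vert\omega_{R_\j}^{(\theta)})$ to $\tr_\Q(\rho\ln\rho)-g(\theta)$ with $g(\theta)\doteq\tr_\Q\!\big(\rho\,\e^{\theta R_\j}(\ln\rho)\e^{-\theta R_\j}\big)$, so that by \eqref{eq:qfi} one has $\mJ(\omega_{R_\j})=-g''(0)$.

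Next I would carry out the differentiation using the elementary identity $\frac{\d}{\d\theta}\big(\e^{\theta R_\j}X\e^{-\theta R_\j}\big)=\e^{\theta R_\j}[R_\j,X]\e^{-\theta R_\j}$ for fixed $X$: this gives $g'(\theta)=\tr_\Q\big(\rho\,\e^{\theta R_\j}[R_\j,\ln\rho]\e^{-\theta R_\j}\big)$ and $g''(\theta)=\tr_\Q\big(\rho\,\e^{\theta R_\j}[R_\j,[R_\j,\ln\rho]]\e^{-\theta R_\j}\big)$, hence $g''(0)=\tr_\Q(\rho[R_\j,[R_\j,\ln\rho]])$ and $\mJ(\omega_{R_\j})=-\tr_\Q(\rho[R_\j,[R_\j,\ln\rho]])$. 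Summing over $\j\in\J$ and inserting this into the definition \eqref{eq:entropy_var_rate} of the entropy variation rate yields $\mJ(\omega)=-\sum_{\j\in\J}\tr_\Q(\rho[R_\j,[R_\j,\ln\rho]])$, which is the claim. As a sanity check, expanding the double commutator with $R_\j^2=\1$ rewrites the single--mode term as $\mJ(\omega_{R_\j})=2\tr_\Q(R_\j\rho R_\j\ln\rho)-2\tr_\Q(\rho\ln\rho)$, with $R_\j\rho R_\j$ again a density matrix, so each summand is manifestly well posed.

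I expect the only real difficulty to be the analytic justification in the separable, possibly infinite--dimensional setting the hypotheses permit (see the footnote to Lemma \ref{lemma:qfi}). Two points need attention. First, $\ln\rho$ need not be bounded; but for a Gaussian state one has $\ln\rho=\alpha\,\mathrm{C}-\ln\tr_\Q(\e^{\alpha\mathrm{C}})\,\1$ for a bilinear element $\mathrm{C}$ (quadratic in the $R$'s), and the $\CAR$ \eqref{eq:CAR_clif} force $[R_\j,\mathrm{C}]$ to be linear in the generators and $[R_\j,[R_\j,\mathrm{C}]]$ to be bounded, while the hypothesis $\sup_{\j\in\J}\Vert R_\j\rho\Vert_\Q\in\R^+$ is precisely what guarantees that the operators appearing under $\tr_\Q$ are trace class and that the sum over $\j\in\J$ converges. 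Second, one must justify differentiating under the trace: here I would show $g$ is real--analytic on a neighbourhood of $0$ by expanding $\e^{\pm\theta R_\j}$ in its norm--convergent exponential series, reducing to the absolutely convergent series $\sum_{n,m}\frac{(-1)^m\theta^{n+m}}{n!m!}\tr_\Q(\rho R_\j^n(\ln\rho)R_\j^m)$ — whose terms are finite by the norm hypothesis (and which in fact telescopes to a finite $\cosh/\sinh$ combination since $R_\j^2=\1$) — and differentiating termwise. This bookkeeping is the one genuinely technical step; everything else follows from the computation above together with the standard properties of the relative entropy recalled in the proof of Corollary \ref{corollary:Gaussian}.
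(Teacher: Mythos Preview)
Your argument is correct and reaches the same formula, but the route is genuinely different from the paper's. The key shortcut you take is the observation that conjugation by an invertible element commutes with the holomorphic functional calculus, so that $\ln\rho_{R_\j}^{(\theta)}=\e^{\theta R_\j}(\ln\rho)\e^{-\theta R_\j}$; this collapses the relative entropy to $-\mS(\omega)-g(\theta)$ and the two $\theta$--derivatives become the elementary identity $\frac{\d}{\d\theta}\big(\e^{\theta R_\j}X\e^{-\theta R_\j}\big)=\e^{\theta R_\j}[R_\j,X]\e^{-\theta R_\j}$ applied twice. The paper instead never invokes similarity invariance of $\ln$: it computes $\ln\rho_{R_\j}^{(\theta)}-\ln\rho_{R_\j}^{(\theta+\varepsilon)}$ via the resolvent integral representation \eqref{eq:func_ident1}, bounds the resulting difference quotient by $\Vert\rho^{-1}\Vert_\Q\Vert R_\j\rho\Vert_\Q$ to invoke dominated convergence, and only then recognises the outcome as a commutator through \eqref{eq:func_ident2}. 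Your approach is considerably shorter and makes the double--commutator structure transparent from the start; the paper's approach has the advantage of producing the explicit difference--quotient bound directly in terms of the stated hypothesis $\sup_\j\Vert R_\j\rho\Vert_\Q<\infty$ (together with the implicit $\Vert\rho^{-1}\Vert_\Q<\infty$), without needing to appeal to the Gaussian form of $\ln\rho$ for the analytic justification. Both arrive at the same first--derivative expression $[\ln\rho_{R_\j}^{(\theta)},R_\j]$, since $R_\j$ commutes with $\e^{\pm\theta R_\j}$.
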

\begin{proof}
Fix $R_{\j}$, with $\j\in\J$, and the assumptions of Lemma. Note that for any $\theta,\varepsilon\in\R^{+}$, we are able to write
$$
\S(\omega\Vert\omega_{R_{\j}}^{(\theta+\varepsilon)})-\S(\omega\Vert\omega_{R_{\j}}^{(\theta)})=
\tr_{\Q}\left(\rho\left(\ln\left(\rho_{R_{\j}}^{(\theta)}\right)-\ln\left(\rho_{R_{\j}}^{(\theta+\varepsilon)}\right)\right)\right),
$$
where $\S(\cdot,\cdot)$ is the relative entropy between two states given by \eqref{eq:rel_entro} and $\rho_{R_{\j}}^{(\theta)}$ is the displacement of $\rho$ according to \eqref{eq:displa_clif2} for $\theta\in\R$. Since $\rho\in\Q^{+}\cap\Q$ we can use the identity \eqref{eq:func_ident1}, in order to obtain
\begin{align}
\ln\left(\rho_{R_{\j}}^{(\theta)}\right)-\ln\left(\rho_{R_{\j}}^{(\theta+\varepsilon)}\right)=
\int_{0}^{\infty}\left(x\1+\rho_{R_{\j}}^{(\theta+\varepsilon)}\right)^{-1}\left(\rho_{R_{\j}}^{(\theta)}-\rho_{R_{\j}}^{(\theta+\varepsilon)}\right)\left(x\1+\rho_{R_{\j}}^{(\theta)}\right)^{-1}\d x.\label{eq:pivotal_qfi1}
\end{align}
Define the quantity 
$$
T_{\theta,\epsilon,\rho_{R_{\j}}}\doteq-\tr_{\Q}\left(\int_{0}^{\infty}\left(x\1+\rho_{R_{\j}}^{(\theta+\varepsilon)}\right)^{-1}\left(\rho_{R_{\j}}^{(\theta)}-\rho_{R_{\j}}^{(\theta+\varepsilon)}\right)\left(x\1+\rho_{R_{\j}}^{(\theta)}\right)^{-1}\d x\right),
$$
so that we desire to verify that the limit $\lim\limits_{\varepsilon\to0^{+}}\frac{|T_{\theta,\epsilon,\rho_{R_{\j}}}|}{\varepsilon}$ is bounded. By using similar arguments that in \cite[Example 6.2.31]{BratteliRobinson} we can show that
$$
\left|T_{\theta,\epsilon,\rho_{R_{\j}}}\right|\leq \left\Vert\left(\rho_{R_{\j}}^{(\theta+\varepsilon)}\right)^{-1}\right\Vert_{\Q}\left\Vert\rho_{R_{\j}}^{(\theta)}-\rho_{R_{\j}}^{(\theta+\varepsilon)}\right\Vert_{\Q},
$$
and using the identity
\begin{equation}
\rho_{R_{\j}}^{(\theta)}-\rho_{R_{\j}}^{(\theta+\varepsilon)}=\e^{\theta R_{\j}}\rho\e^{-\theta R_{\j}}\left(1-\e^{-\varepsilon R_{\j}}\right)-\e^{\theta R_{\j}}\left(\e^{\varepsilon R_{\j}}-1\right)\rho\e^{-(\theta+\varepsilon)R_{\j}},\label{eq:pivotal_qfi2}
\end{equation}
we get
$$
\lim_{\varepsilon\to0^{+}}\frac{\left|T_{\theta,\epsilon,\rho_{R_{\j}}}\right|}{\varepsilon}\leq2 \left\Vert\rho^{-1}\right\Vert_{\Q}\left\Vert R_{\j}\rho\right\Vert_{\Q},
$$
which, by the hypothesis of the Lemma ensures the boundedness of the limit $\lim\limits_{\varepsilon\to0^{+}}\frac{|T_{\theta,\epsilon,\rho_{R_{\j}}}|}{\varepsilon}$. On the other hand, note that
\begin{align}
\lim\limits_{\varepsilon\to0}\frac{\ln\left(\rho_{R_{\j}}^{(\theta)}\right)-\ln\left(\rho_{R_{\j}}^{(\theta+\varepsilon)}\right)}{\varepsilon}=
\int_{0}^{\infty}\left(x\1+\rho_{R_{\j}}^{(\theta)}\right)^{-1}\left(\rho_{R_{\j}}^{(\theta)}R_{\j}-R_{\j}\rho_{R_{\j}}^{(\theta)}\right)\left(x\1+\rho_{R_{\j}}^{(\theta)}\right)^{-1}\d x,\label{eq:pivotal_qfi3}
\end{align}
in particular, last expression can be rewritten as
\begin{align}
\int_{0}^{\infty}\left[R_{\j}\left(\left(x\1+\rho_{R_{\j}}^{(\theta)}\right)^{-1}-\left(x\1+\1\right)^{-1}\right)-\left(\left(x\1+\rho_{R_{\j}}^{(\theta)}\right)^{-1}-\left(x\1+\1\right)^{-1}\right)R_{\j}\right]\d x\nonumber\\
=\ln(\rho_{R_{\j}}^{(\theta)})R_{\j}-R_{\j}\ln(\rho_{R_{\j}}^{(\theta)}),\label{eq:pivotal_qfi4}
\end{align}
where we had used the identity \eqref{eq:func_ident2}. It follows combining \eqref{eq:pivotal_qfi1}, \eqref{eq:pivotal_qfi2}, \eqref{eq:pivotal_qfi3},  \eqref{eq:pivotal_qfi4} and the \emph{Lebesgue's dominated convergence theorem} that
$$
\frac{\d}{\d\varepsilon}
\S(\omega\Vert\omega_{R_{\j}}^{(\theta+\varepsilon)})=
\tr_{\Q}\left(\rho\left[\ln\left(\rho_{R_{\j}}^{(\theta)}\right),R_{\j}\right]\right).
$$
Similarly, by taking into account \eqref{eq:pivotal_qfi1}, \eqref{eq:pivotal_qfi2}, \eqref{eq:pivotal_qfi3}, \eqref{eq:pivotal_qfi4} and the Lebesgue's dominated convergence theorem one more time we obtain
\begin{align*}
\lim\limits_{\theta\to0}
\int_{0}^{\infty}\left[R_{\j}\left(\left(x\1+\rho_{R_{\j}}^{(\theta)}\right)^{-1}-\left(x\1+\rho\right)^{-1}\right)-\left(\left(x\1+\rho_{R_{\j}}^{(\theta)}\right)^{-1}-\left(x\1+\rho\right)^{-1}\right)R_{\j}\right]\d x&\\
= R_{\j}\ln\left(\rho\right)R_{\j} -R_{\j}^{2}\ln\left(\rho\right) -\ln\left(\rho\right)R_{\j}^{2}+R_{\j}\ln\left(\rho\right)R_{\j}
=-[R_{\j},[R_{\j},\ln\left(\rho\right)]]&,
\end{align*}
from which we deduce for any $R_{\j}$ that the quantum Fisher information \eqref{eq:qfi} is explicitly given by 
\begin{align*}
\mJ(\omega_{R_{\j}})=-\tr_{\Q}\left(\rho[R_{\j},[R_{\j},\ln(\rho)]]\right),
\end{align*}
and the conclusion follows.
\end{proof}
Then, we are in a position to state a fermionic version of the \emph{de Bruijin identity}:
\begin{lemma}[De Bruijin's Identity]\label{lemma:Bruijn}
Let $\L$ be the \emph{fermionic} Liouvillean \eqref{eq:ferm_liou}, and let $\omega\in\states_{\Q}$ be a Gaussian state with associated density matrix $\rho\equiv\rho_{\omega}\in\Q^{+}\cap\Q$ satisfying the differential equation \eqref{eq:qde}, and such that $\left\Vert\rho_{t}^{-1}\right\Vert_{\Q},\left\Vert\frac{\d\rho_{t}}{\d t}\right\Vert_{\Q}\in\R^{+}$ (see footnote \ref{footnote:separability}). Then, for any $t\in\R_{0}^{+}$, the fermionic \emph{de Bruijin} identity holds
\begin{align*}
\frac{\d}{\d t}\mS(\omega_{t})=\mJ(\omega_{t}),
\end{align*}
where $\mS(\omega_{t})$ and $\mJ(\omega_{t})$ are the the von Neumann entropy \eqref{eq:vNentro} and the entropy variation rate \eqref{eq:entropy_var_rate} of the state $\omega_{t}$ (provided $\rho_{t}$), respectively.
\end{lemma}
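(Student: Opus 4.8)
The plan is to differentiate the von Neumann entropy $\mS(\omega_t)=-\tr_\Q(\rho_t\ln\rho_t)$ along the flow $\rho_t=\e^{t\L}\rho$, insert the explicit form \eqref{eq:ferm_liou} of the fermionic Liouvillean, and then transport one of the two Majorana double commutators off $\rho_t$ and onto $\ln\rho_t$ by cyclicity of $\tr_\Q$, so that the result coincides with the closed expression for the quantum Fisher information obtained in Lemma \ref{lemma:qfi}.

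First I would check that the statement is well posed. Since $\{\e^{t\L}\}_{t\in\R_0^+}$ is trace preserving and completely positive, $\rho_t\in\Q^+\cap\Q$ is a state for all $t\in\R_0^+$; the hypothesis $\Vert\rho_t^{-1}\Vert_\Q\in\R^+$ gives $\rho_t\ge\Vert\rho_t^{-1}\Vert_\Q^{-1}\1>0$, so $\ln\rho_t$ is well defined (and with it $\mS(\omega_t),\mJ(\omega_t)$); and by preservation of Gaussianity under $\L$ recalled before the statement (\cite[Lemma 1]{bravyi12clas}), together with $\Vert R_\j\rho_t\Vert_\Q\le\Vert\rho_t\Vert_\Q\le1$ (each $R_\j$ is self-adjoint with $R_\j^2=\1$, so $\Vert R_\j\Vert_\Q=1$), Lemma \ref{lemma:qfi} is applicable to $\omega_t$ for every fixed $t$. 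Then, by the product rule,
\begin{equation*}
\frac{\d}{\d t}\mS(\omega_t)=-\tr_\Q\big(\dot\rho_t\ln\rho_t\big)-\tr_\Q\Big(\rho_t\,\tfrac{\d}{\d t}\ln\rho_t\Big),
\end{equation*}
and the second term vanishes: from $\ln\rho_t=\int_0^\infty\big((x\1+\1)^{-1}-(x\1+\rho_t)^{-1}\big)\d x$ one gets $\tfrac{\d}{\d t}\ln\rho_t=\int_0^\infty(x\1+\rho_t)^{-1}\dot\rho_t(x\1+\rho_t)^{-1}\d x$, whence by cyclicity and $\int_0^\infty\lambda(x+\lambda)^{-2}\d x=1$ one has $\tr_\Q(\rho_t\,\tfrac{\d}{\d t}\ln\rho_t)=\tr_\Q(\dot\rho_t)=\tr_\Q(\L\rho_t)=0$ (the last equality since $\L\rho_t$ is a sum of commutators, by \eqref{eq:ferm_liou}). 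Inserting $\L\rho_t=-\sum_{\j\in\J}[R_\j,[R_\j,\rho_t]]$ then gives
\begin{equation*}
\frac{\d}{\d t}\mS(\omega_t)=-\tr_\Q\big((\L\rho_t)\ln\rho_t\big)=\sum_{\j\in\J}\tr_\Q\big([R_\j,[R_\j,\rho_t]]\ln\rho_t\big).
\end{equation*}

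The remaining step is purely algebraic: for self-adjoint $R\in\Q$, the double-commutator superoperator $A\mapsto[R,[R,A]]$ is symmetric for the trace pairing, that is, $\tr_\Q\big([R,[R,A]]B\big)=\tr_\Q\big(A[R,[R,B]]\big)$, which follows by applying twice the elementary identity $\tr_\Q\big([R,X]Y\big)=-\tr_\Q\big(X[R,Y]\big)$ (cyclicity of $\tr_\Q$ and boundedness of $R$). Taking $R=R_\j$, $A=\rho_t$, $B=\ln\rho_t$ and summing over $\j\in\J$ rewrites the last display as $\sum_{\j\in\J}\tr_\Q\big(\rho_t[R_\j,[R_\j,\ln\rho_t]]\big)$, which by Lemma \ref{lemma:qfi} and the definition \eqref{eq:entropy_var_rate} of the entropy variation rate is exactly $\mJ(\omega_t)$, completing the argument.

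The main obstacle will be analytic rather than algebraic: one must justify rigorously the differentiation of $t\mapsto\mS(\omega_t)$ and the interchange of $\tfrac{\d}{\d t}$ with $\tr_\Q$ and with the $\int_0^\infty\d x$ in the representation of $\ln\rho_t$ above, in the separable, possibly infinite-dimensional setting of footnote \ref{footnote:separability}. This is handled exactly by the estimates already developed in the proof of Lemma \ref{lemma:qfi}: the bounds $\Vert\rho_t^{-1}\Vert_\Q,\Vert\tfrac{\d\rho_t}{\d t}\Vert_\Q\in\R^+$ provide the integrable dominating function required for Lebesgue's dominated convergence theorem, while $\tr_\Q$ being a faithful normal tracial state and the $R_\j$ being bounded legitimize every use of cyclicity. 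A secondary point is uniformity in $\j\in\J$ — which again reduces to $\Vert R_\j\Vert_\Q=1$ and the uniform bound on $[R_\j,[R_\j,\rho_t]]$ — so that the (possibly infinite) sum defining $\mJ(\omega_t)$ converges and can be exchanged with $\tfrac{\d}{\d t}$.
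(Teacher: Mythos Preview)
Your proposal is correct and follows essentially the same route as the paper: first reduce $\frac{\d}{\d t}\mS(\omega_t)$ to $-\tr_\Q\big((\L\rho_t)\ln\rho_t\big)$ via the integral representation of the logarithm and dominated convergence (the paper does this by an explicit $\varepsilon$-difference-quotient argument rather than a product rule, but the analytic content is identical), then move $\L$ onto $\ln\rho_t$ and invoke Lemma~\ref{lemma:qfi}. The only minor difference is that the paper effects the transfer $\tr_\Q((\L\rho_t)\ln\rho_t)=\tr_\Q(\rho_t\,\L(\ln\rho_t))$ by appealing to \eqref{eq:HSevenQ}, i.e.\ to the Hilbert--Schmidt self-adjointness of $\L$ restricted to \emph{even} elements (using that a Gaussian $\rho_t$, hence $\ln\rho_t$, is even), whereas you obtain the same transfer more directly from trace cyclicity via $\tr_\Q([R,X]Y)=-\tr_\Q(X[R,Y])$, which does not require the parity restriction; this is a small simplification on your part but not a different strategy.
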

\begin{proof}
For the sake of completeness for a density matrix $\rho\in\Q^{+}\cap\Q$ we first prove the following well--known relation between the Liouvillean $\L$ and the entropy $\mS(\rho_{t})\equiv\mS(\omega_{t})$ \cite{spohn78}: 
\begin{equation}
 \frac{\d}{\d t}\mS(\rho_{t})=-\tr_{\Q}(\L(\rho_{t})\ln\rho_{t}),\label{eq:entropy_liou}
\end{equation}
with $\rho_{t}=\e^{t\L}\rho$ satisfying the quantum diffusion equation \eqref{eq:qde}, for all $t\in\R_{0}^{+}$, $\mS(\rho_{t})\doteq-\tr_{\Q}(\rho_{t}\ln\rho_{t})$ and the Liouvillean \eqref{eq:ferm_liou}. In fact, consider the difference between the entropies $\mS(\rho_{t+\varepsilon})$ and $\mS(\rho_{t})$, with $\varepsilon\in\R^{+}$, that is, 
\begin{multline}
\mS(\rho_{t+\varepsilon})-\mS(\rho_{t})=-\tr_{\Q}(\rho_{t+\varepsilon}\ln\rho_{t+\varepsilon})+\tr_{\Q}(\rho_{t}\ln\rho_{t})\\
 =-\tr_{\Q}\left(\left(\rho_{t+\varepsilon}-\rho_{t}\right)\ln(\rho_{t+\varepsilon})
+\rho_{t}\int_{0}^{\infty}\left(x\1+\rho_{t+\varepsilon}\right)^{-1}\left(\rho_{t}-\rho_{t+\varepsilon}\right)\left(x\1+\rho_{t}\right)^{-1}\d x\right),
\end{multline}
where we have used some rearrangements and we took the identity \eqref{eq:func_ident1}. Define the quantity 
$$
T_{\rho_{t},\epsilon}\doteq-\tr_{\Q}\left(\rho_{t}\int_{0}^{\infty}\left(x\1+\rho_{t+\varepsilon}\right)^{-1}\left(\rho_{t}-\rho_{t+\varepsilon}\right)\left(x\1+\rho_{t}\right)^{-1}\d x\right),
$$
so that we desire to verify the boundedness of the limit $\lim\limits_{\varepsilon\to0^{+}}\frac{|T_{\rho_{t},\epsilon}|}{\varepsilon}$. In fact, by using similar arguments that in \cite[Example 6.2.31]{BratteliRobinson} we can show that
$$
\left|T_{\rho_{t},\epsilon}\right|\leq \Vert\rho_{t+\varepsilon}^{-1}\Vert_{\Q}\Vert\rho_{t+\varepsilon}-\rho_{t}\Vert_{\Q},
$$
and since $\rho_{t+\varepsilon}-\rho_{t}=\left(\e^{\varepsilon\L}-1\right)\e^{t\L}\rho$, we obtain 
$$
\lim_{\varepsilon\to0^{+}}\frac{\left|T_{\rho_{t},\epsilon}\right|}{\varepsilon}\leq \Vert\rho_{t}^{-1}\Vert_{\Q}\Vert\L\rho_{t}\Vert_{\Q},
$$
which, by the hypothesis of the Lemma ensures that $\lim\limits_{\varepsilon\to0^{+}}\frac{|T_{\rho_{t},\epsilon}|}{\varepsilon}$ is bounded. Then, by the Lebesgue's dominated convergence theorem, $\frac{\d}{\d t}\mS(t)\doteq\lim\limits_{\varepsilon\to0^{+}}\frac{\mS(\rho_{t+\varepsilon})-\mS(\rho_{t})}{\varepsilon}$ can be written as  
\begin{eqnarray*}
\frac{\d}{\d t}\mS(t)&=&-\tr_{\Q}\left(\L\left(\rho_{t}\right)\ln(\rho_{t})-\rho_{t}\int_{0}^{\infty}\left(x\1+\rho_{t}\right)^{-1}\L(\rho_{t})\left(x\1+\rho_{t}\right)^{-1}\d x\right)\\ 
&=&-\tr_{\Q}\left(\L\left(\rho_{t}\right)\ln(\rho_{t})-\L(\rho_{t})\right)\\ 
&=&-\tr_{\Q}\left(\L\left(\rho_{t}\right)\ln(\rho_{t})\right),
\end{eqnarray*}
where we use the representation identity $\int\limits_{0}^{\infty}\left(x\1+\rho_{t}\right)^{-1}\rho_{t}\left(x\1+\rho_{t}\right)^{-1}\d x=\1$ and the Liouvillean given by \eqref{eq:ferm_liou}, proving \eqref{eq:entropy_liou}. Now, by taking into account expression \eqref{eq:HSevenQ} we can write
$$ 
\frac{\d}{\d t}\mS(\rho_{t})=-\tr_{\Q}\left(\rho_{t}\L\left(\ln\left(\rho_{t}\right)\right)\right).
$$
Finally, one use Lemma \ref{lemma:Bruijn} and the Liouvillean given by Expression \eqref{eq:ferm_liou} in order to get
\begin{align*}
\frac{\d}{\d t}\mS(\omega_{t})=\mJ(\omega_{t}).
\end{align*}
\end{proof}
We are ready to state a \emph{Stam inequality} for fermion systems: 
\begin{lemma}[Stam Inequality]\label{lemma:stam}
Let $\mathbf{A}$ and $\mathbf{B}$ be two interacting fermion systems of the same size, $|\mathbf{A}|=|\mathbf{B}|=\Nc$, and suppose that both are described by the Gaussian states $\omega_{\mathbf{A}}$ and $\omega_{\mathbf{B}}$, respectively. For $\cal\doteq[0,1]$ take $\lambda\in\cal$ and define $\lambda_{\mathbf{A}}\doteq\lambda,\lambda_{\mathbf{B}}\doteq1-\lambda$. For the real numbers $\alpha,\beta\in\R$ we have the \emph{Stam inequality} 
$$
\eta^{2}\mJ\left(\omega_{\mathbf{I}}\right)\leq\alpha^{2}\mJ\left(\omega_{\mathbf{A}}\right)+\beta^{2}\mJ\left(\omega_{\mathbf{B}}\right),
$$
where $\eta\doteq\sqrt{\lambda_{\mathbf{A}}}\alpha+\sqrt{\lambda_{\mathbf{B}}}\beta$ and $\omega_{\mathbf{I}}^{(\lambda)}$ is the output Gaussian state associated to the density matrix \eqref{eq:out_gaussian}. 
\end{lemma}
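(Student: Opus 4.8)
The plan is to reproduce in the Clifford framework the classical argument of Stam and Blachman \cite{blachman}, in the quantum form used by K\"onig--Smith \cite{konSmi14} and Palma et al.\ \cite{palmargio14}; the genuinely new ingredients are the explicit double--commutator form of the quantum Fisher information from Lemma \ref{lemma:qfi}, the action \eqref{eq:dis_weyl_cliff} of the beam splitter on the Majorana generators, and the quasi--free vanishing relations \eqref{ass O0-00}. As a first step, the quadratic homogeneity of $\mJ$ in the displacement direction (Corollary \ref{corollary:Gaussian}, Part~1) lets one replace $(\alpha,\beta)$ by $(\alpha/\eta,\beta/\eta)$, so it suffices to treat $\eta=\sqrt{\lambda_{\mathbf{A}}}\alpha+\sqrt{\lambda_{\mathbf{B}}}\beta=1$; writing $a\doteq\sqrt{\lambda_{\mathbf{A}}}\alpha$ and $b\doteq\sqrt{\lambda_{\mathbf{B}}}\beta$ one then has $a+b=1$ and must show
\begin{equation}
\mJ\left(\omega_{\mathbf{I}}\right)\leq\frac{a^{2}}{\lambda_{\mathbf{A}}}\mJ\left(\omega_{\mathbf{A}}\right)+\frac{b^{2}}{\lambda_{\mathbf{B}}}\mJ\left(\omega_{\mathbf{B}}\right).\label{eq:stam_std}
\end{equation}

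The core is a fermionic Blachman identity for the score operators. Since $\rho_{\mathbf{I}}=\E_{\mathbb{U}_{\lambda}}(\rho_{\mathbf{A}}\otimes\rho_{\mathbf{B}})$ by \eqref{eq:out_gaussian}, and since $\mathbb{U}_{\lambda}$ conjugates $R_{\j}$ to $\sqrt{\lambda_{\mathbf{A}}}R_{\j}+\sqrt{\lambda_{\mathbf{B}}}S_{\j}$ by \eqref{eq:dis_weyl_cliff} while the tracial partial trace $\tr_{\mathbf{B}}$ annihilates odd elements, so that $\e^{\theta R_{\j}}$ commutes through it, one obtains the exact covariance relation
\begin{equation}
\e^{\theta R_{\j}}\rho_{\mathbf{I}}\e^{-\theta R_{\j}}=\E_{\mathbb{U}_{\lambda}}\!\left(\e^{\theta\left(\sqrt{\lambda_{\mathbf{A}}}R_{\j}+\sqrt{\lambda_{\mathbf{B}}}S_{\j}\right)}\left(\rho_{\mathbf{A}}\otimes\rho_{\mathbf{B}}\right)\e^{-\theta\left(\sqrt{\lambda_{\mathbf{A}}}R_{\j}+\sqrt{\lambda_{\mathbf{B}}}S_{\j}\right)}\right).\label{eq:chan_cov}
\end{equation}
Because $\mathbb{U}_{\lambda}$ also conjugates $S_{\j}$ to $-\sqrt{\lambda_{\mathbf{B}}}R_{\j}+\sqrt{\lambda_{\mathbf{A}}}S_{\j}$, differentiating \eqref{eq:chan_cov} at $\theta=0$ — and using that the logarithm of a Gaussian density matrix is a bilinear element, so its conditional expectation onto $\Q_{\mathbf{I}}$ is again bilinear with covariance computed from Lemma \ref{lemma:displ_weyl}, Part~2, and \eqref{eq:matrixCrho} — shows that the score operator of $\omega_{\mathbf{I}}$ along $R_{\j}$ is simultaneously $\lambda_{\mathbf{A}}^{-1/2}$ times the conditional expectation onto $\Q_{\mathbf{I}}$ of the score of $\omega_{\mathbf{A}}\otimes\omega_{\mathbf{B}}$ along $R_{\j}\otimes\1$, and $\lambda_{\mathbf{B}}^{-1/2}$ times the one along $\1\otimes S_{\j}$. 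Forming the convex combination with weights $a,b$, using that taking this conditional expectation can only decrease the Fisher information (monotonicity of the relative entropy under $\E_{\mathbb{U}_{\lambda}}$, Corollary \ref{corollary:Gaussian}, Part~4), and then additivity over the two independent factors together with homogeneity (Corollary \ref{corollary:Gaussian}, Parts~3 and~1) to evaluate the right--hand side — the cross terms in the double commutator of Lemma \ref{lemma:qfi} being traced against a product state whose odd one--point functions vanish by \eqref{ass O0-00} — yields \eqref{eq:stam_std} mode by mode, hence in full upon summing over $\j\in\J$.

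Finally, one records the combination of noise times needed later. Since $\mathbb{U}_{\lambda}$ is a Bogoliubov transformation and the generator $\L_{\mathbf{A}}\otimes\1+\1\otimes\L_{\mathbf{B}}$ of the product heat flow is invariant under the orthogonal rotation it implements, the fermionic semigroup \eqref{eq:ferm_liou} intertwines with the beam splitter: for all $t_{\mathbf{A}},t_{\mathbf{B}}\in\R_{0}^{+}$,
\begin{equation}
\E_{\mathbb{U}_{\lambda}}\!\left(\e^{t_{\mathbf{A}}\L_{\mathbf{A}}}\rho_{\mathbf{A}}\otimes\e^{t_{\mathbf{B}}\L_{\mathbf{B}}}\rho_{\mathbf{B}}\right)=\e^{t_{\mathbf{I}}\L_{\mathbf{I}}}\rho_{\mathbf{I}},\qquad t_{\mathbf{I}}\doteq\lambda_{\mathbf{A}}t_{\mathbf{A}}+\lambda_{\mathbf{B}}t_{\mathbf{B}},\label{eq:timesIAB}
\end{equation}
the fermionic form of ``a beam splitter of two independent Gaussian noises is a Gaussian noise'', which supplies the identity invoked in item~ii of the proof of Theorem \ref{theo:main_result}.

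The step I expect to be the main obstacle is the fermionic Blachman identity, i.e.\ establishing \eqref{eq:chan_cov} and the ensuing two--sided description of the score of $\omega_{\mathbf{I}}$. Because $\Q\otimes\Q'$ is a \emph{graded} tensor product in which $R_{\j}$ and $S_{\mathfrak{i}}$ anticommute, cf.\ \eqref{eq:RS}, the signs must be tracked carefully when pulling displacements through $\mathbb{U}_{\lambda}$ and the partial trace; and one must verify that conditioning the bilinear generator of $\ln(\rho_{\mathbf{A}}\otimes\rho_{\mathbf{B}})$ onto $\Q_{\mathbf{I}}$ produces precisely the bilinear generator of $\ln\rho_{\mathbf{I}}$, i.e.\ that the beam splitter followed by the partial trace maps Gaussian states to Gaussian states with the expected covariance — which is where Lemma \ref{lemma:displ_weyl} and the covariance formula \eqref{eq:matrixCrho} do the work. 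Once these identities are in place, the convexity and optimisation steps are routine, exactly as in \cite{konSmi14,palmargio14}.
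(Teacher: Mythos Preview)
Your ingredients are the right ones --- the beam--splitter/displacement intertwining, together with monotonicity, additivity and homogeneity of the Fisher information from Corollary~\ref{corollary:Gaussian} --- and the time relation \eqref{eq:timesIAB} you record is exactly what the paper derives. The paper's argument is, however, noticeably more direct than your score--operator route. Rather than the special covariance relation \eqref{eq:chan_cov} (which is the case $\alpha=\sqrt{\lambda_{\mathbf A}}$, $\beta=\sqrt{\lambda_{\mathbf B}}$, $\eta=1$), the paper establishes at once the two--parameter compatibility
\[
\E_{\mathbb U_{\lambda}}\!\left(\bigl(\rho_{\mathbf A}\bigr)_{R_{\j}}^{(\alpha\theta)}\otimes\bigl(\rho_{\mathbf B}\bigr)_{R_{\j}}^{(\beta\theta)}\right)=\bigl(\rho_{\mathbf I}\bigr)_{R_{\j}}^{(\eta\theta)},\qquad \eta=\sqrt{\lambda_{\mathbf A}}\,\alpha+\sqrt{\lambda_{\mathbf B}}\,\beta,
\]
for arbitrary $\alpha,\beta\in\R$, from the Heisenberg action of $\mathbb U_{\lambda}$ on the Majoranas. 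One then applies data processing to $\S(\omega_{\mathbf I}\Vert(\omega_{\mathbf I})_{R_{\j}}^{(\eta\theta)})\le \S(\omega_{\mathbf A}\Vert(\omega_{\mathbf A})_{R_{\j}}^{(\alpha\theta)})+\S(\omega_{\mathbf B}\Vert(\omega_{\mathbf B})_{R_{\j}}^{(\beta\theta)})$ and differentiates twice at $\theta=0$; the Stam inequality drops out immediately, without any normalisation $\eta=1$, without conditional expectations of score operators, and without the vanishing of odd one--point functions.

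Your detour through ``the score of $\omega_{\mathbf I}$ is simultaneously $\lambda_{\mathbf A}^{-1/2}$ and $\lambda_{\mathbf B}^{-1/2}$ times a conditional expectation'' is the place to be cautious: it tacitly uses that the conditional expectation of $\ln(\rho_{\mathbf A}\otimes\rho_{\mathbf B})$ onto $\Q_{\mathbf I}$ equals $\ln\rho_{\mathbf I}$, which is \emph{not} automatic even for Gaussians (the map from covariance to Hamiltonian is nonlinear), and the graded anticommutation \eqref{eq:RS} makes the ``two representations of the score'' delicate to justify. The paper sidesteps all of this by never leaving the relative--entropy definition \eqref{eq:qfi} of $\mJ$. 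In short: your plan works, but you are carrying more Blachman baggage than the fermionic argument needs; the two--parameter displacement covariance above is the cleaner organising identity.
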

\begin{proof}
In order to prove the Lemma we need some pivotal results, which are similar to that given in \cite{konSmi14,palmargio14}. In fact, one can proof Expressions \cite[Eq. (46)--(47)]{konSmi14} via the matrix \eqref{eq:matrixM}, and observing that the covariance matrix of the composite system $\mathbf{I}\equiv\mathbf{A}\cup\mathbf{B}$ is \cite{gubner06}
$$
\gamma\doteq 
\begin{pmatrix}
\gamma_{\mathbf{A}} & \gamma_{\mathbf{A},\mathbf{B}}\\
\gamma_{\mathbf{B},\mathbf{A}} & \gamma_{\mathbf{B}} 
\end{pmatrix},\qquad\text{with}\qquad \gamma^{\mathrm{t}}\gamma\leq1_{2\Nc},\gamma_{\mathbf{B},\mathbf{A}}=-\gamma_{\mathbf{A},\mathbf{B}}^{\mathrm{t}},
$$
such that the covariance of the reduced density matrix given by \eqref{eq:out_gaussian}, i.e., $\rho_{\mathbf{I}}^{(\lambda)}\doteq\tr_{\mathbf{B}}\left(\mathbb{U}_{\lambda}^{*}\left(\rho_{\mathbf{A}}\otimes\rho_{\mathbf{B}}\right)\mathbb{U}_{\lambda}\right)$ is $\gamma_{\mathbf{I}}^{(\lambda)}=\lambda_{\textbf{A}}\gamma_{\mathbf{A}}+\lambda_{\textbf{B}}\gamma_{\mathbf{B}}-\sqrt{\lambda_{\textbf{A}}\lambda_{\textbf{B}}}(\gamma_{\mathbf{A},\mathbf{B}}+\gamma_{\mathbf{B},\mathbf{A}})$ whereas the displacement vector is (see Lemma \ref{lemma:displ_weyl})
$$
\mathrm{D}(\psi_{j})=\sqrt{\lambda_{\textbf{A}}}\B(\psi_{j})+\sqrt{\lambda_{\textbf{B}}}\mathrm{C}(\psi_{j}),\quad j\in J,
$$
c.f., \eqref{eq:dis_weyl_cliff}. Additionally, we desire to verify the compatibility of the Liouvillean \eqref{eq:ferm_liou} and the beam--splitter operator of Definition \ref{def:dis_weyl}, see also \eqref{eq:displa_clif}. Note that the Liouvillean \eqref{eq:ferm_liou} and the quantum channel expressed by the density matrix \eqref{eq:out_gaussian} are related by $\E_{\mathbb{U}_{\lambda}}\left(\e^{t_{\mathbf{A}}\L}\otimes\e^{t_{\mathbf{B}}\L}\right)(\rho_{\mathbf{A}}\otimes\rho_{\mathbf{B}})=\e^{t_{\mathbf{I}}\L}\E_{\mathbb{U}_{\lambda}}(\rho_{\mathbf{A}}\otimes\rho_{\mathbf{B}})$, such that the relation between covariance matrices is 
\begin{equation}
\lambda_{\textbf{A}}\tau_{t_{\textbf{A}}}\left(\gamma_{\textbf{A}}^{(\lambda)}\right)+\lambda_{\textbf{B}}\tau_{t_{\textbf{B}}}\left(\gamma_{\textbf{B}}^{(\lambda)}\right)=\tau_{t_{\textbf{I}}}\left(\gamma_{\textbf{I}}^{(\lambda)}\right)\qquad\text{with}\qquad\tau_{0}\left(\gamma_{\textbf{C}}^{(\lambda)}\right)=\gamma_{\textbf{C}}^{(\lambda)},\,\mathbf{C}\in\{\mathbf{A},\mathbf{B},\mathbf{I}\}.\label{eq:covABI}
\end{equation}
In above expressions, for $\mathbf{C}\in\{\mathbf{A},\mathbf{B},\mathbf{I}\}$, $\gamma_{\mathbf{C}},t_{\mathbf{C}}\in\R_{0}^{+}$ and $\tau_{t_{\textbf{C}}}\left(\gamma_{\textbf{C}}^{(\lambda)}\right)$ denote the covariance matrix, the time associated to the system $\mathbf{C}$ and the time \emph{automorphism} on the algebra of complex matrices $\mathrm{Mat}(N,\CP)$ of size $N\times N$, respectively. Note that $\tau_{0}\left(\gamma_{\textbf{C}}^{(\lambda)}\right)=\gamma_{\textbf{C}}^{(\lambda)}$. Expression \eqref{eq:covABI} physically means that the systems $\mathbf{A},\mathbf{B}$ and $\mathbf{I}$ evolve independently of each other. Thus we can take, for example, that the times $t_{\mathbf{C}}\in C(\R_{0}^{+};\R_{0}^{+})$ are continuous functions of a \emph{universal} time $t\in\R_{0}^{+}$, such that $t_{\mathbf{C}}(0)=0$. By \eqref{eq:covABI}, we observe that $\gamma_{\textbf{I}}^{(\lambda)}=\lambda_{\textbf{A}}\gamma_{\textbf{A}}+\lambda_{\textbf{B}}\gamma_{\textbf{B}}$, and denoting by $\tau_{t_{\mathbf{C}}(t)}(\gamma_{\mathbf{C}})\equiv\tau_{t}(\gamma_{\mathbf{C}})$, we find that the time evolution of $\gamma_{\textbf{C}}$ is
$$
\lambda_{\textbf{A}}\tau_{t}(\gamma_{\mathbf{A}})+\lambda_{\textbf{B}}\tau_{t}(\gamma_{\mathbf{B}})=\tau_{t}(\lambda_{\textbf{A}}\gamma_{\textbf{A}}+\lambda_{\textbf{B}}\gamma_{\textbf{B}}),
$$
which means that $\tau_{t}\colon\mathrm{Mat}(|\J|,\CP)\to\mathrm{Mat}(|\J|,\CP)$ is an \emph{affine transformation}, i.e., the \emph{time evolution} of $\gamma\in\mathrm{Mat}(|\J|,\CP)$ it must have the form $\tau_{t}(\gamma)=\gamma+tS$, where $\gamma,S\in\mathrm{Mat}(|\J|,\CP)$ are both skew--symmetric matrices, c.f. \cite[Lemmata III.1--III.2]{konSmi14}. Returning to \eqref{eq:covABI} we deduce that the compatibility of times is described by 
\begin{equation}
t_{\textbf{I}}(t)=\lambda_{\mathbf{A}}t_{\mathbf{A}}(t)+\lambda_{\mathbf{B}}t_{\mathbf{B}}(t)\quad\text{with}\quad t\in\R_{0}^{+}.\label{eq:timesIAB} 
\end{equation}
What is missing to verify is the compatibility of the displacement operator and the quantum channel: 
$$
\E_{\mathbb{U}_{\lambda}}\left(\left(\rho_{\mathbf{A}}\right)_{R_{\j}}^{(\alpha\theta)}\otimes\left(\rho_{\mathbf{B}}\right)_{R_{\j}}^{(\beta\theta)}\right)=\left(\rho_{\mathbf{I}}\right)_{R_{\j}}^{(\eta\theta)}=\left(\E_{\mathbb{U}_{\lambda}}(\rho_{\mathbf{A}}\otimes\rho_{\mathbf{B}})\right)_{R_{\j}}^{(\eta\theta)}
$$
for $\alpha,\beta,\theta\in\R$, some parameter $\eta$ depending of $\alpha$ and $\beta$, and for $\mathbf{C}\in\{\mathbf{A},\mathbf{B},\mathbf{I}\}$, $\left(\rho_{\mathbf{C}}\right)_{R_{\j}}^{(\theta)}$ is given by \eqref{eq:displa_clif2} for any self--adjoint element $R_{\j}$ of $\Q$, the Clifford algebra, and $\j\in\J$ defined by \eqref{eq:finite_setJ}. Similar to \cite{palmargio14} by \eqref{eq:displa_clif}, one obtain that $\eta=\sqrt{\lambda_{\mathbf{A}}}\alpha+\sqrt{\lambda_{\mathbf{B}}}\beta$. Then, straightforward calculations combinating Corollary \ref{corollary:Gaussian} and Lemmata \ref{lemma:qfi}, \ref{lemma:Bruijn} yield us to the inequality: $\eta^{2}\mJ\left(\omega_{\mathbf{I}}\right)\leq\alpha^{2}\mJ\left(\omega_{\mathbf{A}}\right)+\beta^{2}\mJ\left(\omega_{\mathbf{B}}\right)$. For complete details see \cite{palmargio14}.
\end{proof}
\appendix
\section{Open Systems, Quantum Channels and \eqt{$\C$}--algebras}\label{appen:qin}
If the physical system $\mathbf{S}$ interacts with its environment $\mathbf{E}$, \emph{dissipation processes} can occur (e.g. heat flow, diffusion processes, etc.).
To tackle these kind of problems one needs results and methods from \emph{quantum dynamical semigroups}. In this appendix we employ basic operator algebras in order to provide an introductory setting in the scope of open quantum systems and quantum information theory. In turn, we introduce relevant definitions and expressions used throughout the entire paper.
\subsection*{Completely positive maps}
Let $\H_{1}$ and $\H_{2}$ be two finite Hilbert spaces, with dimensions $\dim\H_{1}=m$ and $\dim\H_{2}=n$. The \emph{tensorial product} $\H_{1}\otimes\H_{2}$ is a Hilbert space such that for each $\varphi_{1}\in\H_{1}$ and each $\varphi_{2}\in\H_{2}$, $\varphi_{1}\otimes\varphi_{2}$ is a bilinear form. If $\{\psi_{1,i}\}_{i\in I}$ and $\{\psi_{2,i'}\}_{i'\in I'}$ are basis of $\H_{1}$ and $\H_{2}$ respectively, then $\{\psi_{1,i}\otimes\psi_{2,i'}\}_{(i,i')\in I\times I'}$ is a basis of $\H_{1}\otimes\H_{2}$. In particular, the Hilbert space $\H_{1}\otimes\H_{2}$ has dimension $|I|\times|I'|=mn$, its inner product is given by
$$
\inner{\zeta_{1}\otimes\zeta_{2},\varphi_{1}\otimes\varphi_{2}}_{\H_{1}\otimes\H_{2}}=\inner{\zeta_{1},\varphi_{1}}_{\H_{1}}\inner{\zeta_{2},\varphi_{2}}_{\H_{2}}, \qquad\zeta_{1},\varphi_{1}\in\H_{1},\,\zeta_{2},\varphi_{2}\in\H_{2},
$$
and for any $A\in\BL(\H_{1}), B\in\BL(\H_{2})$, $A\otimes B\in\BL(\H_{1}\otimes\H_{2})$, such that for any $\varphi_{1}\in\H_{1}$ and $\varphi_{2}\in\H_{2}$ is satisfied
$$
(A\otimes B)(\varphi_{1}\otimes\varphi_{2})=A\varphi_{1}\otimes B\varphi_{2}.
$$
Note that the Hilbert spaces $\H_{1}$ and $\H_{2}$ are isomorphic to the vectorial spaces $\CP^{m}$ and $\CP^{n}$ respectively, whereas $\H_{1}\otimes\H_{2}$ is isomorphic to $\CP^{mn}$. As is usual, for $N\in\N$, $\BL(\CP^{N})$ denotes the $^{*}$--algebra of all the continuous linear transformations of $\CP^{N}$ to $\CP^{N}$. Hence, if $N=mn$, $\BL(\CP^{m})$ and $\BL(\CP^{n})$ will be defined as sub $^{*}$--algebras given by $A_{1}\otimes\mathbf{1}_{\mathrm{Mat}(n,\CP)}$ and $\mathbf{1}_{\mathrm{Mat}(m,\CP)}\otimes A_{2}$. Here, for any $N\in\N$, $\mathbf{1}_{\mathrm{Mat}(N,\CP)}$ denotes the identity map on the set of complex matrices $\mathrm{Mat}(N,\CP)$ of size $N\times N$. With this notation we are able to define:
\begin{definition}[Completely positive maps]\label{def:comp_pos:map}
Let $m,n\in\N$ be two positive natural numbers. We say that the map $\Phi\colon\BL(\CP^{m})\to\BL(\CP^{n})$ is ``positive'' or ``positivity preserving operator'' if $A\geq0$\footnote{An operator $A\in\BL(\CP^{m})$ is positive if it is self--adjoint and $\spec(A)\geq0$.} implies that $\Phi(A)\geq0$. Additionally, if for any $n\in\N$
$$
\Phi\otimes\mathbf{1}_{\mathrm{Mat}(n,\CP)}\colon\BL(\CP^{m})\otimes\BL(\CP^{n})\to\BL(\CP^{m})\otimes\BL(\CP^{n}).
$$ 
is a positivity preserving operator, we say that $\Phi$ is \emph{completely positive}.  
\end{definition}
An important example of completely positive maps are partial traces, which are canonically introduced as follows:\\
The partial trace over $\BL(\CP^{n})$ is the unique linear transformation $\Tr_{\CP^{n}}\colon\BL(\CP^{mn})\to\BL(\CP^{m})$ satisfying
$$
\Tr_{\CP^{mn}}((A_{1}\otimes\mathbf{1}_{\mathrm{Mat}(n,\CP)})A_{2})=\Tr_{\CP^{m}}(A_{1}\Tr_{\CP^{n}}(A_{2})).
$$
Similarly, the partial trace over $\BL(\CP^{m})$ is the unique linear transformation $\Tr_{\CP^{m}}\colon\BL(\CP^{mn})\to\BL(\CP^{n})$ such that
$$
\Tr_{\CP^{mn}}(A_{1}(\mathbf{1}_{\mathrm{Mat}(m,\CP)}\otimes A_{2}))=\Tr_{\CP^{n}}(\Tr_{\CP^{m}}(A_{1})A_{2}).
$$
We say that an algebra $\W\equiv(\W,+,\cdot,^{*},\Vert\cdot\Vert_{\W})$ is a $\C$--algebra if it is equipped with an involution $^{*}$, it is complete, and it is endowed with a norm $\Vert\cdot\Vert_{\W}$ satisfying $\Vert A^{*}A\Vert_{\W}=\Vert A\Vert_{\W}^{2}$ for all $A\in\W$. $\W$ is said ``unital'' if it is embedded with a unit or identity operator, $\1$. If $\W_{1}$ and $\W_{2}$ are two finite $\C$--algebras isomorphic to $\BL(\CP^{m})$ and $\BL(\CP^{n})$ respectively, then the \emph{product $\C$--algebra} $\W\equiv\W_{1}\otimes\W_{2}$ is isomorphic to $\BL(\CP^{N})$, for $N=mn$. Here, an operator $A_{1}\in\W_{1}$ is view as an operator $A_{1}\otimes\mathbf{1}_{\mathrm{Mat}(n,\CP)}\in\W$, while, the operator $A_{2}\in\W_{2}$ is recognized as the operator $\mathbf{1}_{\mathrm{Mat}(m,\CP)}\otimes A_{2}\in\W$.
\subsection*{Quantum dynamical semigroups}
Consider a unital $\C$--algebra $\W$, and we denote its norm by $\Vert\cdot\Vert_{\W}$, we define:
\begin{definition}[Quantum dynamical semigroup]\label{def:semigroup}
Let $\W_{1}\subset\W$ be a unital subalgebra of $\W$. A semigroup on $\W_{1}$ is understood as a family $\P\doteq\{\P_{t}\}_{t\in\R_{0}^{+}}\in\BL(\W_{1})$ such that for any $s,t\in\R_{0}^{+}$ we have
\begin{equation}\label{eq:semigroup}
\P_{s}\P_{t}=\P_{s+t}, \qquad\text{and}\qquad\P_{0}=\1.
\end{equation}
$\P$ is a ``quantum dynamical semigroup (QDS)'' or a ``quantum Markov semigroup'' if 
$\P$ is a strongly continuous semigroup (or $C_{0}$--semigroup), i.e., if it is continuous in the strong operator topology, that is, for any $A\in\W_{1}$ one has $\lim\limits_{t\to0}\Vert\P_{t}A-A\Vert=0$. 
\end{definition}
One can check that for each $t\in\R_{0}^{+}$ exists constants $C\in\R$ and $D\geq1$ such that \cite{EngelNagel}
$$
\Vert\P_{t}\Vert_{\W_{1}}\leq D\e^{Ct}.
$$
Note that if $C=0$, then $\P$ is bounded. Additionally, if $C=0$ and $D=1$ then $\P$ is said ``contractive'' or is called a ``semigroup of contractions''. The semigroup is said to be an isometry if for any $A\in\W_{1}$ and $t\in\R_{0}^{+}$, we have $\Vert A\P_{t}\Vert_{\W_{1}}=\Vert A\Vert_{\W_{1}}$. Finally, if for any $N\in\N$ and $t\in\R_{0}^{+}$, $\P_{t}\otimes\mathbf{1}_{\mathrm{Mat}(N,\CP)}$ is a positivity preserving operator, we say that $\P$ is \emph{completely positive}. See Definition \ref{def:comp_pos:map}. In the latter case, for any fix $t$ in compact intervals, $\P_{t}$ is named a quantum dynamical map.\\
We naturally introduce the exponential function on $\W$ by
\begin{equation}
\e^{A}\doteq\1+\sum_{n=1}^{\infty}\frac{A^{n}}{n!},\qquad A\in\W,\label{eq:expCAR} 
\end{equation}
which is an absolutely convergent series on $\W$ if and only if there is $C_{A}\in\R_{0}^{+}$, such that for any $n\in\N$, 
$$
\Vert A\Vert_{\W}\leq C_{A}^{n}.
$$ 
Note that for any bounded operators $A,B\in\W$, we are able to apply the \emph{operator expansion theorem} 
\begin{equation}
\e^{B}A\e^{-B}=A+[B,A]+\frac{1}{2!}[B,[B,A]]+\cdots\equiv A+\sum_{n=1}^{\infty}\frac{\text{ad}_{B}^{n}(A)}{n!},\label{eq:oper_expCAR}
\end{equation}
where for $n\in\N$, $\text{ad}_{B}^{n}(A)\doteq[B,[B,[\ldots,A]]\ldots]]\in\W$ is the $n$--fold commutator of $A$ with $B$. Furthermore, for $A,B,C,D\in\W$ the identity
\begin{equation}
[AB,CD]=A\{B,C\}D-\{A,C\}BD+CA\{B,D\}-C\{A,D\}B,\label{eq:iden_fourtermsCAR}
\end{equation}
holds. Here, $[A,B]\doteq AB-BC\in\W$, $\{A,B\}\doteq AB+BA\in\W$ are the usual ``commutator'' and ``anticommutator'' operators on $\W$ of $A$ with $B$, respectively. Note that the series expressed in \eqref{eq:oper_expCAR} is absolutely convergent if there is $D_{A,B}\in\R_{0}^{+}$, such that for any $n\in\N$, 
$$
\Vert\text{ad}_{B}^{n}(A)\Vert_{\W}\leq D_{A,B}^{n}.
$$
Note that for $\C$--algebras, one can invoke the \emph{Campbell--Baker--Hausdorff Theorem}. In fact, it is a standard procedure shows that for $A,B\in\W$ such that $[A,[A,B]]=[B,[A,B]]=0$, we have
\begin{equation}\label{eq:CBHT}
\e^{A+B}=\e^{A}\e^{B}\e^{-\frac{1}{2}[A,B]}. 
\end{equation}
Concerning quantum dynamical maps, we know that for any $\P_{t}$ we are able to write it in terms of its infinitesimal generator $\L\in\W$ (a posibly unbounded operator on $\W$) or \emph{Liouvillean} as $\P_{t}=\e^{t\L}$ such that for any $A\in\W_{1}$ and $t\in\R_{0}^{+}$ we have
\begin{equation}
\frac{\d}{\d t}A_{t}=\L A_{t},\qquad A_{t}\doteq\e^{t\L}A.\label{eq:Liouvillean} 
\end{equation}
If $\W_{1}$ is a finite $\C$--algebra, the (bounded) Liouvillean is explicitly given by the \emph{Lindblad form}, that is, \cite{alickiLendi}
\begin{equation}
\L A=\ii[H,A]+\sum_{i\in I}V_{i}^{*}[A,V_{i}]+[V_{i}^{*},A]V_{i},\label{eq:Lindblad}
\end{equation}
where $I$ is an index set, $H\in\W_{1}$ is a self--adjoint operator known as the \emph{Hamiltonian} of the open quantum system described by $\W_{1}$, and $V_{i}\in\W$.
For unbounded $\L$, there is not such explicit form for this, see \cite{alickiLendi} for further details.
\subsection*{States}
Consider a unital separable $\C$--algebra $\W$. A linear functional $\omega\in\W^{*}$ is a ``state'' if it is positive and normalized, i.e., if for all $A\in\W,\omega(A^{*}A)\geq0$ and $\omega(\1)=1$. In the sequel, $\states_{\W}\subset\W^{*}$ will denote the set of all states on $\W$. Note that any $\omega\in\states_{\W}$ is \emph{Hermitian}, i.e., for all $A\in\W,\omega(A^{*})=\overline{\omega(A)}$. $\omega\in\states_{\W}$ is said to be ``faithful'' if $A=0$ whenever $A\geq 0$ and $\omega(A)=0$. The set of all elements $A\in\W$ such that $\spec(A)>0$ will we named the positive elements of $\W$ and it will denoted by $\W^{+}$. Note that by the \emph{Banach--Alaoglu Theorem}, $\states_{\W}$ is a compact set in the weak$^{*}$--topology $\sigma(\W^{*},\W)$. Moreover, since $\W$ is unital, under the topology $\sigma(\W^{*},\W)$, the set $\states_{\W}$ is a convex set, and its \emph{extremal} points coincide with the \emph{pure} states \cite[Theorem 2.3.15]{BratteliRobinsonI}. The latter, combining with the fact that $\W$ is separable allows to claim that the set of states $\states_{\W}$ is metrizable in $\sigma(\W^{*},\W)$ \cite[Theorem 3.16]{rudin}. Note that the existence of extremal points is a consequence of the \emph{Krein--Milman Theorem}. More specifically, if $\mathtt{E}(\states_{\W})$ denotes the set of extremal points of $\states_{\W}$,
$$
\states_{\W}=\mathrm{cch}\left(\mathtt{E}\left(\states_{\W}\right)\right),
$$
where, for $\X$ a Topological Vector Space and $A\subset\X$, $\mathrm{cch}(A)$ refers to the \emph{closed convex hull} of $A$. Such extremal points $\mathtt{E}(\states_{\W})$ or pure states can not be written as a linear combination of any states. As an application of the extremal states is that these are used to write any ``mixed state'' $\omega\in\states_{\W}$. By a mixed state $\omega\in\states_{\W}$, we mean that, there are  states $\{\omega_{j}\}_{j=1}^{m}\in\mathtt{E}(\states_{\W})$, $m\in\N$, and positive real numbers, $0\leq\lambda_{j}\leq1$ for $j\in\{1,\ldots,m\}$, with $\sum\limits_{j=1}^{m}\lambda_{j}=1$ satisfying
\begin{equation}\label{eq:convex_state}
\omega=\sum_{j=1}^{m}\lambda_{j}\omega_{j}.
\end{equation}
In particular, if the state $\omega\in\states_{\W}$ is pure, $\omega=\sum\limits_{j=1}^{m}\lambda_{j}\omega_{j}$ implies that $\omega=\omega_{1}=\cdots=\omega_{m}$, and $\lambda_{1}=\cdots=\lambda_{j}=\frac{1}{m}$. More generally, by the \emph{Choquet's Theorem}, the extremal points of $\states_{\W}$ form a Baire set $G_{\delta}$, and for any $\omega\in\states_{\W}$ there is a probability measure $\mu\in M_{+,1}(\states_{\W})$ supported by $\mathtt{E}(\states_{\W})$ with barycenter $\omega$ (i.e., for all $f\in\cA(\states_{\W})$, $\omega(f)=f(\omega)$)\footnote{If $C_{\R}(\states_{\W})$ is the set of all real continuous functions on $\states_{\W}$, then
$\cA(\states_{\W})\doteq\{f\in C_{\R}(\states_{\W});\, f(\lambda\omega_{1}+((1-\lambda)\omega_{2}))=\lambda f(\omega_{1})+(1-\lambda)f(\omega_{2}),\text{ with }\lambda\in[0,1]\}$ denotes of all the real continuous affine functions on $\states_{\W}$.} and satisfying $\mu(\mathtt{E}(\states_{\W}))=1$, \cite{phelpslectures,israel1979convexity}. Concretely, for all $f\in\cA(\states_{\W})$ and any $\omega\in\states_{\W}$, there is $\mu\in M_{+,1}(\states_{\W})$ such that
$$
f(\omega)=\int_{\mathtt{E}(\states_{\W})}f(\omega')\d\mu_{\omega}(\omega').
$$
In this paper we will deal with faithful--normal states. By normal we mean that these are determined by density matrices, i.e., operators $\rho\in\W^{+}\cap\W$ that satisfy $\tr_{\W}(\rho)=1$, where $\tr_{\W}\in\states_{\W}$ is the so--called ``tracial state'', which is the ``normalized trace'' on $\W$, i.e., $\tr_{\W}(A)\doteq\Tr_{\W}(A)/\tr_{\W}(\mathfrak{1})$, so that $\Tr_{\W}\in\W^{*}$ is the ``trace'' on $\W$, satisfying $\Tr_{\W}(AA^{*})=\Tr_{\W}(A^{*}A)$, for all $A\in\W$. More precisely, for any normal state $\omega\in\states_{\W}$, there exists a unique positive operator $\rho_{\omega}\in\W^{+}\cap\W$, with $\tr_{\W}(\rho_{\omega})=1$, such that the expectation value of $A\in\W$ w.r.t. $\rho_{\omega}$ is 
\begin{equation}\label{eq:tra_state}
\inner{A}_{\rho}\doteq\omega(A)=\tr_{\W}(\rho_{\omega}A). 
\end{equation}
As a trivial case, note that tracial state $\tr_{\W}\in\states_{\W}$ is faithful and normal. Note that for any density matrices $\rho,\rho_{1},\rho_{2}\in\W^{+}\cap\W$ we have the following two \emph{functional calculus} representation identities
\begin{align}
\ln\left(\rho_{2}\right)-\ln\left(\rho_{1}\right)&=\int_{0}^{\infty}\left(\left(x\1+\rho_{1}\right)^{-1}-\left(x\1+\rho_{2}\right)^{-1}\right)\d x\nonumber\\
&=
\int_{0}^{\infty}\left(x\1+\rho_{1}\right)^{-1}\left(\rho_{2}-\rho_{1}\right)\left(x\1+\rho_{2}\right)^{-1}\d x\label{eq:func_ident1}
\end{align}
and
\begin{equation}
\ln\left(\rho\right)=\int_{0}^{\infty}\left(\left(x\1+\1\right)^{-1}-\left(x\1+\rho\right)^{-1}\right)\d x.\label{eq:func_ident2}
\end{equation}
Let now $\{w_{i}\}_{i\in I}$ and $\1$ be the generators of the $\C$--algebra. That is, any element $A\in\W$ can be written by
\begin{equation}\label{eq:exp_genC}
A=\sum_{n\in\N}\sum_{j_{1},\ldots,j_{n}\in\{+,-\}}\sum_{i_{1},\ldots,i_{n}\in I}v_{A}(i_{1},\ldots,i_{n})w_{i_{1}}^{j_{1}}\cdots w_{i_{n}}^{j_{n}}, 
\end{equation}
where $w_{i}^{-}\doteq w_{i}$ and $w_{i}^{+}\doteq w_{i}^{*}$, for $i\in I$, and $v_{A}(i_{1},\ldots,i_{n})\colon I^{n}\to\CP$, is an $I^{n}$ bounded complex function depending on $A$. Note that $A$ has not necessarily a unique form to be written. 
\begin{definition}[Gaussian states on $\C$--algebras]\label{def:gauss_states_Calg}
In \eqref{eq:exp_genC} take $n=2$ and $A\in\W$ invertible, i.e., there is a unique element $A^{-1}\in\W$ such that $AA^{-1}=A^{-1}A=\1$. We say that the state $\omega_{A}\in\states_{\W}$ is a Gaussian state associated to $A$ if and only if its density matrix $\rho_{A}\in\W^{+}\cap\W$ can be uniquely written as $\rho_{A}\doteq\frac{\e^{\alpha A}}{\tr_{\W}\left(\e^{\alpha A}\right)},\alpha\in\CP. $ For $M\in\R^{+}$, the operator $g_{A}\doteq M\e^{\alpha A}\in\W^{+}\cap\W$ is called a ``Gaussian operator'' associated to $A$. The set of all Gaussian states associated to $A$ will be denoted by $\states_{\W,A}$, whereas $\W_{A}$ will denote the set of all Gaussians operators.
\end{definition}
\subsection*{Channels: Completely positive and trace preserving maps}
If a physical system $\mathbf{A}$ is interacting (or coupled) with another one $\mathbf{B}$, we usually assume that these are described by $\C$--algebras, namely, $\W_{\mathbf{A}}$ and $\W_{\mathbf{B}}$, respectively. Therefore, the interacting system $\mathbf{I}\equiv\mathbf{A}\cup\mathbf{B}$ is described by the product $\C$--algebra $\W_{\mathbf{I}}\equiv\W_{\mathbf{A}}\otimes\W_{\mathbf{A}}$. Then, for any states $\omega_{\textbf{A}}\in\states_{\W_{\mathbf{A}}}$ and $\omega_{\textbf{B}}\in\states_{\W_{\mathbf{B}}}$ the interacting state $\omega_{\textbf{I}}\in\states_{\W_{\mathbf{I}}}$ is explicitly given by $\omega_{\textbf{I}}=\omega_{\textbf{A}}\otimes\omega_{\textbf{B}}$. If the states are normal states, it follows that their associated density matrices are given by $\rho_{\textbf{I}}=\rho_{\textbf{A}}\otimes\rho_{\textbf{B}}$, so that 
$$
\omega_{\textbf{I}}(A)=\tr_{\W_{\textbf{I}}}\left(\left(\rho_{\textbf{A}}\otimes\rho_{\textbf{B}}\right)A\right),\qquad A\in\W_{\textbf{I}}.
$$
Let $\mathcal{F}_{\mathbb{U}}\colon\W\to\W$ be the $^{*}$--automorphism on $\W$ defined uniquely by
$$
\mathcal{F}_{\mathbb{U}}(A)\doteq\mathbb{U}^{*}A\mathbb{U},\quad A\in\W.
$$
Here, $\mathbb{U}\in\BL(\W)$ is a unitary bounded operator on $\W$, which implemented the $^{*}$--automorphism $\mathcal{F}_{\mathbf{U}}$. Thus if $\mathbf{I}\equiv\mathbf{A}\cup\mathbf{B}$ is the interacting system mentioned above, then the operation $\E_{\mathbb{U}}\colon\W_{\mathbf{A}}\otimes\W_{\mathbf{B}}\to\W_{\mathbf{A}}$ defined by 
\begin{equation}
\E_{\mathbb{U}}(\omega_{\mathbf{I}})\doteq\tr_{\mathbf{B}}(\mathcal{F}_{\mathbb{U}}(\rho_{\mathbf{I}}))\equiv\tr_{\mathbf{B}}\left(\mathbb{U}^{*}\left(\rho_{\mathbf{A}}\otimes\rho_{\mathbf{B}}\right)\mathbb{U}\right)\label{eq:map_red_sys}
\end{equation}
describes the \emph{reduced system} $\mathbf{A}$ \cite{niel10}, where $\tr_{\mathbf{B}}$ is the normalized partial trace over $\W_{\mathbf{B}}$. In this paper we will say that the $^{*}$--automorphism $\mathcal{F}_{\mathbb{U}}$ is ``trace preserving'' if it is valid the condition
\begin{equation}
\mathbb{U}\mathbb{U}^{*}=\mathfrak{1},\label{eq:trace_preser}
\end{equation}
In particular, we say that the map $\E_{\mathbb{U}}$ is a ``quantum channel'' with inputs $\mathbf{A},\mathbf{B}$ and output $\mathbf{C}$ if it is completely positive and is trace preserving (CPTP). Recall that the partial trace is a completely positive map, and hence $ \E_{\mathbb{U}}$ is a well--defined quantum channel.\\

\noindent \textit{Acknowledgments:} This work is supported by \emph{Departamento de Física} of \emph{Universidad de Los Andes}. We are very grateful to A. Vershynina and N. Datta for hints and discussions. 

\bibliography{books}

\newcommand{\etalchar}[1]{$^{#1}$}
\providecommand{\bysame}{\leavevmode\hbox to3em{\hrulefill}\thinspace}
\providecommand{\MR}{\relax\ifhmode\unskip\space\fi MR }
\providecommand{\MRhref}[2]{%
  \href{http://www.ams.org/mathscinet-getitem?mr=#1}{#2}
}
\providecommand{\href}[2]{#2}
\begin{thebibliography}{ABPM20}

\bibitem[ABPM20]{LD1}
N.~J.~B. {Aza}, {J–B} {Bru}, de~Siqueira~W. {Pedra}, and L.~C. P. A.~M.
  {Müssnich}, \emph{{Large Deviations in Weakly Interacting Fermions I –
  Generating Functions as Gaussian Berezin Integrals and Bounds on Large
  Pfaffians.}}, 2020.

\bibitem[AL07]{alickiLendi}
R.~Alicki and K.~Lendi, \emph{{Quantum Dynamical Semigroups and Applications}},
  vol. 717, Springer, 2007.

\bibitem[And58]{Anderson58}
P.~W. Anderson, \emph{{Absence of Diffusion in Certain Random Lattices}},
  Phisical Review \textbf{109} (1958), no.~1492.

\bibitem[Ara68]{A68}
H.~Araki, \emph{{On the diagonalization of a bilinear Hamiltonian by a
  Bogoliubov transformation}}, Publications of the Research Institute for
  Mathematical Sciences, Kyoto University. Ser. A \textbf{4} (1968), no.~2,
  387–412.

\bibitem[Ara71]{A70}
\bysame, \emph{{On quasifree states of CAR and Bogoliubov automorphisms}},
  Publications of the Research Institute for Mathematical Sciences \textbf{6}
  (1971), no.~3, 385–442.

\bibitem[Aza17]{NJBAza}
N.~J.~B. Aza, \emph{{Large Deviation Principles for the Microscopic
  Conductivity of Fermions in Crystals}}, Ph.D. thesis, Universidade de São
  Paulo, November 2017, p.~181.

\bibitem[BK12]{bravyi12clas}
S.~Bravyi and R.~König, \emph{{Classical simulation of dissipative fermionic
  linear optics}}, Quantum Information \& Computation \textbf{12} (2012),
  no.~11-12, 925–943.

\bibitem[Bla65]{blachman}
N.~Blachman, \emph{{The Convolution Inequality for Entropy Powers}}, IEEE
  Transactions on Information Theory \textbf{11} (1965), no.~2, 267–271.

\bibitem[BPH14]{bru2014heat}
{J–B} {Bru}, W~de~Siqueira {Pedra}, and C.~{Hertling}, \emph{{Heat production
  of Noninteracting fermions subjected to electric fields}}, Communications on
  Pure and Applied Mathematics (2014).

\bibitem[BR03a]{BratteliRobinsonI}
O.~Bratteli and D.W. Robinson, \emph{{Operator Algebras and Quantum Statistical
  Mechanics 1: C*– and W*–Algebras. Symmetry Groups. Decomposition of
  States}}, 2 ed., {Operator Algebras and Quantum Statistical Mechanics},
  Springer, 2003.

\bibitem[BR03b]{BratteliRobinson}
\bysame, \emph{{Operator Algebras and Quantum Statistical Mechanics:
  Equilibrium States. Models in Quantum Statistical Mechanics}}, 2 ed.,
  {Operator Algebras and Quantum Statistical Mechanics 2: Equilibrium States.
  Models in Quantum Statistical Mechanics}, Springer, 2003.

\bibitem[Bra05]{bravyi05}
S.~Bravyi, \emph{{Lagrangian representation for fermionic linear optics}},
  Quantum Information \& Computation \textbf{5} (2005), no.~3, 216–238.

\bibitem[Bre10]{brezis2010functional}
H.~Brezis, \emph{{Functional Analysis, Sobolev Spaces and Partial Differential
  Equations}}, {Universitext}, Springer New York, 2010.

\bibitem[CG99]{cahill99}
K.~E. Cahill and R.~J. Glauber, \emph{{Density operators for fermions}},
  Physical Review A \textbf{59} (1999), no.~2, 1538.

\bibitem[CL93]{carlieb93}
E.~Carlen and E.~H. Lieb, \emph{{Optimal hypercontractivity for fermi fields
  and related non-commutative integration inequalities}}, Communications In
  Mathematical Physics \textbf{155} (1993), no.~1, 27–46.

\bibitem[CM14]{carlen14}
E.~Carlen and J.~Maas, \emph{{An Analog of the 2-Wasserstein Metric in
  Non-Commutative Probability Under Which the Fermionic Fokker–Planck
  Equation is Gradient Flow for the entropy}}, Communications in Mathematical
  Physics \textbf{331} (2014), no.~3, 887–926.

\bibitem[CM20]{carlen2020non}
\bysame, \emph{{Non-commutative calculus, optimal transport and functional
  inequalities in dissipative quantum systems}}, Journal of Statistical Physics
  \textbf{178} (2020), no.~2, 319–378.

\bibitem[CR12]{com12coher}
M.~Combescure and D.~Robert, \emph{{Coherent States and Applications in
  Mathematical Physics}}, Springer Science \& Business Media, 2012.

\bibitem[Dix77]{dixmier77}
J.~Dixmier, \emph{{$C^{*}$–algebras}}, {North-Holland mathematical library},
  1977.

\bibitem[DMG14]{palmargio14}
G.~{De Palma}, A.~Mari, and V.~Giovannetti, \emph{{A generalization of the
  entropy power inequality to bosonic quantum systems}}, Nature Photonics
  \textbf{8} (2014), no.~12, 958.

\bibitem[DNP13]{DeNaSol}
J.~Dereziński, M.~Napiórkowski, and J.~{Philip Solovej}, \emph{{On the
  minimization of Hamiltonians over pure Gaussian states}}, {Complex Quantum
  Systems: Analysis of Large Coulomb Systems}, World Scientific, 2013,
  p.~151–161.

\bibitem[DPR17]{datta17}
N.~Datta, Y.~Pautrat, and C.~Rouzé, \emph{{Contractivity properties of a
  quantum diffusion semigroup}}, Journal of Mathematical Physics \textbf{58}
  (2017), no.~1, 012205.

\bibitem[EBN{\etalchar{+}}06]{EngelNagel}
K.J. Engel, S.~Brendle, R.~Nagel, T.~Hahn, G.~Metafune, G.~Nickel, D.~Pallara,
  C.~Perazzoli, A.~Rhandi, et~al., \emph{{One–Parameter Semigroups for Linear
  Evolution Equations}}, {Graduate Texts in Mathematics}, Springer New York,
  2006.

\bibitem[Gar02]{gardner02}
R.~Gardner, \emph{{The Brunn–Minkowski Inequality}}, Bulletin of the American
  Mathematical Society \textbf{39} (2002), no.~3, 355–405.

\bibitem[GKS76]{gorini76}
V.~Gorini, A.~Kossakowski, and E.~Sudarshan, \emph{{Completely positive
  dynamical semigroups of N-level systems}}, Journal of Mathematical Physics
  \textbf{17} (1976), no.~5, 821–825.

\bibitem[Gub06]{gubner06}
J.A. Gubner, \emph{{Probability and Random Processes for Electrical and
  Computer Engineers}}, Cambridge University Press, 2006.

\bibitem[HKV17]{hubKonVer}
S.~Huber, R.~König, and A.~Vershynina, \emph{{Geometric inequalities from
  phase space translations}}, Journal of Mathematical Physics \textbf{58}
  (2017), no.~1, 012206.

\bibitem[Isr79]{israel1979convexity}
R.B. Israel, \emph{{Convexity in the Theory of Lattice Gases}}, {Princeton
  Series in Physics}, Princeton University Press, 1979.

\bibitem[KLM07]{klein2007mott}
A.~Klein, O.~Lenoble, and P.~Müller, \emph{{On Mott's formula for the
  AC–conductivity in the Anderson model}}, Annals of Mathematics (2007),
  549–577.

\bibitem[KM08]{klein2007cond}
A.~Klein and P.~Müller, \emph{{The conductivity measure for the Anderson
  model}}, Journal of Mathematical Physics, Analysis, Geometry \textbf{4}
  (2008), no.~1, 128–150.

\bibitem[KM14]{JMP-autre2}
\bysame, \emph{{AC-conductivity and electromagnetic energy absorption for the
  Anderson model in linear response theory}}, arXiv preprint arXiv:1403.0286
  (2014).

\bibitem[KS14]{konSmi14}
R.~Kónig and G.~Smith, \emph{{The entropy power inequality for quantum
  systems}}, IEEE Transactions on Information Theory \textbf{60} (2014), no.~3,
  1536–1548.

\bibitem[Lin75]{lindblad}
G.~Lindblad, \emph{{Completely positive maps and entropy inequalities}},
  Communications in Mathematical Physics \textbf{40} (1975), no.~2, 147–151.

\bibitem[NC10]{niel10}
M.~A. Nielsen and I.~L. Chuang, \emph{{Quantum Computation and Quantum
  Information: 10th Anniversary}}, Cambridge University Press, 2010.

\bibitem[OK78]{Ohnuki_Kashiwa}
Y.~Ohnuki and T.~Kashiwa, \emph{{Coherent States of Fermi Operators and the
  Path Integral}}, Progress of Theoretical Physics \textbf{60} (1978), no.~2,
  548–564.

\bibitem[Phe01]{phelpslectures}
R.~R. Phelps, \emph{{Lectures on Choquet's theorem}}, Springer Science \&
  Business Media, 2001.

\bibitem[Rud91]{rudin}
W.~Rudin, \emph{{Functional Analysis}}, {International series in pure and
  applied mathematics}, McGraw–Hill, 1991.

\bibitem[Sha01]{shannon01}
C.~E. Shannon, \emph{{”A Mathematical Theory of Communication”}}, ACM
  SIGMOBILE mobile computing and communications review \textbf{5} (2001),
  no.~1, 3–55.

\bibitem[Spo78]{spohn78}
H.~Spohn, \emph{{Entropy production for quantum dynamical semigroups}}, Journal
  of Mathematical Physics \textbf{19} (1978), no.~5, 1227–1230.

\bibitem[Weh78]{Wehrl}
A.~Wehrl, \emph{{General properties of entropy}}, Rev. Mod. Phys. \textbf{50}
  (1978), 221–260.

\end{thebibliography}
\bibliographystyle{amsalpha}
\end{document}